\documentclass[11pt,reqno]{amsart}
\usepackage[top=2cm, bottom=2cm, left=2.5cm, right=2.5cm]{geometry}
\usepackage[pagebackref]{hyperref}

\hypersetup{
  colorlinks=true,
  linkcolor=blue!60!black,
  citecolor=blue!60!black,
  urlcolor=blue!60!black
}

\usepackage{booktabs}
\usepackage{siunitx}  

\usepackage{graphicx}
\usepackage{microtype}
\usepackage{amsmath,amssymb,amsthm,bm,amsfonts,mathrsfs,bbm}
\usepackage{aliascnt}
\usepackage{xspace}
\usepackage{pgf,tikz}
\usepackage{comment}
\usepackage{xcolor}
\usepackage{multirow}
\usepackage{array}
\usepackage{makecell}      
\usepackage{amssymb}       
\usepackage{ragged2e}      
\usepackage{array}
\usepackage{tikz-cd}
\usepackage{makecell}

\newcommand{\yes}{\textcolor{green!55!black}{\checkmark}}
\newcommand{\no}{\textcolor{red!70!black}{\ensuremath{\times}}}
\newcommand{\nodet}{\textcolor{black!45}{---}}

\usepackage{bigstrut}
\usepackage{braket}
\usepackage{color}
\usepackage[numbers]{natbib}
\usepackage{tabularx}
\usepackage[table]{xcolor}
\usepackage{multirow}
\usepackage{mathtools}
\usepackage{float}
\usepackage{xcolor,colortbl}
\usepackage{physics}
\usepackage{amsmath}
\usepackage{color}
\usepackage[capitalize]{cleveref}
\usepackage[justification=justified, format=plain]{subcaption}
\usepackage[justification=raggedright]{caption}

\newcommand{\be}{\begin{equation}}
\newcommand{\ee}{\end{equation}}
\newcommand{\ba}{\begin{eqnarray}}
\newcommand{\ea}{\end{eqnarray}}
\newcommand{\proj}[1]{\ket{#1}\bra{#1}}

\newcommand{\etal}{{\it{et al.}}\xspace}

\newcommand{\N}{\mathbb{N}}

\newcommand{\R}{\mathbb{R}}
\newcommand{\C}{\mathbb{C}}

\def\d{{\rm d}}

\def\>{\rangle}
\def\<{\langle}

\newcommand{\Rn}{\operatorname{\mathsf{R}}}

\newtheorem{theorem}{Theorem}[section]
\newaliascnt{proposition}{theorem}
\newtheorem{proposition}[proposition]{Proposition}
\aliascntresetthe{proposition}
\newaliascnt{lemma}{theorem}
\newtheorem{lemma}[lemma]{Lemma}
\aliascntresetthe{lemma}
\newaliascnt{corollary}{theorem}
\newtheorem{corollary}[corollary]{Corollary}
\aliascntresetthe{corollary}
\newaliascnt{definition}{theorem}
\newtheorem{definition}[definition]{Definition}
\aliascntresetthe{definition}
\newaliascnt{remark}{theorem}
\newtheorem{remark}[remark]{Remark}
\aliascntresetthe{remark}
\newaliascnt{example}{theorem}
\newtheorem{example}[example]{Example}
\aliascntresetthe{example}
\newaliascnt{conjecture}{theorem}

\aliascntresetthe{conjecture}
\newaliascnt{question}{theorem}

\aliascntresetthe{question}

\DeclareMathOperator{\Sep}{Sep}
\DeclareMathOperator{\PPT}{PPT}
\DeclareMathOperator{\cone}{cone}
\DeclareMathOperator{\conv}{conv}
\DeclareMathOperator{\spanop}{span}
\DeclareMathOperator{\diag}{diag}

\newcommand\ineqall{\ensuremath{\mathsf{Ineq}^{\rm all}}}
\newcommand\momall{\ensuremath{\mathsf{Mom}^{\rm all}}}
\newcommand\ineqpsd{\ensuremath{\mathsf{Ineq}^{\rm psd}}}
\newcommand\mompsd{\ensuremath{\mathsf{Mom}^{\rm psd}}}
\newcommand\ineqsep{\ensuremath{\mathsf{Ineq}^{\rm sep}}}
\newcommand\momsep{\ensuremath{\mathsf{Mom}^{\rm sep}}}

\newcommand{\M}{\mathcal M}
\newcommand{\Msa}{\mathcal M^{\mathrm{sa}}}
\newcommand{\Mpsd}{\mathcal M^+}

\newsavebox{\myboxbuffer}
\newenvironment{mybox}
  {\begin{lrbox}{\myboxbuffer}\begin{minipage}{0.88\linewidth}\vspace{2pt}}
  {\vspace{2pt}\end{minipage}\end{lrbox}\begin{center}\begin{tikzpicture}\node[fill=gray!15,draw=none,rounded corners=3pt,inner sep=6pt]{\usebox{\myboxbuffer}};\end{tikzpicture}\end{center}}

\begin{document}

\title{Optimal entanglement criteria from trace invariants}

\author{Bivas Mallick}
\email{bivasqic@gmail.com}
\address{S. N. Bose National Centre for Basic Sciences, Block JD, Sector III, Salt Lake, Kolkata 700 106, India}
\address{Statistics and Mathematics Unit, Indian Statistical Institute, 560059 Bangalore,
India}

\author{Aabhas Gulati}
\email{aabhasgulati7@gmail.com}
\address{Institut de Mathématiques, Université de Toulouse, UPS, France}

\author{Ion Nechita}
\email{nechita@irsamc.ups-tlse.fr}
\address{Laboratoire de Physique Th\'eorique, Universit\'e de Toulouse, CNRS, UPS, France}

\maketitle

\begin{abstract}
Randomized measurement protocols give experimental access to low-degree polynomial invariants of a quantum state rather than to the state itself, which raises the question of what the best entanglement criterion is that can be built from finitely many such invariants. In this work we answer this question through convex duality. 

For each degree we introduce the cone of local unitary trace inequalities that are valid for separable bipartite states in all local dimensions, together with its one-matrix counterparts for Hermitian and positive semidefinite matrices. Each cone is the polar dual of a moment cone; we also determine the minimal combinatorial parametrization of the separable cone. The framework organizes the known moment criteria: the PPT criterion using the first three moments of the partial transposition is the image under an embedding of a Hankel determinant from the positive semidefinite cone to the bipartite cone, which we show to be an extremal ray, and we determine the first nontrivial separable cone completely, finding exactly four extremal rays. 

The framework allows us to analyze the realignment criterion and its centered (or enhanced) variant. We prove that the even singular value moments of the (centered) realignment matrix are trace invariants, associated with explicit fixed-point-free involutions. Extracting the optimal trace-norm bound from $m$ such moments thus becomes a dimension-free truncated moment problem, which we solve in closed form for $m = 2$ and relax to a semidefinite program of size $O(m)$ for arbitrary $m$. The dual solutions are polynomial minorants of $\sqrt{x}$ on $[0,1]$, and two explicit families, binomial and $L^1$-weighted, yield hierarchies of separability inequalities. Finally, we compare the criteria against one another and on PPT entangled states.
\end{abstract}

\tableofcontents

\section{Introduction}

Entanglement is the feature that most sharply separates quantum from
classical correlations, and it is the resource underlying quantum
teleportation \cite{bennett1993teleporting}, quantum cryptography
\cite{Ekert1991crypto}, and measurement-based quantum computation
\cite{Raussendorf2001computer,Raussendorf2003computer}.  Deciding whether a
bipartite density matrix $\rho$ on $\mathbb C^{d_A}\otimes\mathbb C^{d_B}$ is
\emph{separable}, that is, whether it admits a decomposition
$$
\rho=\sum_i p_i\,\sigma_i^A\otimes\sigma_i^B,
\qquad p_i\geq0,\quad \sigma_i^A,\sigma_i^B\succeq0,
$$
is the central computational problem of entanglement theory.  The problem is
NP-hard in the local dimension \cite{gurvits2003classical,gharibian2010strong}, and the practice
of the field is therefore built on \emph{separability criteria}: efficiently
checkable necessary conditions for separability, whose violation certifies
entanglement.  We refer the reader to the surveys
\cite{horodecki2009quantum,guhne2009entanglement} for a panorama.

Two criteria have been central to the subject.  The \emph{positive partial transpose}
(PPT) criterion of Peres and the Horodeckis states that a separable state
$\rho$ stays positive semidefinite after transposing one of the two tensor
factors, $\rho^{\Gamma_B}\succeq 0$
\cite{peres1996separability,horodecki1996separability}.  The
\emph{realignment}, or \emph{computable cross-norm} (CCNR), criterion of
Rudolph and of Chen and Wu states that a separable $\rho$ satisfies
$\|\Rn(\rho)\|_1\leq1$, where $\Rn$ reshuffles the matrix entries of $\rho$
according to $\Rn(X)_{ik,jl}=X_{ij,kl}$ and $\|\cdot\|_1$ is the trace norm
\cite{rudolph2000separability,chen2003matrix}.  The two criteria are
independent: neither implies the other, and, crucially for what follows, the realignment criterion detects states that are \emph{PPT
entangled}, the so-called bound entangled states. Zhang, Zhang, Zhang and Guo later showed
that realignment becomes strictly stronger after \emph{centering} the state,
that is, after replacing $\rho$ by $\rho-\rho_A\otimes\rho_B$; this is the
\emph{enhanced realignment} criterion \cite{zhang2008entanglement}.

At the other end of the spectrum, the Doherty--Parrilo--Spedalieri hierarchy of
semidefinite programs is \emph{complete}, in that it detects every entangled
state at some finite level \cite{doherty2004complete}.  At level $k$, one
asks whether $\rho$ admits a positive semidefinite extension
$\rho_{A B_1\cdots B_k}$ whose marginal on each pair $A B_i$ is $\rho$ and
which is invariant under permutations of the systems $B_1,\ldots,B_k$
(with additional PPT constraints in the PPT-enhanced version).  Every
separable state has such symmetric extensions for all $k$, whereas an
entangled state fails to have one at some finite level; thus the hierarchy
becomes stronger as it manipulates extensions of $\rho$ to an increasing
number of copies.

All of these criteria are computationally tractable, in the sense that one can check them efficiently on a (classical) computer given the complete description of the state $\rho$. In practice, these criteria presuppose \emph{full state tomography}, whose sample
complexity is prohibitive already for moderate system sizes; moreover, what present-day experiments can measure comfortably is very different.  Protocols
based on locally randomized measurements and on classical shadows
\cite{brydges2019probing,huang2020predicting,elben2023randomized} give access
to \emph{low-degree polynomial invariants} of the state: quantities of the
form
$$
\Tr_{\sigma,\tau}(\rho)
=
\Tr\left[(P_\sigma\otimes P_\tau)\,\rho^{\otimes n}\right],
\qquad \sigma,\tau\in S_n,
$$
where $P_\sigma$ permutes the $n$ copies of the $A$ system and $P_\tau$ those
of the $B$ system of the bipartite state $\rho$.  These are precisely the polynomials in the entries of
$\rho$ that are invariant under local unitary conjugation
$\rho\mapsto (U\otimes V)\rho(U\otimes V)^*$, and the number $n$ of copies
they involve controls the experimental effort.  This shift of perspective
turns the separability problem into a question of \emph{truncated moment}
type, which is the one we address in this paper:

\begin{mybox}
Given only finitely many local unitary trace invariants of a bipartite state,
what is the \emph{best} entanglement criterion one can write down, uniformly
in the local dimensions?
\end{mybox}

\medskip

\noindent\textbf{Entanglement criteria from finitely many moments.}
The prototype of such a criterion is the 3 moments PPT criterion of Elben \etal
\cite{elben2020mixed}.  Writing $q_k:=\Tr[(\rho^{\Gamma_B})^k]$ for the
\emph{partial transpose moments}, and noting that $q_1=1$ and $q_2=\Tr(\rho^2)$
are the trace and the purity, they observed that a PPT state must satisfy
$q_3\geq q_2^2$, since the $q_k$ are then the moments of a genuine positive
spectrum; a violation certifies that $\rho^{\Gamma_B}\not\succeq 0$, hence
entanglement.  This criterion only requires three moments instead of the whole matrix, rendering it experimentally accessible \cite{elben2020mixed}.

This observation opened a rapidly growing line of work. Yu, Imai and Gühne \cite{yu2021optimal} gave two systematic strengthenings: a
family indexed by the \emph{Hankel matrices} built from $q_1,\dots,q_m$, whose
lowest level is exactly the 3 moments PPT criterion, and a second, \emph{optimal}
criterion which decides whether the observed moments are compatible with any
nonnegative spectrum at all.  Neven \etal \cite{neven2021symmetry}
resolved the construction by the symmetry sectors of the state, while
Carrasco \etal \cite{carrasco2024entanglement} mapped out the resulting
entanglement phase diagrams; Imai \etal \cite{imai2021bound} showed that
higher moments of randomized measurements already reveal bound entanglement.
Very recently, Miller and Eisert \cite{miller2026detecting} observed that one
need not know a whole initial segment $q_1,\dots,q_m$: any three moments
$q_k,q_l,q_m$ with $k<l<m$ suffice, through the log-convexity inequality
$q_l\leq q_k^{x}q_m^{1-x}$ with $x=(m-l)/(m-k)$, and they identified regimes in
which finitely many moments reproduce the full PPT criterion.  Moments of more general positive maps have been considered as well \cite{nzrc-8yrt,ali2025detection}, including extensions of this moment-based framework to multipartite systems for the detection of genuine multipartite entanglement \cite{ffch-xyv2}.

A parallel and more recent strand replaces the partial transpose by the
realignment map, precisely because the latter can see PPT entanglement.
Zhang, Jing and Fei \cite{zhang2022quantum} derived separability criteria from
low-order realignment moments and from the associated Hankel matrices;
Aggarwal, Adhikari and Majumdar \cite{aggarwal2024entanglement,
aggarwal2024theoretical} developed \emph{partial} realigned moment criteria
for arbitrary local dimensions together with experimental implementations;
Zhao \etal \cite{zhao2025multipartite} and Huang \etal \cite{huang2026note}
obtained multipartite and generalized-moment extensions, and Wang \etal
\cite{wang2024moments} used moments to build entanglement \emph{measures} in
addition to criteria.  Closest to the present work, Tarabunga and Haug
\cite{tarabunga2026quantifying} combined partial transpose and realignment
moments into quantitative witnesses for mixed states, and pointed out that
the even realignment moments are efficiently computable for matrix product
operators.  What all of these constructions have in common is that a
particular algebraic inequality is proposed, and its detection power is then
explored; what is missing is an answer to the \emph{optimality} question.
Given the first $m$ even realignment moments and nothing else, the convex hull of the set of
moment vectors compatible with the realignment criterion is a well-defined
convex body, and the strongest criterion available is the description of that
body.  Determining it is one of the main contributions of this paper.

\medskip

\noindent\textbf{Trace invariants and entanglement detection.}
In our work, we consider an invariant-theoretic approach to moment-based entanglement criteria.
The polynomial local unitary invariants of a bipartite mixed state are generated
by the permutation contractions, up to
dimension-dependent relations; this point of view goes back to Grassl,
Rötteler and Beth \cite{GrasslRoettelerBeth1998} and Makhlin
\cite{Makhlin2002} for qubits, was developed by Sudbery
\cite{sudbery2001local} and Szalay \cite{Szalay2011} in general, and has been
revisited recently for multipartite entanglement classification by Carrozza,
Chevrier and Lionni \cite{carrozza2026tensor}.

Huber's trace-polynomial formalism \cite{Huber2021PositiveMapsTracePolynomials} provides the
dictionary that makes such invariants computationally usable: it translates
the action of $S_n$ on $(\mathbb C^d)^{\otimes n}$ into matrix
multiplication, and thereby converts identities in the group algebra
$\mathbb R[S_n]$ into operator inequalities and \emph{trace polynomials},
that is, polynomials in matrix monomials and their traces.  On this
foundation, Klep, Magron and Volčič
\cite{klep2025sums} built a systematic optimization theory for trace and
moment polynomials, with sums-of-squares certificates; Huber, Klep, Magron
and Volčič \cite{huber2022dimension} used it to characterize entanglement in
multipartite Werner states \emph{dimension-freely}, exhibiting for every
entangled Werner state a witness that is valid in all local dimensions
simultaneously.  Rico and Huber \cite{rico2024entanglement} turned trace
polynomial inequalities (in particular immanant inequalities) into nonlinear
entanglement witnesses implementable by randomized measurements, and showed
that nonlinear detection succeeds on pairs of states and witnesses for which
linear detection fails.  Most recently, Fraser \etal
\cite{fraser2026complete} derived separability criteria as universal bounds
on tensor powers of separable states, obtaining a \emph{complete} (as the number of copies $n$ increases), local-unitary-invariant family of
nonlinear witnesses.

In a different but closely related direction, dimension-free inequalities between the trace
invariants of a single matrix are exactly the subject of the theory of
symmetric function inequalities in the limit of infinitely many variables,
studied by Blekherman and Riener \cite{blekherman2021symmetric}, by Acevedo,
Blekherman, Debus and Riener \cite{acevedo2025symmetric} through
tropicalization of the dual cone, and by Debus and Schabert
\cite{debus2026any} through any-dimensional Positivstellensätze; sharp
dimension-dependent counterparts for the linear entropy were obtained by
Morelli \etal \cite{morelli2020dimensionally}.

\medskip

Our starting point is the observation that these two literatures are the two
halves of the same object.  A criterion built from finitely many moments
is an element of a convex cone of trace-invariant inequalities, and
asking for the optimal criterion at a fixed number of moments is
asking for the facial structure of that cone.  We therefore study, for each
degree $n$, the cone
$$
\ineqsep_n
=
\left\{
w\in\mathbb R[S_n\times S_n]:
\begin{array}{l}
\Tr_w(\rho)\geq0\ \text{for every }d_A,d_B\geq1\\
\text{and every separable }\rho\in\Sep(\mathbb C^{d_A}:\mathbb C^{d_B})
\end{array}
\right\}
$$
of trace-invariant inequalities that are valid for separable states in
\emph{every} local dimension, together with its two one-matrix couterparts
$\ineqall_n$ and $\ineqpsd_n$, obtained by testing arbitrary Hermitian,
respectively positive semidefinite, matrices.  Each of these cones is dual to
a moment cone, so that producing a criterion
and certifying that it cannot be improved become two sides of one convex
duality.  Note that we are interested in this work in dimension-free criteria; this allows us to index the coordinates by combinatorial data: partitions in the one-matrix case, and (orbits of) pairs of permutations in the bipartite case. This data depends only on the number of copies of the state and not on the local dimensions.

The second observation is that the realignment criterion, which at first
sight lies outside this framework because the trace norm $\|\Rn(\rho)\|_1$ is
not a polynomial in $\rho$, can be brought inside it through its \emph{even}
singular value moments.  These moments turn out to be single tensor trace
invariants for an explicit pair of
fixed-point-free involutions. Even moments of the centered, enhanced
version of the criterion can be expressed as a linear combination of
tensor traces.  Estimating the trace norm from finitely many of
these moments is then a one-dimensional truncated moment problem, and its
solution converts the realignment criterion into a hierarchy of genuine elements of the cone $\ineqsep_n$.

\bigskip

\begin{center}
\textbf{\large  ------ Main results ------}
\end{center}

\medskip
Our contributions are threefold:

\begin{itemize}
\item \emph{A convex-geometric framework for dimension-free inequalities.}
We define the cones of trace inequalities for Hermitian and positive
semidefinite matrices, together with the cone of trace-invariant inequalities
for bipartite separable states.  We identify their dual moment cones, determine
the minimal combinatorial parametrization of the bipartite cone, and describe
the embeddings that connect the one-matrix and bipartite settings.

\item \emph{A solution of the moment problem underlying realignment criteria.}
We show that the even singular-value moments of the realignment map (those of its centered version) are (linear combinations of) tensor trace invariants.  This reduces the
problem of extracting the strongest trace-norm information from finitely many
moments to a dimension-free truncated moment problem, which we solve in closed
form for two moments and provide a relaxation, which can be formulated semidefinite program for an arbitrary
number of moments.

\item \emph{Explicit criteria and a comparison of their detection power.}
We use the moment solution to construct a hierarchy of separability criteria,
prove its optimality among criteria based on the same measured moments, and
analyze the resulting inequalities through the geometry of the dual cones.
Along the way, we identify extremal low-degree inequalities and compare the
new few-moment realignment criteria with the standard and centered criteria,
including examples where the combined information gives strictly stronger
detection.
\end{itemize}

We detail below these contribtuins, with pointers to the formal results in the main text.

\medskip

\noindent\textbf{(1) A dimension-free hierarchy of inequality cones, and its
exact parametrization.}

\begin{mybox}
We introduce and study the cones $\ineqall_n\subseteq\ineqpsd_n$ of
dimension-free trace inequalities for one Hermitian, respectively positive
semidefinite, matrix, and the cone $\ineqsep_n\subseteq\mathbb R[S_n\times
S_n]$ of dimension-free trace-invariant inequalities for bipartite separable
states.  Each is the polar dual of the corresponding moment cone.
\end{mybox}

\noindent The positive semidefinite cone can realized inside the bipartite cone in two ways. The two embeddings (\cref{prop:psd-embeddings}) send a partition
monomial $[\lambda]$ that corresponds to a permutation $\alpha_\lambda$ to $(\alpha_\lambda,\alpha_\lambda)$ and to
$(\alpha_\lambda,\alpha_\lambda^{-1})$, and account respectively for the
positivity of $\rho$ and for the positivity of $\rho^{\Gamma_B}$:
$$
\iota_+(\ineqpsd_n)\subseteq\ineqsep_n,
\qquad
\iota_\Gamma(\ineqpsd_n)\subseteq\ineqsep_n.
$$
This immediately organizes the existing moment criteria.  For instance, the
3 moments PPT criterion of \cite{elben2020mixed} is nothing but the image under
$\iota_\Gamma$ of the Hankel determinant $\Delta_{13}=p_1p_3-p_2^2$, and we
prove that $\mathbb R_{\geq0}\Delta_{13}$ is an \emph{extremal ray} of
$\ineqpsd_4$ (\cref{prop:delta13-extremal}). Extremality is the precise sense
in which an inequality is irredundant: $\Delta_{13}$ is not a nonnegative
combination of other dimension-free degree-four inequalities, the 3 moments PPT criterion is genuinely new rather than a
consequence of simpler ones.  Extremality is,
however, not automatic for Hankel minors: already $\Delta_{15}=p_1p_5-p_3^2$
decomposes nontrivially inside $\ineqpsd_6$.  We also show that the
dimension-free cones are non-polyhedral: $\ineqpsd_3$
has infinitely many extremal rays (\cref{prop:ineqpsd3-extremal-rays}).

On the bipartite side, the group algebra $\mathbb R[S_n\times S_n]$ is overcomplete, since simultaneous conjugation and simultaneous
inversion of $(\sigma,\tau)$ leave the invariant unchanged.  We identify the
minimal set of coordinates (\cref{thm:minimal-parametrization-exact-size}):
they are the orbits of $S_n^2$ under $S_n\times C_2$ counted by Bogaerts and
Dukes \cite{bogaerts2014semidefinite}, whose number $b_n$ we record in closed
character-theoretic form.  In the first nontrivial degree we obtain a complete
answer: $\ineqsep_2$ has exactly four extremal rays
(\cref{cor:ineqsep2-extremal-rays})
$$
\pi\geq0,
\qquad
\pi_A\geq\pi,
\qquad
\pi_B\geq\pi,
\qquad
t^2+\pi\geq\pi_A+\pi_B,
$$
where $\pi,\pi_A,\pi_B$ are the global and local purities and $t=\Tr\rho$.
Every degree-two dimension-free separability inequality is a nonnegative
combination of these four.

\medskip

\noindent\textbf{(2) Realignment moments are tensor trace invariants.}

\begin{mybox}
The even singular value moments of the realignment of a bipartite Hermitian
operator are single local unitary trace invariants: for every $m\geq1$,
$$
r_m(X):=\Tr\left[\bigl(\Rn(X)\Rn(X)^*\bigr)^m\right]=\Tr_{\alpha_m,\beta_m}(X),
$$
with $\alpha_m,\beta_m\in S_{2m}$ explicit fixed-point-free involutions. For the centered operator $\tilde X$ of the enhanced realignment criterion,
each $r_m(\tilde X)$ is a linear combination of such invariants, of degree
$4m$ in $X$.
\end{mybox}

\noindent See \cref{prop:even-realignment-moments-as-trace-invariants} and
\cref{prop:centered-realignment-moments}.  The permutations are
$$
\alpha_m=(1\,\,2m)(2\,\,3)(4\,\,5)\cdots(2m-2\,\,2m-1),
\qquad
\beta_m=(1\,\,2)(3\,\,4)\cdots(2m-1\,\,2m),
$$
and the lowest moment is the purity, $r_1(X)=\Tr(X^2) = \pi$.  This places the
realignment criterion on exactly the same footing as the partial transpose
criterion: its moments are polynomial invariants of degree $2m$ in $\rho$,
estimable from $2m$ copies by randomized measurements, and they are elements
of the same group algebra in which $\ineqsep_n$ lives.  It is this
identification that makes the realignment criterion amenable to a
few-moment treatment and places it on the same footing as the moment-PPT criteria, allowing us to compare them.

\medskip

\noindent\textbf{(3) The optimal trace-norm bound from $m$ moments.}
The passage from moments to a criterion is governed by a single scalar
quantity.  For a target even moment vector $r\in\mathbb R_+^m$, set
\begin{equation*}
\alpha(r)
=
\min_{n\in\mathbb N}\ \min_{s\in[0,1]^n}
\left\{\sum_{i=1}^ns_i\ \middle|\ \sum_{i=1}^ns_i^{2k}=r_k
\ \text{for all }k\in[m]\right\}.
\end{equation*}
Thus
$\alpha(r)$ is the smallest $\ell^1$ norm of a nonnegative vector with entries
in $[0,1]$ whose first $m$ even moments are the prescribed $r_k$, with no
constraint on the length, hence none on the local dimensions. If $X$ is
separable, then the singular values of $\Rn(X/t_X)$ form such a vector, and
the realignment criterion bounds their sum by one; therefore
$$
\alpha\bigl(r(X/t_X)\bigr)
\ \leq\
\left\|\frac{\Rn(X)}{t_X}\right\|_1
\ \leq\ 1
$$
 for
every separable $X$ (\cref{thm:realignment-moments-sep}), and $\alpha(r)>1$
certifies entanglement. Conversely, if $\alpha(r)\leq1$ then the observed
moments are realized by some nonnegative vector of $\ell^1$ norm at most one,
so no sharper conclusion can be drawn from $r_1,\dots,r_m$ alone. In this
precise sense, $\alpha$ \emph{is} the optimal $m$-moment relaxation of the
realignment criterion.

\begin{mybox}
We solve this dimension-free truncated moment problem exactly for $m\leq2$,
relax it to a semidefinite program of size $O(m)$ for arbitrary $m$, and
determine precisely when the relaxation is tight.
\end{mybox}

\noindent Concretely (\cref{thm:r1-r2-exact}), the problem is feasible if and
only if $0<r_2\leq r_1^2$ or $r_1,r_2 =0$, and in the first case the optimum is attained on a vector of support
size exactly $n=\lceil r_1^2/r_2\rceil$, and $\alpha(r_1,r_2)$ is given in
closed form; the resulting boundary curve $\alpha=1$ is a countable union of
arcs (\cref{fig:alpha-r1-r2}).  Relaxing atomic measures with unit weights to
arbitrary positive measures on $[0,1]$ gives a lower bound $\alpha_c(r)\leq
\alpha(r)$ which, by the classical Hausdorff truncated moment problem \cite{shohat1943problem,akhiezer2020classical,lasserre2009moments},
is computable by a semidefinite program.  For two moments we obtain the
strikingly simple value $\alpha_c(r_1,r_2)=r_1^{3/2}/\sqrt{r_2}$
(\cref{thm:r1-r2-alpha-c}), and the exact criterion for tightness
$$
\alpha(r_1,r_2)=\alpha_c(r_1,r_2)
\qquad\Longleftrightarrow\qquad
\frac{r_1^2}{r_2}\in\mathbb N .
$$
The gap between the two quantities $\alpha$ and $\alpha_c$ can be understood as an \emph{integrality} gap: what the continuous
relaxation loses is precisely the information that singular values come with
unit multiplicities.

On the dual side, feasible points of the dual program are exactly
the polynomial minorants of $\sqrt x$ on $[0,1]$, and we show that within the
family of Hölder interpolation inequalities the smallest exponents are
already the best using log-convexity of the $p$-norms.  There is thus nothing to be gained from
higher interpolation moments, which sharpens the three-moment philosophy of
\cite{miller2026detecting} in the realignment setting.

\medskip

\noindent\textbf{(4) Two hierarchies of explicit separability inequalities.}
Every polynomial $p(x)=\sum_{k=1}^mp_kx^k$ with $p\leq\sqrt x$ on $[0,1]$
yields, by \cref{thm:realignment-moments-sep}, a dimension-free inequality
for separable states,
$$
\sum_{k=1}^mp_k\,r_k(X)\,t_X^{2(m-k)}\leq t_X^{2m}
\qquad\leadsto\qquad
\text{an element of }\ineqsep_{2m},
$$
and a centered counterpart obtained by the substitutions $X\to\tilde X$ and
$t_X^2\to G(X)$.  We construct two such families of \emph{universal
minorants}:
\begin{itemize}
    \item the binomial minorants $f_k$ of
\cref{eq:binomial-realignment-minorant}, coming from the Taylor expansion of
$\sqrt x$ at $x=1$, which approximate $\sqrt x$ well near $1$
    \item the $L^1_\beta$-weighted minorants $g_{k,\beta}$ of \cref{eq:l1beta}, defined as
optimizers of a weighted $L^1$ distance and computed by an explicit
semidefinite program, which approximate $\sqrt x$ well near $0$.
\end{itemize}
Both converge pointwise to $\sqrt x$, hence both
hierarchies converge to the full (enhanced) realignment criterion
(\cref{fig:minorants}).  
The first two binomial minorants $f_1(x)=x$ and $f_2(x)=\frac32x-\frac12x^2$ give
$r_1(X)\leq t_X^2$, which is just the purity bound $\Tr(X^2)\leq(\Tr X)^2$, and
$\frac32r_1t_X^2-\frac12r_2\leq t_X^4$, which is a genuinely bipartite element of
$\ineqsep_4\setminus\iota_+(\ineqpsd_4)$.  Separately, the two-moment
relaxation $\alpha_c$ of \cref{thm:r1-r2-alpha-c} (equivalently, Hölder
interpolation) yields the degree-six inequality $r_1(X)^3\leq t_X^2r_2(X)$ of
\cref{eq:interpolation-sep-re}, which recovers a criterion of
\cite{tarabunga2026quantifying}.

\medskip

    \noindent\textbf{(5) PPT entangled states through realignment moments.}

\begin{mybox}
As few as nine standard, or five centered, realignment moments
already certify the bound entanglement of the \emph{Tiles} state; only two centered moments suffice for a newly introduced PPT entangled state.
\end{mybox}

\noindent No criterion in $\iota_+(\ineqpsd_n)\cup\iota_\Gamma(\ineqpsd_n)$ can ever
detect a PPT entangled state, whatever the degree, hence genuinely bipartite
invariants are indispensable, and the realignment moment hierarchies
supply them.  In
\cref{sec:tiles} we test our hierarchies on the Tiles state $\rho_{\rm T}$
\cite{bennett1999unextendible} mixed with white noise, and report the exact
detection ranges in \cref{tab:tiles}. From this numerical benchmark, we draw several conclusions. First, the
$L^1_\beta$ hierarchy is more efficient than the binomial one: it
needs nine moments where the binomial hierarchy needs about fifty in the
standard case, and five against twenty in the centered case. Second, the
detection window widens quickly with $k$; already at $k=15$ the $L^1_\beta$
criterion reaches a noise threshold close to that of the full realignment
criterion.

Since the required order could depend on the choice of the Tiles state, we also
optimize numerically over PPT entangled states of $\C^3\otimes\C^3$ and, in
\cref{sec:ppt-state}, identify in closed form a state $\rho_\star$, one quarter
of a rank-four projection with maximally mixed marginals and only two distinct
nonzero realignment singular values, whose detection order is substantially
lower: the binomial hierarchy detects it at order $k=30$ (standard) and $k=13$
(centered), instead of $52$ and $20$ for the Tiles state. More strikingly, the
centered interpolation criterion $r_1(\tilde\rho)^3\leq G\,r_2(\tilde\rho)$,
which uses only two centered realignment moments, detects the noisy family
$p\rho_\star+(1-p)\text{I}_9/9$ on exactly the same range $p>4/5$ as the full
enhanced realignment criterion, whereas the standard two-moment interpolation
criterion detects neither $\rho_\star$ nor the Tiles state
(\cref{tab:rho-star}).

\medskip

\noindent\textbf{(6) A complete comparison in degree four.}
Finally, we compare the criteria against each other rather than against states. On pure states nothing is lost by
truncation: each of the three realignment-moment criteria of
\cref{prop:pure-general} is violated on entangled pure states. On
mixed states the picture is more interesting. We prove the moment inequality
$$
r_2(\rho)\geq p_4\bigl(\rho^{\Gamma_B}\bigr)
\qquad\text{for every }\rho\succeq0,
$$
i.e.\ $\|\Rn(\rho)\|_4^4\geq\|\rho^{\Gamma_B}\|_4^4$
(\cref{prop:realignment-pt4}), and deduce from it that the 3 moments PPT criterion
\emph{dominates} the two-moment interpolation criterion $r_1^3\leq r_2$: every
state detected by the latter is detected by the former
(\cref{thm:interpolation-implied-ppt3}). At fixed degree
four, however, no such ordering survives: among the four criteria
$$
W_{\rm P}=tq_3-q_2^2,
\quad
W_{\rm E}=G-\tilde r_1,
\quad
W_{\rm M}=t^4-\tfrac32t^2r_1+\tfrac12r_2,
\quad
W_{\rm S}=(t^2-r_1)^2-2(r_1^2-r_2),
$$
all lying in $\ineqsep_4$, the \emph{only} inclusions between the
corresponding detection sets, uniformly over all local dimensions, are
$\mathcal V_{\rm M}\subsetneq\mathcal V_{\rm P}$ and $\mathcal V_{\rm
M}\subsetneq\mathcal V_{\rm S}$ (\cref{thm:four-detection-sets}); every other
pair is separated by explicit states.  The criterion $W_{\rm M}$ is
redundant: it is strictly weaker than both $W_{\rm P}$ and $W_{\rm S}$.
Thus a practitioner with a degree-four budget should test only $W_{\rm P}$,
$W_{\rm E}$, and $W_{\rm S}$.

\medskip

Throughout the paper we use the graphical notation for tensors introduced by
Penrose \cite{penrose1971applications}, which is particularly well suited to
quantum theory, where tensor products are ubiquitous; see
\cite{bridgeman2017hand,taylor2024introduction} for modern presentations.  Several of our proofs, notably those
concerning realignment moments, become transparent when read as diagram
manipulations.  Let us also point out that the framework developed here is not
tied to the bipartite setting: the cones $\ineqsep_n$ and the moment
techniques extend naturally to $\mathbb R[S_n^{\times r}]$ for $r$-partite
states, where the relevant trace invariants are those arising in tensor generalizations of free
probability theory \cite{nechita2025tensor}. The cones studied here admit a natural interpretation as tensor
generalizations of the classical moment problem.  In the Hamburger,
Stieltjes, and Hausdorff problems, one is given a sequence of scalar moments
and asks whether it has a representing measure, possibly subject to a
prescribed support constraint.  In the tensor setting, the data are instead a
finite family of trace invariants $\bigl(m_{\sigma,\tau}\bigr)_{\sigma,\tau\in S_n}$, and the question is whether there exists an operator $\rho$ realizing
$m_{\sigma,\tau}=\operatorname{Re}\Tr_{\sigma,\tau}(\rho)$ and satisfying a
prescribed structural constraint, such as Hermiticity, positive
semidefiniteness, the PPT property, or separability.  In particular,
$\momsep_n$ is the closed conic hull of invariant data admitting a separable
realization, whereas its polar cone
$\ineqsep_n=(\momsep_n)^\circ$ consists of all linear compatibility
inequalities obeyed by such data.  Violating one of these inequalities proves
that no separable tensor can have the prescribed trace invariants.

\begin{mybox}
In this paper, we take a first step in this program by studying the
positive-semidefinite inequality cone $\ineqpsd_n$, the separability
inequality cone $\ineqsep_n$, and the embeddings that connect them.  This
viewpoint clarifies how standard bipartite entanglement criteria (such as the
PPT, realignment, and enhanced realignment) give rise to systematic
families of inequalities involving only finitely many trace invariants.
\end{mybox}

We leave the multipartite extension, and the corresponding criteria, for
future work.

\bigskip

The paper is organized as follows.  \Cref{sec:all} introduces the
dimension-free cone $\ineqall_n$ of trace inequalities valid for all Hermitian
matrices, establishes its duality with the corresponding moment cone,
computes its small-degree extremal rays, and derives its Hankel inequalities
from the Hamburger moment problem.  \Cref{sec:psd} carries out the same
programme for positive semidefinite matrices, where the Stieltjes moment
problem replaces the Hamburger one and more inequalities become
available; \cref{sec:hankel-determinants} discusses the Hankel determinants
that underlie the partial transpose moment criteria.  \Cref{sec:sep} passes to
the bipartite setting, defines $\ineqsep_n$ and its moment dual, constructs
the two embeddings of the one-matrix cone, establishes the minimal
parametrization of $\ineqsep_n$, and determines $\ineqsep_2$ completely.
\Cref{sec:realignment} is the technical heart of the paper: it
identifies the even realignment moments as tensor trace invariants, solves the
dimension-free truncated moment problem for the trace norm, and constructs the
universal minorants that convert its dual solutions into elements of
$\ineqsep_n$. \Cref{sec:detection} evaluates and compares the resulting criteria: on pure
states (\cref{sec:pure}), against the 3 moments PPT criterion (\cref{sec:comparison}),
against one another in degree four (\cref{sec:degree-four}), and finally on PPT
entangled states, namely the noisy Tiles state (\cref{sec:tiles}) and an
optimized PPT entangled state of minimal Taylor order (\cref{sec:ppt-state}).
Two appendices collect the semidefinite programs for the weighted minorants and
the proof of the two-moment trace-norm theorem.

\section{Inequalities for Hermitian matrices}\label{sec:all}

We begin by studying the inequalities satisfied by polynomials in traces of powers of Hermitian matrices. In this section we shall consider arbitrary Hermitian matrices without any additional constraints (such as positive semidefiniteness or separability).
For integers $d_1,d_2\geq1$, let $\M_{d_1,d_2}$ denote the vector space of complex
$d_1\times d_2$ matrices, and write $\M_d:=\M_{d,d}$.  We denote by
$\Msa_d\subseteq\M_d$ the real vector space of self-adjoint matrices and by
$\Mpsd_d\subseteq\Msa_d$ the cone of positive semidefinite matrices (which will be the topic of the next section).

Let
$X\in \Msa_d$ be Hermitian and let $\sigma\in S_n$.  We first define the trace invariant
$$
\Tr_\sigma(X)
:=
\Tr(P_\sigma X^{\otimes n}),
$$
where $P_\sigma$ permutes the $n$ tensor factors of
$(\mathbb C^d)^{\otimes n}$.  If
$$
p_k(X):=\Tr(X^k)
$$
and $\lambda(\sigma)=(\lambda_1,\ldots,\lambda_\ell)\vdash n$ is the
cycle type of $\sigma$, then contraction along the cycles of $\sigma$ gives
$$
\Tr_\sigma(X)
=
p_{\lambda(\sigma)}(X),
\qquad
p_\lambda(X):=\prod_{j=1}^{\ell}p_{\lambda_j}(X).
$$
Equivalently, after diagonalizing $X$ with real eigenvalues
$x=(x_1,\ldots,x_d)$,
$$
p_k(X)=p_k(x):=\sum_{i=1}^d x_i^k.
$$
Thus the trace invariant depends only on the cycle type of the
permutation.  We shall use throughout the vector space
\begin{equation}\label{eq:def-partition-space}
V_n:=\spanop\{[\lambda]:\lambda\vdash n\},
\end{equation}
where the basis vector $[\lambda]$ represents the trace monomial $p_\lambda$.
In other words, the evaluation map
$$
\mathbb R[S_n]\longrightarrow V_n,
\qquad
\sigma\longmapsto[\lambda(\sigma)],
$$
is the cycle-type quotient of the real group algebra.  This notation will
also be used for positive semidefinite matrices in the next section, while the separability inequalities will require a different combinatorial structure.

This section is the first of a series of sections dealing with increasingly specialized matrices.  We restrict test
matrices in later sections to nonnegative spectra (\cref{sec:psd}) and to separable
bipartite states (\cref{sec:sep}).  At each step we define an inequality cone and a
dual moment cone linked by the same coefficient pairing. The mathematical framework changes
from Hamburger moment problems to Stieltjes moment problems to moments of separable bipartite qauntum states.

\begin{definition}
The dimension-free all-Hermitian inequalities cone in degree $n$ is
$$
\ineqall_n
:=
\left\{
f=\sum_{\lambda\vdash n}a_\lambda[\lambda]\in V_n:
\begin{array}{l}
f(X)\geq0\ \text{for every }d\geq1\\
\text{and every Hermitian }X\in \Msa_d
\end{array}
\right\}.
$$
Equivalently,
$$
\ineqall_n
=
\left\{
f=\sum_{\lambda\vdash n}a_\lambda p_\lambda:
f(x_1,\ldots,x_d)\geq0
\ \text{for all }d\geq1\text{ and all }x_i\in\mathbb R
\right\}.
$$
The corresponding all-Hermitian moment cone is
$$
\momall_n
:=
\overline{\cone\left\{
\bigl(p_\lambda(x)\bigr)_{\lambda\vdash n}:
x=(x_1,\ldots,x_d),\ d\geq1,\ x_i\in\mathbb R
\right\}}.
$$
\end{definition}

\begin{proposition}
\label{prop:all-parity}
If $n$ is odd, then
$$
\ineqall_n=\{0\}.
$$
If $n$ is even, every product of even power sums of total degree $n$ belongs
to $\ineqall_n$.  In particular,
$$
[2m]\in\ineqall_{2m}.
$$
\end{proposition}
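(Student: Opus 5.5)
The plan is a homogeneity argument for odd degree and termwise nonnegativity for even degree.

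\emph{Odd $n$.} Every trace monomial $p_\lambda$ with $\lambda\vdash n$ is homogeneous of degree $n$ in $X$, since $p_\lambda(tX)=t^{|\lambda|}p_\lambda(X)=t^n p_\lambda(X)$. Hence every $f\in V_n$ satisfies $f(-X)=(-1)^n f(X)=-f(X)$ when $n$ is odd. Take $f\in\ineqall_n$. For any $d$ and any Hermitian $X$, the matrix $-X$ is also Hermitian, so both $f(X)\geq0$ and $f(-X)=-f(X)\geq0$, which forces $f(X)=0$.

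It then remains to show that the identically vanishing $f$ is the zero vector of $V_n$. In other words, the power-sum monomials $\{p_\lambda\}_{\lambda\vdash n}$ must be linearly independent as functions on $\bigcup_d\mathbb R^d$. This is the classical fact that the $p_\lambda$ are linearly independent in the ring of symmetric polynomials as soon as the number of variables $d\geq n$; they even form a basis of the degree-$n$ symmetric functions over $\mathbb Q$. So choose $d=n$: a vanishing $f=\sum a_\lambda p_\lambda$ restricted to $\mathbb R^n$ gives a polynomial identity, and therefore $a_\lambda=0$ for all $\lambda$. This independence step is the only nontrivial ingredient, and I would simply cite it, since it is standard symmetric function theory.

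\emph{Even $n$.} Each even power sum $p_{2k}(x)=\sum_i x_i^{2k}$ is nonnegative for every real $x$, that is, $p_{2k}(X)=\Tr(X^{2k})\geq0$ for Hermitian $X$. A product of nonnegative numbers is nonnegative, so for any partition $\lambda\vdash n$ with all parts even, $p_\lambda(X)\geq0$ for all $d$ and all Hermitian $X$. This gives $[\lambda]\in\ineqall_n$. Since $\ineqall_n$ is a convex cone, nonnegative combinations of such monomials also belong to it. Taking $\lambda=(2m)$ yields $[2m]\in\ineqall_{2m}$.
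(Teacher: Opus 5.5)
Your proof is correct and takes the same approach as the paper. For odd $n$, both use $f(-X)=-f(X)$ and then note that a degree-$n$ symmetric function vanishing in all dimensions is zero; you make this explicit as linear independence of the $p_\lambda$ in $d\geq n$ variables. For even $n$, both use termwise nonnegativity of the even power sums.
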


\begin{proof}
Let $f\in V_n$ be homogeneous of odd degree and nonnegative on every real vector $x$.  Since $f(-x)=(-1)^n f(x)=-f(x)$,
both $f(x)$ and $-f(x)$ are nonnegative.  Thus $f$ vanishes on every real
vector.  A symmetric function of degree $n$ is determined by its
specializations to finitely many variables, so $f=0$ in $V_n$.

Finally, $p_{2m}(x)=\sum_i x_i^{2m}\geq0$ on every real vector.  Products
of even power sums are therefore nonnegative as well.
\end{proof}

We pair $V_n$ with the real vector space indexed by partitions of $n$ by the standard scalar product
$$
\left\langle
\sum_{\lambda\vdash n}a_\lambda[\lambda],
(y_\lambda)_{\lambda\vdash n}
\right\rangle
=
\sum_{\lambda\vdash n}a_\lambda y_\lambda.
$$
For any cone $C$, we use the nonnegative
polar convention
$$
C^\circ:=\{f:\langle f,y\rangle\geq0\ \text{for every }y\in C\}.
$$

\begin{proposition}
For every $n\geq1$,
$$
\ineqall_n=(\momall_n)^\circ.
$$
\end{proposition}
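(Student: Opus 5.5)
The identity is essentially a tautology of the pairing. The only point needing care is that passing to the closed conic hull does not change the polar. I will prove the two inclusions directly from the definitions.

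Write $G:=\{(p_\lambda(x))_{\lambda\vdash n}: d\geq1,\ x\in\mathbb R^d\}$ for the generating set, so that $\momall_n=\overline{\cone G}$. The key observation is that for $f=\sum_\lambda a_\lambda[\lambda]\in V_n$ and $x\in\mathbb R^d$,
$$
\Bigl\langle f,\bigl(p_\lambda(x)\bigr)_{\lambda}\Bigr\rangle=\sum_\lambda a_\lambda p_\lambda(x)=f(x).
$$
Thus evaluating $f$ at a real vector is the same as pairing $f$ against the corresponding point of $G$.

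\emph{Inclusion $\ineqall_n\subseteq(\momall_n)^\circ$.} Take $f\in\ineqall_n$. By the identity above, $\langle f,y\rangle\geq0$ for every $y\in G$. By linearity the inequality persists for nonnegative combinations, hence on $\cone G$. The map $y\mapsto\langle f,y\rangle$ is continuous on the finite-dimensional space $\mathbb R^{\{\lambda\vdash n\}}$, so it also persists on the closure $\momall_n$.

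\emph{Inclusion $(\momall_n)^\circ\subseteq\ineqall_n$.} Take $f$ with $\langle f,y\rangle\geq0$ on $\momall_n$. In particular this holds on $G\subseteq\momall_n$. By the identity above, this says $f(x)\geq0$ for all $d\geq1$ and all $x\in\mathbb R^d$. By the equivalent description in the definition, this is exactly $f\in\ineqall_n$. The Hermitian formulation follows by diagonalizing $X$, as in the discussion preceding the definition.

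There is no real obstacle. The only points to state explicitly are continuity of the pairing, which justifies the closure, and the reduction from Hermitian matrices to real eigenvalue vectors, which was already recorded before the definition. Note that no bipolar or Hahn--Banach argument is needed, since the statement identifies the inequality cone with the polar of the moment cone rather than the other way around.
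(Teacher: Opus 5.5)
Your proposal is correct and follows essentially the same route as the paper: identify evaluation of $f$ at a real vector $x$ with the pairing $\langle f,(p_\lambda(x))_\lambda\rangle$, then use linearity and continuity of the pairing to pass from the generating evaluation vectors to their closed conic hull $\momall_n$. Your version simply spells out the two inclusions separately, which the paper compresses into a single chain of equivalences.
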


\begin{proof}
Let $f=\sum_{\lambda\vdash n}a_\lambda[\lambda]\in V_n$.  For every
$x=(x_1,\ldots,x_d)\in\mathbb R^d$, one has
$$
f(x_1,\ldots,x_d)
=
\left\langle f,
\bigl(p_\lambda(x)\bigr)_{\lambda\vdash n}
\right\rangle.
$$
Thus, by definition, $f\in\ineqall_n$ if and only if this pairing is
nonnegative on every evaluation vector.  Since the pairing is linear and
continuous, this is equivalent to its being nonnegative on the closed conic
hull of all evaluation vectors, which is precisely $\momall_n$.  Hence
$f\in\ineqall_n$ if and only if $f\in(\momall_n)^\circ$.
\end{proof}

\subsection{Small-degree extremal rays}

To separate dimension-free positivity from trace identities, define
$$
\ineqall_{n,d}
:=
\left\{f\in V_n:f(x_1,\ldots,x_d)\geq0
\ \text{for every }x\in\mathbb R^d\right\}.
$$
Then
$$
\ineqall_n=\bigcap_{d\geq1}\ineqall_{n,d}.
$$
For fixed $d$, the evaluation map $V_n\to\mathbb R[x_1,\ldots,x_d]^{S_d}$ (the space of symmetric polynomials)
can have a non-empty kernel; hence result in a trivial inequality. In what follows, when discussing extremal rays of these cones, we shall quotient first the cone by the kernel of the evaluation map.

For $d=1$, every $p_k$ equals $x^k$.  The quotient cone is zero for odd $n$
and is the single half-line of nonnegative multiples of $x^n$ for even $n$.

\begin{proposition}[All-Hermitian degrees one, two, and three]
\label[proposition]{prop:all-small-degrees}
One has
$$
\ineqall_1=\ineqall_3=\{0\}.
$$
The cone $\ineqall_2$ has exactly two extremal rays,
$$
\mathbb R_{\geq0}[2]
\qquad\text{and}\qquad
\mathbb R_{\geq0}[1^2].
$$
Equivalently,
$$
\ineqall_2=\cone\{p_2,p_1^2\}.
$$
\end{proposition}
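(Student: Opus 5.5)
The odd-degree claims follow at once from \cref{prop:all-parity}: since $1$ and $3$ are odd, $\ineqall_1=\ineqall_3=\{0\}$. All the work is in degree two, where $V_2$ is two-dimensional with basis $[2],[1^2]$. We write a general element as $f=a\,p_2+b\,p_1^2$ and determine exactly which pairs $(a,b)$ give $f\geq0$ on every real vector $x$ in every dimension $d$.

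For necessity we test $f$ on two families of vectors. On $x=(1)$ with $d=1$ we get $f=a+b$, so $a+b\geq0$; this is not yet enough. The decisive test is $x=(1,-1)$, which has $p_1=0$ and $p_2=2$, giving $2a\geq0$, so $a\geq0$. Next take $x=(1,\dots,1)\in\mathbb R^d$, for which $f=ad+bd^2$. Dividing by $d^2$ and letting $d\to\infty$ gives $b\geq0$. Hence any $f\in\ineqall_2$ has $a,b\geq0$.

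For sufficiency, $p_2=\sum_i x_i^2\geq0$ and $p_1^2\geq0$, which is also covered by \cref{prop:all-parity}. So every $f$ with $a,b\geq0$ lies in the cone, and $\ineqall_2=\cone\{p_2,p_1^2\}$.

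It remains to identify the extremal rays. The cone is the nonnegative orthant in the coordinates $(a,b)$, so its extremal rays are the two coordinate half-lines $\mathbb R_{\geq0}[2]$ and $\mathbb R_{\geq0}[1^2]$. These are genuinely distinct rays in $V_2$, because $p_2$ and $p_1^2$ are linearly independent as symmetric functions: they already differ when $d=2$. Therefore no kernel quotient is needed here.

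The only step needing care is the dimension-free argument for $b\geq0$. In any fixed dimension $d$, the function $p_2-\frac1d p_1^2$ is nonnegative by Cauchy--Schwarz, so $b$ can be negative for each individual $d$. The sign constraint $b\geq0$ appears only after taking the intersection over all $d$, which is what the all-ones vectors with $d\to\infty$ achieve.
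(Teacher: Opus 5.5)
Your proof is correct and is essentially the paper's argument: both reduce membership of $f=a\,p_2+b\,p_1^2$ in $\ineqall_2$ to the sign conditions $a\geq0$, $b\geq0$. The paper establishes the full dimension-free range $[0,\infty)$ of $t=p_1^2/p_2$, whereas you only test the two ends of that range ($t=0$ at $(1,-1)$, and $t=d\to\infty$ on the all-ones vectors); since sufficiency is immediate, this is a slightly more economical version of the same argument.
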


\begin{proof}
The odd-degree assertions follow from \cref{prop:all-parity}.  Write a
quadratic trace polynomial as
$$
f=a[2]+b[1^2],
\qquad
f(x)=a p_2(x)+b p_1(x)^2.
$$
For a nonzero vector, $p_2>0$, so
$$
\frac{f(x)}{p_2(x)}=a+b t,
\qquad
t:=\frac{p_1(x)^2}{p_2(x)}.
$$
In dimension $d$, Cauchy--Schwarz gives $0\leq t\leq d$.  For every $d\geq2$
all values in $[0,d]$ are attained: on the unit sphere, the continuous
function $p_1$ has range $[-\sqrt d,\sqrt d]$.  Since $d$ is arbitrary, the
dimension-free range of $t$ is $[0,\infty)$.  Therefore $f\geq0$ on all real
vectors if and only if
$$
a+bt\geq0\qquad\text{for all }t\geq0,
$$
which is equivalent to $a\geq0$ and $b\geq0$.  These two coordinate
half-lines are the asserted extremal rays.
\end{proof}

\begin{remark}
In degree two there are no non-trivial trace identities that hold in all dimensions, so
\cref{prop:all-small-degrees} gives the full dimension-free cone directly.
In fixed dimension $d$ the cone $\ineqall_{2,d}$ is strictly larger.
For example, in dimension $d=2$ the range
$$
0\leq\frac{p_1^2}{p_2}\leq2
$$
gives
$$
\ineqall_{2,2}
=
\{a[2]+b[1^2]:a\geq0,\ a+2b\geq0\},
$$
whose quotient cone has the additional extremal ray
$$
\mathbb R_{\geq0}\bigl(2[2]-[1^2]\bigr),
\qquad
2p_2-p_1^2=(x_1-x_2)^2.
$$  
This ray is nonnegative in dimension two but not dimension-free: on the
all-ones vector in dimension $d>2$ it equals $d(2-d)<0$. In this paper however we shall not be overly concerned with the case of fixed dimensions; we shall focus on the dimension-free case.
\end{remark}

\subsection{Hamburger moment inequalities}

Having classified the first degrees directly, we now describe the general
even-shift source.  Let $X=X^*$ have eigenvalues $x_1,\ldots,x_d\in\mathbb R$.  The power sums
are the moments of the positive atomic measure called the \emph{empirical eigenvalue distribution} of $X$:
$$
\nu_X:=\sum_{i=1}^d\delta_{x_i}
$$
on $\mathbb R$.  If $s\geq0$ is even and
$q(t)=\sum_{j=0}^r c_jt^j$, then
$$
\sum_{i,j=0}^r c_i c_j p_{s+i+j}(X)
=
\int_{\mathbb R}t^s q(t)^2\,\d\nu_X(t)
\geq0.
$$
Consequently every even-shifted Hankel matrix
$$
H_s(X):=\bigl(p_{s+i+j}(X)\bigr)_{i,j\geq0},
\qquad s\ \text{even},
$$
is positive semidefinite \cite{shohat1943problem,akhiezer2020classical}.  Every principal minor that does not involve
$p_0=d$ therefore gives a dimension-free trace inequality in the appropriate
homogeneous degree.

This is the Hamburger moment problem; its analogue for positive spectra, called the Stieltjes moment problem will be discussed in the following section \cite{akhiezer2020classical,lasserre2009moments}. The distinction is that a positive measure on
$[0,\infty)$ remains positive after multiplication by $t^s$ for every
$s\geq0$, whereas on $\mathbb R$ this is automatic only for even $s$.
Accordingly, odd Hankel shifts do not give all-Hermitian inequalities.

The first nontrivial example of the even shift Hankel matrix inequality is
$$
p_2p_4-p_3^2
=
\det
\begin{pmatrix}
p_2&p_3\\
p_3&p_4
\end{pmatrix}
\in\ineqall_6.
$$
Equivalently, this is the Cauchy--Schwarz inequality for the vectors
$(x_i)_i$ and $(x_i^2)_i$.  By contrast,
$$
p_1p_3-p_2^2
=
\det
\begin{pmatrix}
p_1&p_2\\
p_2&p_3
\end{pmatrix}
$$
comes from an odd shift and is not nonnegative on signed spectra:
for $X=(1,-1)$ one has $p_1=p_3=0$ and $p_2=2$, so
$p_1p_3-p_2^2=-4<0$; thus $[13]-[2^2] \notin \ineqall_4$.

We leave as an open problem the description of the (extremal rays of the) cone $\ineqall_4$. This is probably a difficult task since there exist such inequalities that do not admit a sum of squares decomposition (see \cite[Theorem 3.6]{acevedo2025symmetric} and also \cite{blekherman2021symmetric}):
$$4p_1^4-5p_2p_1^2-\frac{139}{20}p_1p_3+4p_2^2+4p_4 \in \ineqall_4.$$

\section{Inequalities for PSD matrices}\label{sec:psd}

In this section we still work in the one-matrix framework introduced above but restrict the test
matrices to being \emph{positive semidefinite}: $X\in \Mpsd_d$.

\begin{definition}
The dimension-free one-matrix PSD cone in degree $n$ is
$$
\ineqpsd_n
=
\left\{
\sum_{\lambda\vdash n}a_\lambda[\lambda]:
\sum_{\lambda\vdash n}a_\lambda p_\lambda(X)\geq0
\ \text{for every }d\geq1\text{ and every }X\in \Mpsd_d
\right\}.
$$
Equivalently, if $x_1,\ldots,x_d$ are the eigenvalues of $X$, then
$$
\ineqpsd_n
=
\left\{
f=\sum_{\lambda\vdash n}a_\lambda p_\lambda:
f(x_1,\ldots,x_d)\geq0
\ \text{for all }d\geq1,\ x_i\geq0
\right\}.
$$
The corresponding PSD moment cone is
$$
\mompsd_n
:=
\overline{\cone\left\{
\bigl(p_\lambda(x)\bigr)_{\lambda\vdash n}:
x=(x_1,\ldots,x_d),\ d\geq1,\ x_i\geq0
\right\}}.
$$
\end{definition}

This is the cone of homogeneous symmetric power-sum polynomials of degree $n$ which are nonnegative on every finite positive vector.  It is the spectral shadow of the moment-polynomial positivity problems studied in much greater generality in \cite{klep2025sums}.  The general even-shift source for trace inequalities in \cref{sec:all} is replaced here by the Stieltjes moment problem on $[0,\infty)$.

With the coefficient pairing introduced above, the defining condition gives
$$
\ineqpsd_n=(\mompsd_n)^\circ.
$$
The closure does not change the polar and records dimension-free limit points.
For example, after normalizing $p_1=\sum_i x_i=1$, the uniform spectra
$(1/m,\ldots,1/m)$ satisfy $p_k\to0$ for every $k\geq2$, although the limit
is not the evaluation vector of any finite positive vector.

\subsection{Relation with the all-Hermitian cone}

First, note that since we are only evaluating on positive semi-definite matrices, there are possibly more inequalities in this case.

\begin{proposition}
For every $n\geq1$ one has
$$
\ineqall_n\subseteq\ineqpsd_n.
$$
\end{proposition}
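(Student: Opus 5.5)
The inclusion follows straight from the definitions: the set of test matrices for $\ineqpsd_n$ is a subset of the set of test matrices for $\ineqall_n$. So I would simply restrict the quantifier.

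Take $f=\sum_{\lambda\vdash n}a_\lambda[\lambda]\in\ineqall_n$. By definition, $f(X)=\sum_\lambda a_\lambda p_\lambda(X)\geq0$ for every $d\geq1$ and every Hermitian $X\in\Msa_d$. Since $\Mpsd_d\subseteq\Msa_d$, this holds in particular for every positive semidefinite $X$, in every dimension $d$. Hence $f\in\ineqpsd_n$.

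In the spectral formulation the argument is the same. Nonnegativity of $f(x_1,\ldots,x_d)$ for all real $x_i$ implies nonnegativity on the orthant $x_i\geq0$.

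As a cross-check one can also argue on the dual side. Every PSD evaluation vector is an all-Hermitian evaluation vector, so $\mompsd_n\subseteq\momall_n$ (closed conic hulls preserve inclusion). Taking polars reverses the inclusion, giving $(\momall_n)^\circ\subseteq(\mompsd_n)^\circ$, which is the claim by the two duality statements.

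There is no real obstacle here. The only point needing care is that both cones live in the same space $V_n$ with the same coordinates $[\lambda]$, so the inclusion is literal and not just up to an identification. It is worth remarking, though this is not part of the statement, that the inclusion is strict for $n=1$ and $n=3$: $[1]\in\ineqpsd_1$ while $\ineqall_1=\{0\}$ by \cref{prop:all-small-degrees}.
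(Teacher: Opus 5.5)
Your proof is correct and is exactly the reasoning the paper relies on; the paper gives no formal proof and only remarks that $\ineqpsd_n$ tests on the smaller class of positive semidefinite matrices, which is your quantifier-restriction argument. Your dual check via $\mompsd_n\subseteq\momall_n$ and your remark that the inclusion is strict for $n=1,3$ are also correct.
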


The inclusion is generally strict.  For example,
$$
p_1^2-p_2=2\sum_{i<j}x_ix_j\in\ineqpsd_2,
$$
whereas at the signed vector $(1,-1)$ it takes the value $-2$.  Likewise,
$$
p_1p_3-p_2^2=\sum_{i<j}x_ix_j(x_i-x_j)^2\in\ineqpsd_4,
$$
but this polynomial takes the value $-4$ at $(1,-1)$.  The strictness of these
examples is revisited after the Stieltjes/Hankel block and the degree-two
comparison in \cref{prop:all-small-degrees}.

\subsection{Hankel determinants from the Stieltjes moment problem}\label{sec:hankel-determinants}

Let $X\in \Mpsd_d$ have eigenvalues $x_1,\ldots,x_d\geq0$.  The sequence
$$
p_k(X)=\sum_i x_i^k
$$
is the Stieltjes moment sequence of the positive measure $\nu_X=\sum_i\delta_{x_i}$
on $[0,\infty)$, see \cite{shohat1943problem,akhiezer2020classical}.  Hence for every $s\geq0$ and every polynomial $q(t)=\sum_i c_i t^i$,
$$
\sum_{i,j}c_i c_j p_{s+i+j}(X)
=
\int_0^\infty t^s q(t)^2\,d\nu_X(t)
\geq0.
$$
Therefore every shifted Hankel matrix
$$
H_s(X)=\bigl(p_{s+i+j}(X)\bigr)_{i,j\geq0}
$$
is positive semidefinite.  In particular, every principal minor containing no
$p_0$ gives an element of $\ineqpsd_n$ in its homogeneous degree.  This is
automatic for $s\geq1$; for $s=0$, one must choose only positive row and
column indices.

The first examples are
$$
p_a p_c-p_b^2\in\ineqpsd_{a+c},
\qquad
a+c=2b,\quad a,b,c\geq1,
$$
because this is the determinant of
$$
\begin{pmatrix}
p_a&p_b\\
p_b&p_c
\end{pmatrix}.
$$
For instance, the degree-four Hankel inequality
\begin{equation} \label{p3-ppt}
    p_1p_3-p_2^2\in\ineqpsd_4 \nonumber
\end{equation}
and we also have
$$
p_2p_4-p_3^2\in\ineqpsd_6.
$$
Larger principal Hankel minors give higher-degree elements, such as
$$
\det
\begin{pmatrix}
p_1&p_2&p_3\\
p_2&p_3&p_4\\
p_3&p_4&p_5
\end{pmatrix}
\in\ineqpsd_9.
$$
These are exactly the inequalities coming from the one-variable Stieltjes moment problem.  They are universal for positive semidefinite matrices because they use no matrix structure beyond the nonnegative spectrum.

\subsection{Small-degree extremal rays}

We now compute the PSD extremal rays in the first degrees and compare them
with their all-Hermitian counterparts.  Recall that we write
$$
[3]=p_3,\qquad [21]=p_2p_1,\qquad [1^3]=p_1^3,
$$
and similarly in other degrees.

\begin{proposition}
The cone $\ineqpsd_1$ has the single extremal ray $\mathbb R_{\geq0}[1]$,
while the cone $\ineqpsd_2$ has exactly two extremal rays:
$$
\mathbb R_{\geq0}[2],
\qquad
\mathbb R_{\geq0}\bigl([1^2]-[2]\bigr).
$$
\end{proposition}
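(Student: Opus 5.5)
The plan is to treat both degrees directly by reducing dimension-free positivity on nonnegative vectors to positivity of a linear function on an explicit range of invariant ratios. This mirrors the proof of \cref{prop:all-small-degrees}.

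\emph{Degree one.} Here $V_1=\spanop\{[1]\}$, and $f=a[1]$ evaluates to $a\,p_1(x)=a\sum_i x_i$. Taking $d=1$ and $x_1=1$ shows that $f\in\ineqpsd_1$ forces $a\geq0$. Conversely, $a\geq0$ suffices because $p_1\geq0$ on nonnegative vectors. So $\ineqpsd_1=\mathbb R_{\geq0}[1]$, a single ray.

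\emph{Degree two.} Write $f=a[2]+b[1^2]$. For a nonzero $x\in\mathbb R_{\geq0}^d$ we have $p_2(x)>0$, so set $t:=p_1(x)^2/p_2(x)$. Then $f(x)/p_2(x)=a+bt$.

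The key step is to determine the dimension-free range of $t$ over nonnegative vectors. We claim it is exactly $[1,\infty)$.
\begin{itemize}
\item The lower bound $t\geq1$ holds because $p_1^2-p_2=2\sum_{i<j}x_ix_j\geq0$ for $x_i\geq0$.
\item The upper bound $t\leq d$ in dimension $d$ is Cauchy--Schwarz.
\item Every value in $[1,d]$ is attained. The continuous path $x(s)=(1,s,\ldots,s)$, $s\in[0,1]$, gives $t=1$ at $s=0$ and $t=d$ at $s=1$, and the intermediate value theorem fills in the rest.
\end{itemize}
Since $d$ is arbitrary, the range is $[1,\infty)$.

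It follows that $f\in\ineqpsd_2$ if and only if $a+bt\geq0$ for all $t\geq1$. This is equivalent to the two conditions $b\geq0$ and $a+b\geq0$: the function is affine in $t$ on a half-line, so we need nonnegativity at the endpoint $t=1$ and a nonnegative slope.

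This cone in the $(a,b)$-plane is pointed, two-dimensional and polyhedral. Its two extremal rays are the boundary lines.
\begin{itemize}
\item On $b=0$, $a\geq0$, we get the ray $\mathbb R_{\geq0}[2]$.
\item On $a+b=0$, $b\geq0$, we get the ray $\mathbb R_{\geq0}([1^2]-[2])$.
\end{itemize}
Explicitly, any element decomposes as $a[2]+b[1^2]=(a+b)[2]+b([1^2]-[2])$ with both coefficients nonnegative. Neither generator is a nonnegative multiple of the other, so both rays are extremal.

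No quotient by the kernel of the evaluation map is needed in these degrees. The evaluation map is injective already in dimension $d=2$: the values $t=1$ and $t=2$ separate $[2]$ from $[1^2]$, so there are no nontrivial trace identities to worry about.

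I expect no genuine obstacle. The only point requiring care is the exact range of $t$, in particular that the infimum $1$ is attained (by rank-one vectors) and that the range is unbounded above only in the dimension-free limit. This is what changes the ray $[1^2]$ of $\ineqall_2$ into $[1^2]-[2]$ here.
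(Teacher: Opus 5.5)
Your proof is correct and follows essentially the paper's route: reduce dimension-free positivity to nonnegativity of an affine function of a single power-sum ratio on its attainable range. The paper normalizes $p_1=1$ and uses $p_2\in[0,1]$ (in the closure), whereas you use the reciprocal ratio $p_1^2/p_2\in[1,\infty)$; both yield the same conditions $b\geq0$ and $a+b\geq0$, and hence the same two extremal rays.
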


\begin{proof}
Degree one is immediate.  In degree two, normalize $p_1=1$.  Then $0\leq p_2\leq1$ in the dimension-free closure.  An element $a[2]+b[1^2]$ is nonnegative exactly when
$$
a t+b\geq0\qquad\text{for all }t\in[0,1].
$$
This is equivalent to $b\geq0$ and $a+b\geq0$, whose two boundary rays are $[2]$ and $[1^2]-[2]$.
\end{proof}

\begin{remark}
By \cref{prop:all-small-degrees}, the dimension-free Hermitian cone in degree
two is
$$
\ineqall_2=\cone\{[2],[1^2]\},
\quad \text{whereas} \quad
\ineqpsd_2=\cone\{[2],[1^2]-[2]\}.
$$
On the trace-one positive slice, $p_2$ ranges over $[0,1]$ in the
dimension-free closure, which produces the mixed PSD inequality
$p_1^2-p_2\geq0$.  Signed spectra instead make $p_1^2/p_2$ arbitrarily large, making two Hermitian rays the manifest squares
$\sum_i x_i^2$ and $(\sum_i x_i)^2$; see \cref{fig:all-vs-psd-deg-2}.
\end{remark}

\begin{figure}
\centering
\includegraphics[scale=.75]{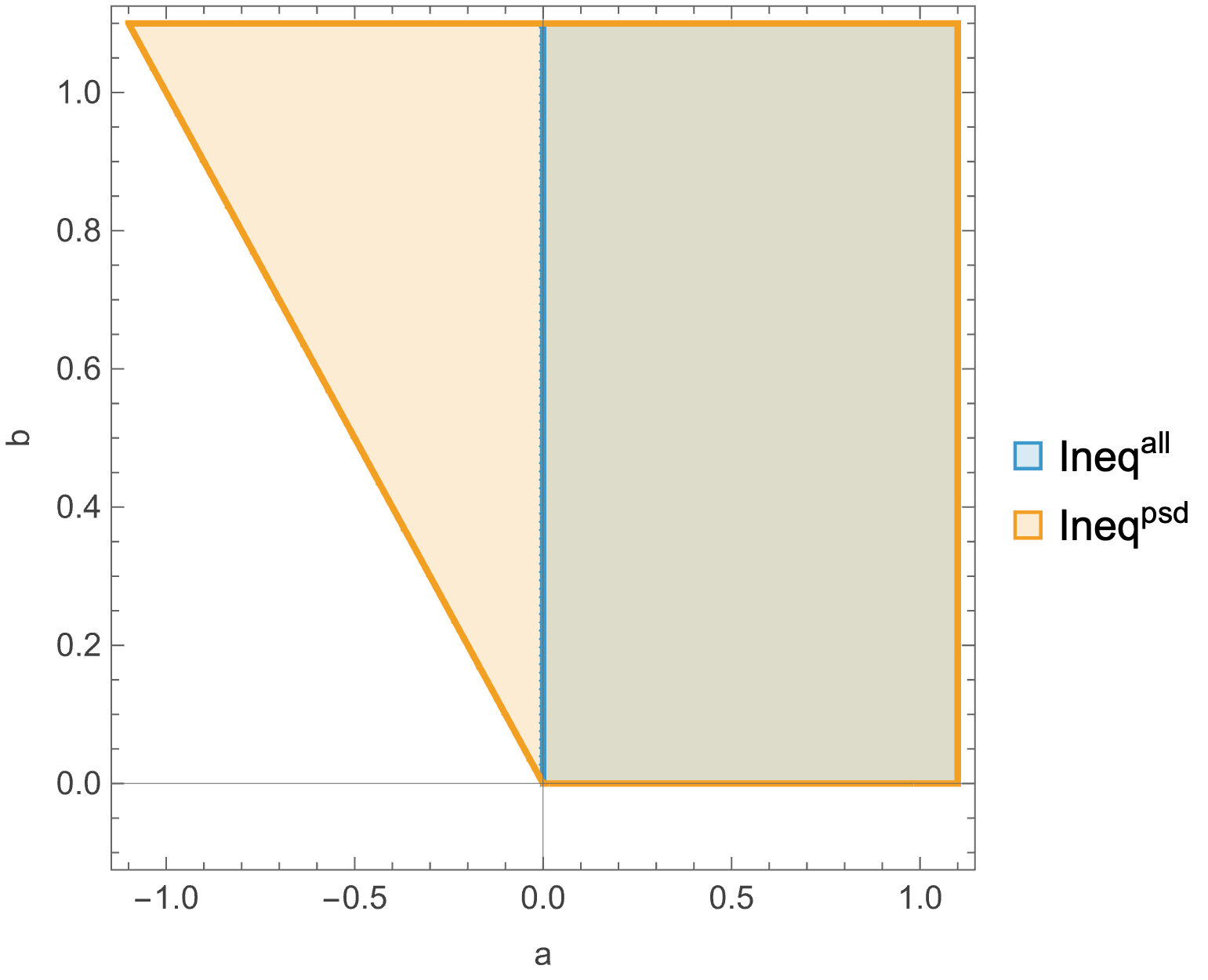}
\caption{Comparison of the Hermitian cone $\ineqall_2$ with the PSD cone $\ineqpsd_2$ in degree two. The two cones are parametrizes by $a[2] + b[1^2]$.}
\label{fig:all-vs-psd-deg-2}
\end{figure}

\begin{proposition}
\label{prop:ineqpsd3-extremal-rays}
The extremal rays of $\ineqpsd_3$ are
$$
\mathbb R_{\geq0}[3], \qquad \mathbb R_{\geq0}\bigl([21]-[3]\bigr),
$$
and, for every integer $k\geq1$,
\begin{equation}\label{eq:ineq-3-k}
\mathbb R_{\geq0}
\left(
[1^3]-(2k+1)[21]+k(k+1)[3]
\right).
\end{equation}
\end{proposition}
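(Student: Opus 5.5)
Normalize $p_1=1$ and work in the plane of coordinates $(u,v)=(p_2,p_3)$. An element $f=a[3]+b[21]+c[1^3]$ lies in $\ineqpsd_3$ exactly when the affine function $c+bu+av$ is nonnegative on the dimension-free moment set
$$
K=\overline{\bigl\{(p_2(x),p_3(x)) : d\geq1,\ x_i\geq0,\ \textstyle\sum_i x_i=1\bigr\}}.
$$
Homogeneity of degree three makes this normalization harmless, and $p_1=0$ forces $x=0$. Hence $\ineqpsd_3$ is the cone of affine functions nonnegative on $\conv K$. Its extremal rays are therefore the supporting lines of the facets (and other exposed boundary pieces) of $\conv K$. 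The plan is to determine $\conv K$ explicitly.

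\textbf{Step 1: $\conv K$ is a polygon.} The map $x\mapsto(p_2,p_3)$ is linear in the measure $\mu=\sum_i x_i\delta_{x_i}$, a probability measure on $[0,1]$ with $p_2=\int t\,d\mu$ and $p_3=\int t^2\,d\mu$. Since the set of admissible $\mu$ is not convex, $\conv K$ is the convex hull of the images of its extreme configurations. I claim these are the uniform vectors $(1/k,\ldots,1/k)$, with image $P_k=(1/k,1/k^2)$, together with the limit point $P_\infty=(0,0)$.

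\emph{Lower bound.} For a linear functional $\ell(u,v)=\beta u+\alpha v$, I would minimize $\sum_i(\beta x_i^2+\alpha x_i^3)$ on the simplex (in any dimension) via Lagrange/KKT conditions. At a critical point, the nonzero coordinates satisfy a quadratic equation, so they take at most two distinct values. An exchange or second-order argument then shows the optimum is a uniform vector or the limit $0$. A cleaner route is to reduce to checking that each claimed ray is nonnegative; see Step 2.

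\emph{Upper bound.} The points $P_k$ are attained, and $P_\infty$ lies in the closure.

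The $P_k$ lie on the convex parabola $v=u^2$, so all of them are vertices of the hull. Moreover $P_1=(1,1)$, and every point of $K$ satisfies $v\leq u$, since $p_3\leq p_2$ when $x_i\leq1$.

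\textbf{Step 2: read off the facets.} The facets of $\conv\{P_\infty,P_1,P_2,\ldots\}$ are:
\begin{itemize}
\item the segment $P_\infty P_1$ on the line $v=u$, giving $[21]-[3]\geq0$;
\item the accumulation at $P_\infty$ along $v=0$, giving $[3]\geq0$ (a supporting line through $P_\infty$ tangent to the parabola);
\item the chords $P_kP_{k+1}$.
\end{itemize}
The line through $(1/k,1/k^2)$ and $(1/(k+1),1/(k+1)^2)$ is $v=\bigl(\tfrac1k+\tfrac1{k+1}\bigr)u-\tfrac{1}{k(k+1)}$. Multiplying by $k(k+1)$ gives
$$
1-(2k+1)u+k(k+1)v\geq0,
$$
which is exactly \eqref{eq:ineq-3-k}.

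For validity, I would prove the inequality directly. Using $\sum_i x_i=1$, write
$$
1-(2k+1)p_2+k(k+1)p_3=\sum_i x_i\bigl(1-(2k+1)x_i+k(k+1)x_i^2\bigr)+\Bigl(1-\sum_i x_i\Bigr)
$$
in the homogenized sense. This seems insufficient on its own, since the quadratic $1-(2k+1)t+k(k+1)t^2=(1-kt)(1-(k+1)t)$ is negative for $t\in(1/(k+1),1/k)$.

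The key step, and the main obstacle, is to handle this interval. The constraint $\sum_i x_i=1$ allows at most $k$ coordinates exceeding $1/(k+1)$. I would prove nonnegativity by minimizing over the simplex: by the KKT reduction to two-valued vectors (with $a$ copies of value $s$ and $b$ copies of value $t$), the problem becomes a finite check. Alternatively, I would use the known integer Schur-type argument and verify the two-valued case directly.

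\textbf{Step 3: extremality and completeness.} Each listed inequality vanishes on two distinct points of $K$ (or of its closure), so each is a facet and hence extremal. Completeness follows because the hull is exactly the polygon bounded by these facets. Since there are infinitely many facets, this also gives the non-polyhedrality of $\ineqpsd_3$.
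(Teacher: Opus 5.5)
You have the right target, and the pieces you do prove are correct: the normalization, the inclusion $P_k\in K$, the bound $v\le u$, the candidate hull $\overline{\conv\{P_\infty,P_1,P_2,\dots\}}$, and the chord equations. This is exactly the picture the paper establishes. The gap is the reverse inclusion $K\subseteq\overline{\conv\{P_\infty,P_1,P_2,\dots\}}$, i.e.\ that $1-(2k+1)p_2+k(k+1)p_3\ge0$ holds on all normalized nonnegative spectra. That inequality is the real content of the proposition. Without it you have shown neither that the chord rays lie in $\ineqpsd_3$ nor that they are facets of $\conv K$. You flag it as the main obstacle but do not resolve it. The statement ``the optimum is a uniform vector'' is only asserted. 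The KKT reduction to two-valued vectors does not leave a finite check: it leaves all multiplicities $(a,b)$ together with a continuous parameter. Since the termwise bound $x_i(1-kx_i)(1-(k+1)x_i)\ge0$ fails on $(1/(k+1),1/k)$, as you observed, the integrality of the multiplicities has to be used quantitatively somewhere.

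The paper closes this with Timofte's half-degree principle, which reduces to two-level spectra. It then uses the explicit minimum of $p_3$ at fixed $p_2$ (from Neven et al.) and shows that this minimum lies above the chord $P_{j+1}P_j$. Your KKT route can be completed by adding two ingredients.
\begin{enumerate}
\item The second-order condition on $\{h:\sum_i h_i=0\}$ for $g(x)=x-(2k+1)x^2+k(k+1)x^3$ forbids two support coordinates with $g''<0$. The two roots of $g'=\lambda$ straddle the inflection point, so a minimizer has the form $(t,\dots,t,s,0,\dots,0)$, with $b$ copies of $t$ and at most one $s=1-bt\in[0,t)$.
\item Set $u:=(b+1)t-1\in[0,1/b]$. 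Then $p_2=(1+bu^2)/(b+1)\in[1/(b+1),1/b]$ and
\[
1-(2b+1)p_2+b(b+1)p_3=\frac{b(b-1)}{b+1}\,u^2(1-bu)\ \ge 0 .
\]
So the point lies on or above the chord $P_{b+1}P_b$. On that chord every chord functional is nonnegative, since its value at $P_m$ is $(m-k)(m-k-1)/m^2$, and each has positive $v$-coefficient.
\end{enumerate}

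Step 3 also needs repair for $[3]$. Since $p_2^2\le p_1p_3$, the function $v$ vanishes on $\overline{\conv K}$ only at $P_\infty$, so it is not a facet normal, contrary to your claim that every listed inequality vanishes at two points. Its extremality, and completeness at $P_\infty$, follow from a normal-cone argument: any nonnegative affine function vanishing at $P_\infty$ is $\beta u+\gamma v=\beta(u-v)+(\beta+\gamma)v$ with $\beta,\beta+\gamma\ge0$. At a finite vertex $P_k$, you similarly need to note that a supporting function touching only there is a nonnegative combination of the two adjacent edge normals.
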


\begin{proof}
Let
$$
f=a[1^3]+b[21]+c[3].
$$
For a positive semidefinite matrix with eigenvalues $x_i\geq0$, this is the
symmetric homogeneous cubic
$$
f(x)=a p_1(x)^3+b p_1(x)p_2(x)+c p_3(x).
$$
Timofte's half-degree principle for symmetric polynomial inequalities on
the nonnegative orthant \cite[Theorem 5.1]{timofte2003positivity} says that a
symmetric inequality of degree $d$ is valid if and only if it is valid on
points having at most $\max\{\lfloor d/2\rfloor,2\}$ distinct components,
where zero is counted as a component.  Thus, in degree $3$, the test points
may have two distinct component values.

Normalize a nonzero two-level test point by $p_1=1$, and write $y=p_2$, $z=p_3$.
Then $0<y\leq1$.  Choose an integer $j\geq1$ such that
$$
\frac1{j+1}\leq y\leq\frac1j.
$$
The fixed-moment calculation in the following remark gives
$$
z\geq z_j^{\min}(y)\geq L_j(y),
\qquad
L_j(y):=\frac{(2j+1)y-1}{j(j+1)}.
$$
Indeed, with $s=\sqrt{j((j+1)y-1)}\in[0,1]$, the difference is
$$
z_j^{\min}(y)-L_j(y)
=\frac{(j-1)s^2(1-s)}{j^2(j+1)^2}\geq0.
$$

We claim that this implies, for every integer $k\geq1$,
\begin{equation}\label{eq:cubic-chord-halfspace}
1-(2k+1)y+k(k+1)z\geq0.
\end{equation}
Since the coefficient of $z$ is positive, it is enough to replace $z$ by
$L_j(y)$.  The resulting expression
$$
1-(2k+1)y+k(k+1)L_j(y)
$$
is affine in $y$, so its minimum on
$[1/(j+1),1/j]$ is attained at an endpoint.  At $y=1/m$, where
$m\in\{j,j+1\}$, one has $L_j(1/m)=1/m^2$, and hence
$$
1-\frac{2k+1}{m}+\frac{k(k+1)}{m^2}
=\frac{(m-k)(m-k-1)}{m^2}\geq0.
$$
The last inequality holds because $m-k$ and $m-k-1$ are consecutive
integers.  This proves \eqref{eq:cubic-chord-halfspace} on the two-level test
set.  After undoing the normalization, it is the homogeneous cubic inequality
$$
p_1^3-(2k+1)p_1p_2+k(k+1)p_3\geq0.
$$
Timofte's principle therefore proves it for every nonnegative spectrum.
Moreover, $z\geq0$ and $y-z\geq0$, since normalization gives
$0\leq x_i\leq1$ and hence $x_i^3\leq x_i^2$.

We now identify the moment cone.  If
$$
M_3(X):=\bigl(p_1(X)^3,\ p_1(X)p_2(X),\ p_3(X)\bigr),
$$
then $f(X)=\langle f,M_3(X)\rangle$ in the dual bases
$([1^3],[21],[3])$ and $(p_1^3,p_1p_2,p_3)$.  Consider the normalized slice
$$
K_3:=\{(y,z):(1,y,z)\in\mompsd_3\}
$$
and the closed region
$$
H_3:=\left\{(y,z):
z\geq0,\quad y-z\geq0,\quad
1-(2k+1)y+k(k+1)z\geq0\ \text{for every }k\geq1
\right\}.
$$
The preceding inequalities hold on every normalized evaluation point and
hence on their closed convex hull, so $K_3\subseteq H_3$.

For the reverse inclusion, consider the uniform rank-$m$ spectra
\begin{equation}\gamma_m := \label{eq:uniform-spectrum}
\left(\frac1m,\ldots,\frac1m,0,\ldots,0\right),
\end{equation}
whose normalized moment points are
$$
v_m:=\left(\frac1m,\frac1{m^2}\right),
\qquad m=1,2,\ldots.
$$
As the matrix size is arbitrary, all $m\geq1$ occur, and the limit
$m\to\infty$ gives $v_\infty=(0,0)$.  Since $K_3$ is closed and convex, we
therefore have
$$
C_3:=\overline{\conv\{v_m:m=1,2,\ldots\}}\subseteq K_3.
$$
We show that $H_3=C_3$.  The upper edge of $C_3$ is the segment from
$v_\infty=(0,0)$ to $v_1=(1,1)$, supported by $y-z$.  For each $k\geq1$,
the identity
$$
1-\frac{2k+1}{m}+\frac{k(k+1)}{m^2}
=\frac{(m-k)(m-k-1)}{m^2}\geq0
$$
shows that the line
$$
1-(2k+1)y+k(k+1)z=0
$$
supports $C_3$ along the lower edge $[v_{k+1},v_k]$.  Finally, $z=0$ is the
limiting supporting line at $v_\infty$.  Thus $C_3\subseteq H_3$.

These supporting lines also exhaust the boundary.  Indeed, let
$(y,z)\in H_3$.  The inequality with $k=1$, together with $z\leq y$, implies
$y\leq1$.  If $y=0$, then $z=0$.  If $0<y\leq1$, choose $j\geq1$ such that
$$
\frac1{j+1}\leq y\leq\frac1j.
$$
The $j$th lower inequality gives $z\geq L_j(y)$.  The point
$(y,L_j(y))$ lies on $[v_{j+1},v_j]$, while $(y,y)$ lies on
$[v_\infty,v_1]$.  Since also $z\leq y$, the point $(y,z)$ lies on the
vertical segment joining these two points.  Both endpoints belong to the
convex set $C_3$, so $(y,z)\in C_3$.  Hence $H_3\subseteq C_3$.

We have proved
$$
C_3\subseteq K_3\subseteq H_3=C_3,
$$
and therefore all three sets coincide.  In particular,
$$
\mompsd_3
=\overline{\cone\left\{
\left(1,\frac1m,\frac1{m^2}\right):m=1,2,\ldots
\right\}},
$$
where the closure includes $(1,0,0)$.  Consequently
$\ineqpsd_3=(\mompsd_3)^\circ$ is the cone of affine functions
$$
\ell(y,z)=a+b\,y+c\,z
$$
which are nonnegative on $K_3$.

We can now translate the boundary supports back into trace polynomials.  The
three types of support found above give
$$
z
\quad\Longleftrightarrow\quad
[3],
$$
$$
y-z
\quad\Longleftrightarrow\quad
[21]-[3],
$$
and, for $k\geq1$,
$$
1-(2k+1)y+k(k+1)z
\quad\Longleftrightarrow\quad
[1^3]-(2k+1)[21]+k(k+1)[3].
$$
Each of these spans an extremal ray of the cone of affine functions
nonnegative on $K_3$, because each is an extreme boundary normal of the
closed convex set $K_3$: the first is the limiting lower normal at $v_\infty$,
the second is the normal to the upper edge, and the third is the normal to one
lower edge.  It remains to check that a normal supported only at a vertex
does not give a new extremal ray.

Let $\ell$ span an extremal ray of the affine cone.  If $\ell$ is
strictly positive on $K_3$, then it is an interior point and cannot be
extremal.  Hence $\ell$ supports $K_3$.  If its zero set contains a relative
interior point of an edge, then $\ell$ is proportional to that edge's
supporting affine function, already listed.  If $\ell$ vanishes only at a
finite vertex $v_k$, then its normal lies in the two-dimensional normal cone
generated by the two adjacent edge normals.  At $v_1$ these are $y-z$ and
$1-3y+2z$; at $v_k$ for $k\geq2$ they are the lower-edge normals for
$[v_{k-1},v_k]$ and $[v_k,v_{k+1}]$.  Such a ray is therefore not new unless
it is one of the adjacent edge rays.

Finally, if $\ell$ vanishes only at $v_\infty$, then
$\ell(y,z)=b\,y+c\,z$.
Nonnegativity on the reciprocal-rank points, for which $z/y\to0$, gives
$b\geq0$, and nonnegativity on the upper edge $z=y$ gives $b+c\geq0$.
Therefore
$$
b\,y+c\,z=b(y-z)+(b+c)z,
$$
so the only extremal possibilities at $v_\infty$ are the already listed rays
$y-z$ and $z$.  This proves the displayed classification.
\end{proof}

\begin{remark}
The infinite family \eqref{eq:ineq-3-k} is closely related to, but not
identical with, the optimized condition involving moments up to degree three
derived in the \emph{Methods} section of \cite{neven2021symmetry}.  In that paper, the
first two moments are fixed and the third moment is minimized over positive
semidefinite matrices.  In particular, for
$y\in[1/(k+1),1/k]$ and normalized moments $p_1=1$, $p_2=y$, the minimum on
the two-level test set (indeed, on all nonnegative spectra) is attained at a
spectrum consisting of $k$ equal eigenvalues $\lambda_a$ and one eigenvalue
$\lambda_b$, with
\begin{equation}
\lambda_a=\frac{k+s}{k(k+1)},\qquad
\lambda_b=\frac{1-s}{k+1},\qquad
s=\sqrt{k\bigl((k+1)y-1\bigr)},
\end{equation}
and hence
\begin{equation}
z_k^{\min}(y)
=
k\lambda_a^3+\lambda_b^3
=
\frac{1}{(k+1)^2}
\left(
1+\frac{3s^2}{k}-\frac{(k-1)s^3}{k^2}
\right).
\end{equation}
For $k=1$ this reduces to $(3y-1)/2$, which is the $D_3$ condition of
\cite{neven2021symmetry}; for $k>1$ it is the exact, nonlinear lower
boundary of the feasible region in the slice $p_1=1$.

The normalized member of our family \eqref{eq:ineq-3-k} is the affine
function
\begin{equation}
L_k(y)=\frac{(2k+1)y-1}{k(k+1)}.
\end{equation}
At the endpoints $y=1/(k+1)$ and $y=1/k$, which correspond to the uniform
spectra of ranks $k+1$ and $k$, one has $z_k^{\min}(y)=L_k(y)$.  In the
interior of the interval,
\begin{equation}
z_k^{\min}(y)-L_k(y)
=
\frac{(k-1)s^2(1-s)}{k^2(k+1)^2}
>
0
\qquad
\text{for } k>1,
\end{equation}
so the exact optimized \emph{curve} lies strictly above the chord joining the two
adjacent uniform moment points.  Thus the formulas are not the same: the
coefficients $(2k+1)$ and $k(k+1)$ coincide because both constructions are
governed by the same rank transition, but \eqref{eq:ineq-3-k} is the
homogeneous linear inequality whose zero set is that chord, whereas the
condition of \cite{neven2021symmetry} is the nonlinear pointwise minimum.
In particular, the latter is not an element of $\ineqpsd_3$: it cannot be
written as a single homogeneous cubic inequality, but only as a
piecewise-defined necessary condition depending on the value of $p_2$. Our (linear) inequalities capture the fact that the set $\ineqpsd_3$ is dual to the moment set $\mompsd_3$: the functions $z_k^{\min}(y)$ capture the exact boundary of the set of tuples of moments, while $L_k(y)$ correspond to its convex hull. For similar considerations related to the realignment criterion, see \cref{sec:moment-interpolation} and \cref{fig:alpha-r1-r2}.
\end{remark}

The dual set cone can be readily obtained, see \cref{fig:mom-psd-deg-3}.

\begin{figure}
\centering
\includegraphics[scale=.75]{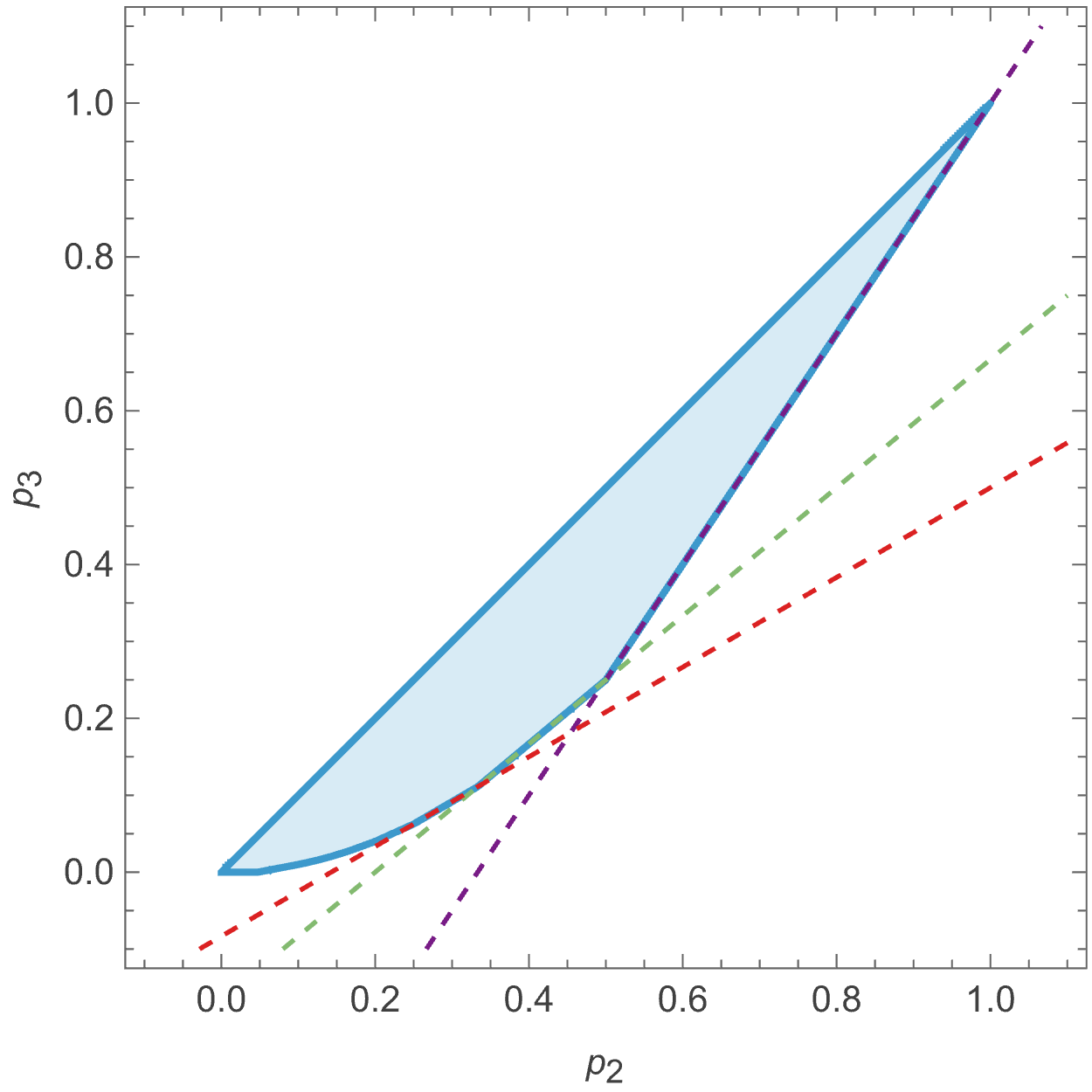}
\caption{The slice $p_1=1$ through $\mompsd_3$. The top boundary corresponds to the extremal ray $p_1p_2 \geq p_3$ of $\ineqpsd_3$. The bottom boundary is a union of infinitely many segments correspodning to the extremal inequalities $p_1^3 - (2k+1)p_1p_2 + k(k+1)p_3 \geq 0$, for $k\geq1$; the first three such support inequalities are shown using colored dashed lines. The inflection points correspond to the uniform spectra from \cref{eq:uniform-spectrum}.}
\label{fig:mom-psd-deg-3}
\end{figure}

\begin{corollary}
Let $(x,y,z)$ denote the dual coordinates to the ordered basis
$([1^3],[21],[3])$, so that
$$
\left\langle
a[1^3]+b[21]+c[3],(x,y,z)
\right\rangle
=
a x+b y+c z.
$$
Then
$$
\mompsd_3
=
\left\{
(x,y,z):
\begin{array}{l}
z\geq0,\quad y-z\geq0,\\
x-(2k+1)y+k(k+1)z\geq0\ \text{for all }k\geq1
\end{array}
\right\}.
$$
\end{corollary}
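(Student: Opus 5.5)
The plan is to deduce the corollary from the proof of \cref{prop:ineqpsd3-extremal-rays}, which already established $C_3=K_3=H_3$ on the normalized slice $x=1$. Write $H$ for the cone on the right-hand side. We need $\mompsd_3=H$, and we prove the two inclusions separately.

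\emph{Inclusion $\mompsd_3\subseteq H$.} Each defining linear form of $H$ is the pairing with one of the trace polynomials $[3]$, $[21]-[3]$, and $[1^3]-(2k+1)[21]+k(k+1)[3]$. The proposition showed that each of these lies in $\ineqpsd_3=(\mompsd_3)^\circ$. Hence each form is nonnegative on $\mompsd_3$, and the inclusion follows.

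\emph{Inclusion $H\subseteq\mompsd_3$.} Let $(x,y,z)\in H$, and split into cases according to $x$.

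\textbf{Case $x>0$.} Dividing by $x$ gives $(1,y/x,z/x)\in H$, so $(y/x,z/x)\in H_3=K_3$. By definition of $K_3$ this means $(1,y/x,z/x)\in\mompsd_3$. Scaling by $x$ then gives $(x,y,z)\in\mompsd_3$.

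\textbf{Case $x\le 0$.} We show this forces $y=z=0$ and $x=0$, so the point is the origin, which lies in $\mompsd_3$. Take the $k$-th inequality and divide by $k(k+1)$:
\[
z\ \ge\ \frac{(2k+1)y-x}{k(k+1)}.
\]
Combine this with $0\le z\le y$ and let $k\to\infty$.

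From $z\ge0$ and $y\ge z$ we get $y\ge0$. The $k=1$ inequality reads $x-3y+2z\ge0$. Using $z\le y$, this gives $x\ge 3y-2z\ge y\ge 0$. Together with $x\le 0$ this yields $y=0$, hence $z=0$, and then $x\ge0$ forces $x=0$.

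The only points needing care are this degenerate ray $x=0$ and the role of closedness. No compactness argument is required, because the slice identification $K_3=H_3$ was already proved with closures included, for example the limit point $v_\infty=(0,0)$.
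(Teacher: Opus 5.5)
Your proof is correct, but it takes a different route from the paper. The paper argues in one line by bipolar duality. Since $\mompsd_3$ is a closed convex cone, $\mompsd_3=(\ineqpsd_3)^\circ$. Since $\ineqpsd_3$ is generated by its extremal rays, this polar is cut out by the half-spaces dual to the rays listed in \cref{prop:ineqpsd3-extremal-rays}. The slice $x=1$ is then described afterwards as an illustration. That route uses the full strength of the proposition, namely that the list of extremal rays is \emph{exhaustive}. It also uses the fact that a closed pointed cone is the conic hull of its extreme rays.

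You instead prove the two inclusions directly. For $\mompsd_3\subseteq H$ you only need that the listed forms are valid, i.e.\ that they lie in $\ineqpsd_3$. For $H\subseteq\mompsd_3$ you use the slice identity $K_3=H_3$ established inside the proof of the proposition, together with homogeneity. So the extremality and completeness of the list of rays are never invoked, which makes your argument more self-contained. The price is that you must treat the points of $H$ with $x\le0$ by hand, which the duality argument handles automatically. Your treatment of that case is right, and it only uses the $k=1$ inequality together with $z\le y$ to get $x\ge y\ge0$. The remark about letting $k\to\infty$ is superfluous and can be dropped: in the limit it only yields $z\ge0$, which you already have.
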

\begin{proof}
By the general duality $\mompsd_3=(\ineqpsd_3)^\circ$, this is the intersection of
the half-spaces dual to the extremal rays found in the proposition. The normalized slice used above can be understood as follows. If $p_1(X)>0$ and
$\lambda_i=x_i/p_1(X)$, then
$$
\bigl(p_1(X)^3,\ p_1(X)p_2(X),\ p_3(X)\bigr)
=
p_1(X)^3
\left(
1,\sum_i\lambda_i^2,\sum_i\lambda_i^3
\right),
$$
so the nonzero part of the evaluation cone is exactly the cone over
$\{1\}\times K_3$.

For
$\rho\succeq 0$ with $\Tr\rho=p_1(\rho)=1$, the slice $x=1$ of $\mompsd_3$ is
$$
\left\{
(1,y,z):
z\geq0,\quad y-z\geq0,\quad
1-(2k+1)y+k(k+1)z\geq0\ \text{for all }k\geq1
\right\},
$$
where $y=\Tr(\rho^2)$ and $z=\Tr(\rho^3)$.  Equivalently,
$$
\{(y,z):(1,y,z)\in\mompsd_3\}
=
\overline{\conv\left\{
(0,0),\ (1,1),\
\left(\frac1m,\frac1{m^2}\right)\ (m=2,3,\ldots)
\right\}}.
$$
Its upper boundary is the segment $z=y$ from $(0,0)$ to $(1,1)$, and its lower
boundary consists of the segments
$$
z=
\frac{(2k+1)y-1}{k(k+1)},
\qquad
\frac1{k+1}\leq y\leq\frac1k,
\qquad k=1,2,\ldots,
$$
joining the adjacent uniform-rank points
$(1/(k+1),1/(k+1)^2)$ and $(1/k,1/k^2)$.
\end{proof}

\subsection{Some properties of degree four inequalities}

In this section we prove the extremality of a classical inequality that comes from a Hankel determinant. Consider the Hankel ray $p_1p_3-p_2^2$ is an extremal ray of $\ineqpsd_4$. This ray is particularly important in the context of entanglement detection in quantum information theory using the moments of the partial transposition \cite{elben2020mixed,carrasco2024entanglement}.

\begin{proposition}
\label{prop:delta13-extremal}
Let
$$
\Delta_{13}:=p_1p_3-p_2^2=[31]-[22].
$$
Then $\Delta_{13}\in\ineqpsd_4\setminus\ineqall_4$ and $\mathbb R_{\geq0}\Delta_{13}$
is an extremal ray of $\ineqpsd_4$.
\end{proposition}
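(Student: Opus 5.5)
The plan has three parts: establish membership in $\ineqpsd_4$, establish non-membership in $\ineqall_4$, and prove extremality. The first two are essentially recorded earlier in this section. Membership follows from the identity
$$
p_1p_3-p_2^2=\sum_{i<j}x_ix_j(x_i-x_j)^2,
$$
which is nonnegative whenever all $x_i\geq0$. Alternatively, it is the $2\times 2$ principal minor of the shifted Hankel matrix $H_1(X)$ from the Stieltjes discussion in \cref{sec:hankel-determinants}. For non-membership in $\ineqall_4$, we use the signed spectrum $(1,-1)$, at which $\Delta_{13}=-4<0$.

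For extremality, I would use the standard face argument. Suppose $\Delta_{13}=f+g$ with $f,g\in\ineqpsd_4$. It suffices to show that $f$ is a nonnegative multiple of $\Delta_{13}$. The key input is the zero set of $\Delta_{13}$ on nonnegative spectra. By the displayed identity, or by the equality case of Cauchy--Schwarz, $\Delta_{13}$ vanishes on every uniform spectrum $\gamma_m$ from \cref{eq:uniform-spectrum}, for every $m\geq1$. At these points we have $0\leq f(\gamma_m)$, $0\leq g(\gamma_m)$, and $f(\gamma_m)+g(\gamma_m)=0$. Hence $f(\gamma_m)=0$ for all $m\geq1$.

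Next, write
$$
f=a_4[4]+a_{31}[31]+a_{22}[22]+a_{211}[211]+a_{1^4}[1^4].
$$
At $\gamma_m$ we have $p_k=m^{1-k}$, so
$$
f(\gamma_m)=a_{1^4}+a_{211}\,m^{-1}+(a_{31}+a_{22})\,m^{-2}+a_4\,m^{-3}.
$$
This is a polynomial in $1/m$ that vanishes for infinitely many values of $m$, so all of its coefficients are zero. This gives $a_4=a_{211}=a_{1^4}=0$ and $a_{22}=-a_{31}$, that is, $f=a_{31}\Delta_{13}$.

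Finally, evaluating $f$ at any spectrum where $\Delta_{13}>0$, for instance $(1,2)$, forces $a_{31}\geq0$. By the same argument $g$ is also a nonnegative multiple of $\Delta_{13}$, which proves that $\mathbb R_{\geq0}\Delta_{13}$ is an extremal ray. One should also remark that $V_4$ is faithfully represented by dimension-free evaluations, since the power-sum products are linearly independent as symmetric functions. Without this, extremality would have to be understood modulo a kernel; in the dimension-free setting there is no kernel.

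I do not expect any step to be a serious obstacle. The only point requiring care is the observation that the one-parameter family of uniform spectra already cuts out a one-dimensional space of quartics. This works because evaluation at $\gamma_m$ collapses $[31]$ and $[22]$ onto the same power $m^{-2}$, while the other three monomials give distinct powers of $1/m$. If this had failed, the fallback would be to impose first-order vanishing conditions, differentiating $f$ at the zeros $\gamma_m$ along perturbations of the spectrum. Here those conditions are unnecessary.
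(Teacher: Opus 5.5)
Your proposal is correct and follows essentially the same route as the paper. It uses the identity $\Delta_{13}=\sum_{i<j}x_ix_j(x_i-x_j)^2$ for membership, the spectrum $(1,-1)$ for non-membership, and for extremality forces any summand to vanish on the uniform spectra $I_m/m$, reads off the coefficients of a cubic in $1/m$, and fixes the sign at $\diag(1,2)$. Your remark on the linear independence of the power-sum products, which makes the cone pointed with no kernel to quotient, is a small clarification that the paper leaves implicit.
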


\begin{proof}
The quantity $\Delta_{13}$ is precisely the determinant of the first
$2\times2$ odd-shifted Hankel block discussed in \cref{sec:hankel-determinants}:
$$
\Delta_{13}
=
\det\begin{pmatrix}p_1&p_2\\ p_2&p_3\end{pmatrix}.
$$
One can also directly prove that it correponds to a valid inequality on non-negative spectra:
\begin{align*}
\Delta_{13}(X)
&=
\left(\sum_i x_i\right)\left(\sum_i x_i^3\right)
-\left(\sum_i x_i^2\right)^2\\
&=
\sum_{i<j}\left(x_i x_j^3+x_j x_i^3-2x_i^2x_j^2\right)\\
&=
\sum_{i<j}x_ix_j(x_i-x_j)^2 \geq 0.
\end{align*}

The inequality is not valid on all Hermitian matrices.  Indeed, for
$X=\diag(1,-1)$ one has $p_1(X)=p_3(X)=0$ and $p_2(X)=2$, and hence
$\Delta_{13}(X)=-4<0$. Thus $\Delta_{13}\notin\ineqall_4$.

It remains to prove extremality.  Suppose that
$\Delta_{13}=g+h$ with $g,h\in\ineqpsd_4$. Using the partition basis of $V_4$ from \cref{eq:def-partition-space}, write
$$
g
=
a p_4+b p_1p_3+c p_2^2+d p_1^2p_2+e p_1^4
$$
for some $a,b,c,d,e\in\mathbb R$.  For every integer $m\geq1$, consider the
uniform spectrum
$
X_m:=I_m / m$. Its power sums are
$
p_k(X_m)=m^{1-k}$; in particular,
$$
p_1(X_m)=1,\qquad
p_2(X_m)=\frac1m,\qquad
p_3(X_m)=\frac1{m^2},\qquad
p_4(X_m)=\frac1{m^3}.
$$
Consequently $\Delta_{13}(X_m)=0$.  Since $g(X_m)$ and $h(X_m)$ are
nonnegative and have sum zero, they must both be zero.  In particular,
$$
0=g(X_m)
=
\frac{a}{m^3}+\frac{b+c}{m^2}+\frac{d}{m}+e
\qquad(m=1,2,\ldots).
$$
The polynomial
$q(t):=a t^3+(b+c)t^2+d t+e$
therefore vanishes at every $t=1/m$.  Hence
$a=d=e=0$ and $c=-b$.  Therefore
$$
g=b\bigl(p_1p_3-p_2^2\bigr)=b\Delta_{13}.
$$
Now $\Delta_{13}(\diag(1,2))=2>0$.  Because $g\geq0$ on positive
semidefinite matrices, this forces $b\geq0$.  Moreover,
$$
h=\Delta_{13}-g=(1-b)\Delta_{13},
$$
and the nonnegativity of $h$ at $\diag(1,2)$ forces $1-b\geq0$.  Thus every
decomposition of $\Delta_{13}$ into two elements of $\ineqpsd_4$ has both
summands on the ray $\mathbb R_{\geq0}\Delta_{13}$, which is precisely
the definition of an extremal ray.
 \end{proof}

It is natural to question whether all $2 \times 2$ Hankel inequalities are extremal elements in $\ineqpsd_n.$ Unfortunately, this is not true in general.
\begin{proposition}
    Let
$$
\Delta_{15}:=p_1p_5-p_3^2=[51]-[33].
$$
Then $\Delta_{15}\in\ineqpsd_6\setminus\ineqall_6$ and $\mathbb R_{\geq0}\Delta_{15}$
is \underline{not} extremal in $\ineqpsd_6$.
\end{proposition}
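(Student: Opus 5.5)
The plan is to exhibit an explicit nontrivial splitting of $\Delta_{15}$ into two elements of $\ineqpsd_6$, neither proportional to $\Delta_{15}$. The natural candidate interpolates through the middle monomial $[42]=p_2p_4$:
$$
\Delta_{15}=\bigl(p_1p_5-p_2p_4\bigr)+\bigl(p_2p_4-p_3^2\bigr).
$$
Since the partition monomials $[51],[42],[33]$ are linearly independent in $V_6$, neither summand lies on $\mathbb R_{\geq0}\Delta_{15}$. Once both summands are shown to lie in $\ineqpsd_6$, non-extremality follows directly from the definition of an extremal ray.

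First, I would verify membership of each piece in $\ineqpsd_6$.

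\emph{Second summand.} The term $p_2p_4-p_3^2$ is the even-shift Hankel determinant of \cref{sec:all}. It is the Cauchy--Schwarz inequality for the vectors $(x_i)_i$ and $(x_i^2)_i$. It therefore lies in $\ineqall_6\subseteq\ineqpsd_6$.

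\emph{First summand.} For $p_1p_5-p_2p_4$ I would symmetrize the double sum, as was done for $\Delta_{13}$:
$$
p_1p_5-p_2p_4=\sum_{i<j}x_ix_j\bigl(x_i^4+x_j^4-x_ix_j^3-x_i^3x_j\bigr)=\sum_{i<j}x_ix_j(x_i-x_j)(x_i^3-x_j^3).
$$
Each term is nonnegative for $x_i,x_j\geq0$, because $t\mapsto t^3$ is increasing.

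\emph{Validity of $\Delta_{15}$ itself.} The membership $\Delta_{15}\in\ineqpsd_6$ then follows as the sum of the two pieces. Alternatively, it is the Cauchy--Schwarz inequality for $(x_i^{1/2})_i$ and $(x_i^{5/2})_i$, i.e.\ the Stieltjes Hankel minor with rows $\{0,2\}$ of $H_1$.

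\emph{Failure on signed spectra.} To show $\Delta_{15}\notin\ineqall_6$, I need one signed spectrum where it is negative. The test point $(1,-1)$ used for $\Delta_{13}$ gives only $0$, so I would use $x=(1,-2)$ instead. There $p_1=-1$, $p_3=-7$ and $p_5=-31$, so $\Delta_{15}=31-49=-18<0$.

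The only step requiring real thought is finding a decomposition that works; the checks themselves are routine. The key observation is that, unlike $\Delta_{13}$, the zero set of $\Delta_{15}$ on nonnegative spectra does not force uniqueness. The uniform spectra annihilate both summands, so the rigidity argument of \cref{prop:delta13-extremal} has no analogue here.
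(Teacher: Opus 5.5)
Your proposal is correct and matches the paper's proof essentially verbatim: the same test point $\diag(1,-2)$ gives $\Delta_{15}=-18$, and you use the same splitting $\Delta_{15}=(p_1p_5-p_2p_4)+(p_2p_4-p_3^2)$, with the identity $\sum_{i<j}x_ix_j(x_i-x_j)(x_i^3-x_j^3)$ for the first piece and the same coefficient comparison in the partition basis. The only cosmetic difference is that you certify the second piece by Cauchy--Schwarz, which places it in $\ineqall_6$, whereas the paper writes it explicitly as $\sum_{i<j}x_i^2x_j^2(x_i-x_j)^2$.
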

\begin{proof}
    We first verify that $\Delta_{15}$ does not belong to
    $\ineqall_6$.  Consider the Hermitian matrix
    $X=\diag(1,-2)$.  Its power sums are
    $p_k(X)=1+(-2)^k$, so that
    $$
    p_1(X)=-1,\qquad p_3(X)=-7,\qquad p_5(X)=-31.
    $$
    Consequently,
    $$
    \Delta_{15}(X)=p_1(X)p_5(X)-p_3(X)^2
    =(-1)(-31)-(-7)^2=-18<0.
    $$
    Since elements of $\ineqall_6$ must be nonnegative on every
    Hermitian matrix, this proves
    $\Delta_{15}\notin\ineqall_6$.

    It remains to exhibit a nontrivial decomposition inside the PSD cone.
    Set
    $$
    A:=p_1p_5-p_2p_4,
    \qquad
    B:=p_2p_4-p_3^2.
    $$
    Clearly $\Delta_{15}=A+B$.  If $x_1,\ldots,x_d\geq0$ are the
    eigenvalues of a positive semidefinite matrix, then
    \begin{align*}
    A
    &=\sum_{i<j}x_ix_j(x_i-x_j)(x_i^3-x_j^3)\geq0,
    \\
    B
    &=\sum_{i<j}x_i^2x_j^2(x_i-x_j)^2\geq0.
    \end{align*}
    Indeed, the first summand is nonnegative because $t\mapsto t^3$ is
    increasing on $[0,\infty)$, while the second identity is the usual
    $2\times2$ Hankel-determinant expansion.  Thus
    $A,B\in\ineqpsd_6$.

    Finally, in the partition basis of $V_6$ we have
    $$
    A=[51]-[42],\qquad B=[42]-[33], \qquad
    \Delta_{15}=[51]-[33].
    $$
    Comparing coefficients shows that neither $A$ nor $B$ is
    proportional to $\Delta_{15}$.  Hence the displayed decomposition is
    nontrivial, and $\mathbb R_{\geq0}\Delta_{15}$ is not an extremal ray
    of $\ineqpsd_6$.
\end{proof}

\section{Inequalities for separable matrices}\label{sec:sep}

We now switch from one positive matrix to a bipartite matrix algebra.  Let
$H_A\simeq\mathbb C^{d_A}$ and $H_B\simeq\mathbb C^{d_B}$, and let
$\rho\in \M_{d_A}\otimes \M_{d_B}$ have matrix entries $\rho_{ia,jb}$,
where $i,j$ are $A$-indices and $a,b$ are $B$-indices.

We recall the \emph{tensor trace invariant} notation from \cite{nechita2025tensor}. For $(\sigma,\tau)\in S_n^2$, we consider the local unitary invariant
$$
\Tr_{\sigma,\tau}(\rho)
:=
\Tr_{(\sigma,\tau)}(\rho,\ldots,\rho).
$$
Using indices this is
$$
\Tr_{\sigma,\tau}(\rho)
=
\sum_{\substack{i_1,\ldots,i_n\\ a_1,\ldots,a_n}}
\prod_{r=1}^n
\rho_{i_ra_r,\,i_{\sigma(r)}a_{\tau(r)}}.
$$
Equivalently,
$$
\Tr_{\sigma,\tau}(\rho)
=
\Tr\left[(P_\sigma\otimes P_\tau)\rho^{\otimes n}\right],
$$
where $P_\sigma$ permutes the $n$ copies of the $A$ system and $P_\tau$
permutes the $n$ copies of the $B$ system.  Since permutation operators
commute with tensor powers of local unitaries, these quantities are invariant
under local unitary conjugation:
$$
\Tr_{\sigma,\tau}\bigl((U\otimes V)\rho(U\otimes V)^*\bigr)
=
\Tr_{\sigma,\tau}(\rho).
$$
For an element
\begin{equation}\label{eq:def-w-bipartite}
w=\sum_{\sigma,\tau\in S_n}c_{\sigma,\tau}(\sigma,\tau)
\in\mathbb R[S_n\times S_n],
\end{equation}
we need a real-valued polynomial in order to discuss inequalities.  Individual
tensor trace invariants need not be real, but for Hermitian $\rho$ they satisfy
$$
\overline{\Tr_{\sigma,\tau}(\rho)}
=\Tr_{\sigma^{-1},\tau^{-1}}(\rho).
$$
In what follows, we write for $w$ as in \cref{eq:def-w-bipartite}
$$
\Tr_w(\rho):=
\operatorname{Re}\!\left(
\sum_{\sigma,\tau\in S_n}c_{\sigma,\tau}
\Tr_{\sigma,\tau}(\rho)
\right).
$$

\subsection{Dimension-free separable cone}

For local dimensions $d_A,d_B$, the unnormalized separable cone is
$$
\Sep(H_A:H_B)
=
\cone\{X\otimes Y:X\in \Mpsd_{d_A},\ Y\in \Mpsd_{d_B}\}.
$$
The normalized separable states are the elements of this cone with trace $1$.

\begin{definition}
The dimension-free separability-valid trace-invariant cone in degree $n$ is
$$
\ineqsep_n
=
\left\{
w\in\mathbb R[S_n\times S_n]:
\begin{array}{l}
\Tr_w(\rho)\geq0\ \text{for every }d_A,d_B\geq1\\
\text{and every }\rho\in\Sep(\mathbb C^{d_A}:\mathbb C^{d_B})
\end{array}
\right\}.
$$
The corresponding separable moment cone is
$$
\momsep_n
:=
\overline{
\cone\left\{
\bigl(\operatorname{Re}\Tr_{\sigma,\tau}(\rho)\bigr)_{
\sigma,\tau\in S_n}:
\rho\in\Sep(\mathbb C^{d_A}:\mathbb C^{d_B}),\
d_A,d_B\geq1
\right\}}.
$$
\end{definition}

The closure does not change the dual cone, but it makes the moment cone closed.
With the evident pairing between coefficients
$w=\sum_{\sigma,\tau}c_{\sigma,\tau}(\sigma,\tau)$ and evaluation vectors, we
have
$$
\ineqsep_n
=
\left(\momsep_n\right)^\circ.
$$
Indeed, $\Tr_w(\rho)$ is precisely the pairing of $w$ with the evaluation
vector
$$
\bigl(\operatorname{Re}\Tr_{\sigma,\tau}(\rho)\bigr)_{
\sigma,\tau\in S_n},
$$
and the condition of nonnegativity is unchanged after passing to the conic
hull and then to its closure.
This is the exact analogue of the duality
$\ineqpsd_n=(\mompsd_n)^\circ$ in the one-matrix problem.  The main difference is
that the evaluation vector is formed from sums of product positive
semidefinite matrices.  Hence positivity on product rays
$X\otimes Y$ is only a necessary condition for membership in $\ineqsep_n$;
one must test the whole separable cone.

\begin{remark}
The polynomial local-unitary invariants of bipartite mixed states are
generated by such permutation contractions, up to the usual
dimension-dependent relations.  This is the invariant-theoretic viewpoint of
\cite{GrasslRoettelerBeth1998,Makhlin2002,Szalay2011,carrozza2026tensor}; Huber's
trace-polynomial formalism \cite{Huber2021PositiveMapsTracePolynomials}
gives a systematic translation between permutations, partial traces, matrix
products, and invariant trace polynomials.
\end{remark}

The embeddings in the next subsection connect the one-matrix notation of
\cref{sec:all,sec:psd} to the bipartite group algebra: a partition monomial
$[\lambda]\in V_n$ is sent to the pair
$(\alpha_\lambda,\alpha_\lambda)$ or $(\alpha_\lambda,\alpha_\lambda^{-1})$
in $\mathbb R[S_n\times S_n]$, where $\alpha_\lambda\in S_n$ has cycle type
$\lambda$.

\subsection{Embeddings of the positive semidefinite cone}

Write $\PPT$ for the set of states that remain positive semidefinite under
partial transposition; every separable state is $\PPT$.

\begin{proposition}
\label{prop:psd-embeddings}
There are two natural embeddings of the one-matrix cone $\ineqpsd_n$ into
$\ineqsep_n$.  For a partition
$\lambda=(\lambda_1\geq\lambda_2\geq\cdots\geq\lambda_r)$ of $n$, let
$\alpha_\lambda\in S_n$ be the permutation with cycle type $\lambda$ obtained
by writing the cycles with the largest blocks first and with the elements in
each cycle ordered increasingly:
$$
\alpha_\lambda
=
(1\,2\,\cdots\,\lambda_1)
(\lambda_1+1\,\lambda_1+2\,\cdots\,\lambda_1+\lambda_2)
\cdots .
$$
Let $f=\sum_{\lambda\vdash n}a_\lambda[\lambda]\in\ineqpsd_n$ and define
$$
\iota_+(f)
:=
\sum_{\lambda\vdash n}
a_\lambda(\alpha_\lambda,\alpha_\lambda)
\qquad \text{and} \qquad
\iota_\Gamma(f)
:=
\sum_{\lambda\vdash n}
a_\lambda(\alpha_\lambda,\alpha_\lambda^{-1}).
$$
Then
$$
\iota_+(\ineqpsd_n)\subseteq\ineqsep_n
\qquad \text{and} \qquad
\iota_\Gamma(\ineqpsd_n)\subseteq\ineqsep_n.
$$
\end{proposition}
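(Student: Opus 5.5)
The plan is to identify the two embedded invariants with one-matrix trace polynomials of $\rho$ and of $\rho^{\Gamma_B}$ respectively, and then invoke positivity of these matrices on separable inputs. The key computation is the index formula for $\Tr_{\sigma,\tau}$. When $\tau=\sigma$, we get
$$
\Tr_{\sigma,\sigma}(\rho)=\sum\prod_r\rho_{i_ra_r,\,i_{\sigma(r)}a_{\sigma(r)}} .
$$
Treating the composite index $(i_r,a_r)$ as a single index of $\rho$ viewed as a $d_Ad_B\times d_Ad_B$ matrix, this is exactly $\Tr_\sigma(\rho)=p_{\lambda(\sigma)}(\rho)$. Hence $\Tr_{\iota_+(f)}(\rho)=\operatorname{Re} f(\rho)=f(\rho)$, the last equality because $f$ is a polynomial in real power sums of the Hermitian matrix $\rho$.

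For $\iota_\Gamma$, recall $(\rho^{\Gamma_B})_{ia,jb}=\rho_{ib,ja}$. I would show that $\Tr_{\sigma,\sigma^{-1}}(\rho)=\Tr_\sigma(\rho^{\Gamma_B})$ by reindexing. Write
$$
\Tr_\sigma(\rho^{\Gamma_B})=\sum\prod_r\rho_{i_r a_{\sigma(r)},\,i_{\sigma(r)}a_r},
$$
then substitute $b_r:=a_{\sigma(r)}$, so that $a_r=b_{\sigma^{-1}(r)}$. This gives $\sum\prod_r\rho_{i_rb_r,\,i_{\sigma(r)}b_{\sigma^{-1}(r)}}=\Tr_{\sigma,\sigma^{-1}}(\rho)$. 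Cycle by cycle, this is the familiar picture in which transposing the $B$ leg reverses the orientation of the $B$-contraction. Consequently $\Tr_{\iota_\Gamma(f)}(\rho)=f(\rho^{\Gamma_B})$, which is real because $\rho^{\Gamma_B}$ is Hermitian.

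To conclude, let $\rho\in\Sep(\mathbb C^{d_A}:\mathbb C^{d_B})$. Then $\rho\succeq0$, and $\rho^{\Gamma_B}=\sum X_i\otimes Y_i^{T}\succeq0$ since $Y_i^T\succeq0$. Both are PSD matrices of size $d_Ad_B$, and $f\in\ineqpsd_n$ is nonnegative on PSD matrices of every size. So both evaluations are $\geq0$, and therefore $\iota_+(f),\iota_\Gamma(f)\in\ineqsep_n$. The argument in fact shows more: $\iota_\Gamma(f)$ is nonnegative on every PPT state.

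Injectivity of the two maps, justifying the word ``embeddings'', follows because distinct partitions give distinct cycle types $\alpha_\lambda$. The pairs $(\alpha_\lambda,\alpha_\lambda)$ for different $\lambda$ are therefore distinct basis elements of $\mathbb R[S_n\times S_n]$, and likewise for $(\alpha_\lambda,\alpha_\lambda^{-1})$.

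The main obstacle is getting the reindexing for $\iota_\Gamma$ exactly right, since the convention $\sigma$ versus $\sigma^{-1}$ matters. However, $\Tr_{\sigma,\sigma^{-1}}$ and $\Tr_{\sigma^{-1},\sigma}$ are complex conjugates, and taking the real part makes the result insensitive to this orientation choice.
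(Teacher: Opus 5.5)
Your proof is correct and follows the paper's route: you identify $\Tr_{\alpha_\lambda,\alpha_\lambda}(\rho)$ with $p_\lambda(\rho)$ and $\Tr_{\alpha_\lambda,\alpha_\lambda^{-1}}(\rho)$ with $p_\lambda(\rho^{\Gamma_B})$, then use $\rho\succeq0$ and $\rho^{\Gamma_B}\succeq0$ for separable $\rho$. The differences are cosmetic: you verify the partial-transpose identity by the explicit relabeling $b_r=a_{\sigma(r)}$, whereas the paper uses the self-adjointness of $\Gamma_B$ under the trace pairing together with $P_\sigma^{T}=P_{\sigma^{-1}}$, and you also justify injectivity, which the paper takes for granted.
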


\begin{proof}
For the first embedding, regard $\rho$ as a single matrix on
$H_A\otimes H_B$.  Under the identification
$(H_A\otimes H_B)^{\otimes n}\simeq H_A^{\otimes n}\otimes H_B^{\otimes n}$,
the global permutation of the $n$ copies is the tensor product of the two
local permutations.  Hence, for every $\lambda\vdash n$,
$$
p_\lambda(\rho)
=
\Tr_{\alpha_\lambda,\alpha_\lambda}(\rho).
$$
If $\rho$ is separable, then $\rho\succeq 0$, so
$$
\Tr_{\iota_+(f)}(\rho)
=
\sum_{\lambda\vdash n}a_\lambda p_\lambda(\rho)
\geq0.
$$
Thus $\iota_+(f)\in\ineqsep_n$.  This is the diagonal embedding, corresponding
to the choice $\beta=\alpha$.

For the second embedding, use that every separable matrix is $\PPT$.  Let
$\rho^{\Gamma_B}$ denote the partial transpose on the $B$ tensor factor.  If
$\rho$ is separable, then $\rho^{\Gamma_B}\succeq 0$.  Since partial transposition
is self-adjoint for the trace pairing and the transpose of a permutation
operator is the inverse permutation operator, for every $\lambda\vdash n$,
$$
p_\lambda(\rho^{\Gamma_B})
=
\Tr_{\alpha_\lambda,\alpha_\lambda^{-1}}(\rho).
$$
Therefore
$$
\Tr_{\iota_\Gamma(f)}(\rho)
=
\sum_{\lambda\vdash n}a_\lambda p_\lambda(\rho^{\Gamma_B})
\geq0,
$$
and $\iota_\Gamma(f)\in\ineqsep_n$.  This is the $\PPT$ embedding,
corresponding to the choice $\beta=\alpha^{-1}$.
\end{proof}

\begin{remark}
An important example of the preceding construction is the partial-transpose
moment inequality used in entanglement detection
\cite{elben2020mixed,carrasco2024entanglement}.  If $\rho$ is separable, then
$\rho^{\Gamma_B}\geq0$, so applying the Hankel inequality
$$
\Delta_{13}(X)=p_1(X)p_3(X)-p_2(X)^2\geq0
$$
from \cref{sec:hankel-determinants} to $X=\rho^{\Gamma_B}$ gives
$$
p_1(\rho^{\Gamma_B})p_3(\rho^{\Gamma_B})
-p_2(\rho^{\Gamma_B})^2\geq0.
$$
Since $p_1(\rho^{\Gamma_B})=\Tr(\rho)=1$, this is equivalently
$$
p_3(\rho^{\Gamma_B})\geq p_2(\rho^{\Gamma_B})^2.
$$
Thus this separability inequality is precisely the image under
$\iota_\Gamma$ of the positive-semidefinite Hankel inequality
$\Delta_{13}\in\ineqpsd_4$.
\end{remark}

Summarizing the chain of inclusions established in the current and previous sections, we have shown that
$$
\ineqall_n \subseteq \ineqpsd_n,
\qquad
\iota_+(\ineqpsd_n) \subseteq \ineqsep_n,
\qquad
\iota_\Gamma(\ineqpsd_n) \subseteq \ineqsep_n.
$$
Positivity on product rays $X\otimes Y$ gives necessary conditions for
separability, but the separable cone is strictly stronger than positivity of
the underlying one-matrix spectra.  We next make the dimension-free
parametrization more economical by identifying its universal quotient
coordinates, before turning to concrete criteria derived from realignment
moments.

\subsection{An economical parametrization of
\texorpdfstring{$\ineqsep_n$}{Ineq sep n}}

In \cref{sec:psd}, we have seen that the parametrization of the moments of a single matrix by permutations is overcomplete, in the sense that permutations belonging to the same conjugacy class yield the same trace invariant. Hence, we have decided to parametrize trace invariants by conjugacy classes which, in turn, are in one-to-one correspondence with partitions of the degree.

In this subsection, we take the same approach for the tensor trace invariants of bipartite Hermitian matrices. Let us point out that the parametrization discussed in this section has nothing to do with separability, it is simply a convenient way to parametrize trace invariants of bipartite Hermitian matrices by conjugacy classes. The economical coordinates for the degree-$n$ separable inequalities are
the orbits considered by Bogaerts and Dukes
\cite{bogaerts2014semidefinite}.  Let
$G_n=S_n\times C_2$ act on $S_n^2$ by simultaneous conjugation and
simultaneous inversion:
\begin{align*}
\pi\cdot(\sigma,\tau)
&=(\pi\sigma\pi^{-1},\pi\tau\pi^{-1}),\\
\kappa\cdot(\sigma,\tau)
&=(\sigma^{-1},\tau^{-1}).
\end{align*}
These two actions commute.  We write
$$
\mathcal O_n:=S_n^2/G_n,
\qquad
b_n:=|\mathcal O_n|.
$$
For each $\omega\in\mathcal O_n$, choose a representative
$(\sigma_\omega,\tau_\omega)\in\omega$ and define
$$
K_\omega(\rho):=\operatorname{Re}
\Tr_{\sigma_\omega,\tau_\omega}(\rho).
$$
This is independent of the representative.  Indeed, the tensor-trace
contractions satisfy, for all $\rho=\rho^*$:
\begin{equation*}
\Tr_{\pi\sigma\pi^{-1},\,\pi\tau\pi^{-1}}(\rho)
=\Tr_{\sigma,\tau}(\rho),
\qquad
\Tr_{\sigma^{-1},\tau^{-1}}(\rho)
=\overline{\Tr_{\sigma,\tau}(\rho)}.
\end{equation*}

There is an equivalent way to enumerate these coordinates that is useful
in practice.  For every partition $\lambda\vdash n$, fix a representative
$\gamma_\lambda\in S_n$ of cycle type $\lambda$.  Given
$(\sigma,\tau)\in S_n^2$, conjugate both permutations so that
$\sigma$ becomes $\gamma_\lambda$, and denote the resulting second
permutation by $\widetilde\tau$.  The ambiguity in this choice is exactly
conjugation by $Z_\lambda:=C_{S_n}(\gamma_\lambda)$.
Consequently, simultaneous-conjugacy classes of pairs are parametrized by
$(\lambda,[\widetilde\tau]_{Z_\lambda})$, where
$[\widetilde\tau]_{Z_\lambda}$ denotes a $Z_\lambda$-conjugacy
orbit.  One then identifies the resulting data under simultaneous
inversion.  In this partition--permutation notation, define
$$
K_{\lambda;\tau}(\rho):=K_{\omega(\lambda,\tau)}(\rho)
=\operatorname{Re}\Tr_{\gamma_\lambda,\tau}(\rho),
$$
where $\omega(\lambda,\tau)$ is the $G_n$-orbit of
$(\gamma_\lambda,\tau)$.  Thus $\tau$ is understood modulo
$Z_\lambda$-conjugacy and the simultaneous-inversion identification, and
$K_{\lambda;\tau}$ is the same reduced invariant as $K_\omega$ in these
coordinates.

Let $(\ineqsep_n)^{\mathrm{quot}}$ denote the image of $\ineqsep_n$ after
factoring out universal evaluation identities, that is, relations between
permutation contractions that vanish for every local dimension and every
separable $\rho$.  The preceding orbit construction gives the following
exact parametrization.

\begin{theorem}
\label{thm:minimal-parametrization-exact-size}
The minimal parametrization of the cone $\ineqsep_n$ is given by
\begin{equation*}
(\ineqsep_n)^{\mathrm{quot}}
\cong
\left\{
c\in\mathbb R^{\mathcal O_n}:
\sum_{\omega\in\mathcal O_n}c_\omega K_\omega(\rho)\geq0
\ \text{for every }d_A,d_B
\ \text{and }\rho\in\Sep(\mathbb C^{d_A}:\mathbb C^{d_B})
\right\}.
\end{equation*}
By \cite[Proposition~3.2]{bogaerts2014semidefinite}, its number of real coordinates is
\begin{equation*}
b_n=
\frac{1}{2n!}\sum_{\alpha\in S_n}
\left[
\left(\sum_{\chi}\chi(\alpha)^2\right)^2
+
\left(\sum_{\chi}\chi(\alpha)\right)^3
\right],
\end{equation*}
where both inner sums run over the irreducible characters of $S_n$.
\end{theorem}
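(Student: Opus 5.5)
The plan has two parts: an invariance argument that reduces the coordinates of $\mathbb R[S_n\times S_n]$ to the orbit space $\mathcal O_n$, and a universality argument showing that no further linear relations among the $K_\omega$ survive in all dimensions.

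\emph{Step 1 (factoring through orbits).} Consider the linear evaluation map $E:\mathbb R[S_n\times S_n]\to\{\text{functions on }\bigsqcup_{d_A,d_B}\Sep\}$, $w\mapsto\Tr_w(\cdot)$. By the two displayed identities (simultaneous conjugation invariance and $\Tr_{\sigma^{-1},\tau^{-1}}=\overline{\Tr_{\sigma,\tau}}$), the real part $\operatorname{Re}\Tr_{\sigma,\tau}$ is constant on each $G_n$-orbit. Hence $E$ factors through the averaging map $\mathbb R[S_n\times S_n]\to\mathbb R^{\mathcal O_n}$, $c_{\sigma,\tau}\mapsto c_\omega:=\sum_{(\sigma,\tau)\in\omega}c_{\sigma,\tau}$, with $\Tr_w=\sum_\omega c_\omega K_\omega$. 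So $\ineqsep_n$ is the full preimage of the displayed cone in $\mathbb R^{\mathcal O_n}$, and the kernel of the averaging map lies inside the universal identities.

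\emph{Step 2 (no further universal relations).} I must show the $K_\omega$, $\omega\in\mathcal O_n$, are linearly independent as functions on separable $\rho$ for large enough dimensions (say $d_A=d_B\geq n$). Separable matrices span all Hermitian operators, and each $K_\omega$ is a polynomial in $\rho$, so it suffices to prove independence on all Hermitian $\rho$.

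Polarize: $\Tr_{\sigma,\tau}(\rho_1,\dots,\rho_n)=\Tr[(P_\sigma\otimes P_\tau)(\rho_1\otimes\cdots\otimes\rho_n)]$. For $d\geq n$, the operators $P_\sigma\otimes P_\tau$, $(\sigma,\tau)\in S_n^2$, are linearly independent (Schur--Weyl: $\sigma\mapsto P_\sigma$ is injective on $\mathbb C[S_n]$ when $d\geq n$, and tensor products of independent families are independent). Hence the multilinear functionals are independent. Passing to the diagonal $\rho_1=\cdots=\rho_n$ symmetrizes by simultaneous conjugation, since relabelling copies conjugates $(\sigma,\tau)$ by the same $\pi$. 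Thus the degree-$n$ polynomials $\Tr_{\sigma,\tau}(\rho)$, one per simultaneous-conjugacy class, are independent over $\mathbb C$. Taking real parts on Hermitian inputs then identifies each class with its inverse class and no more. Concretely, for non-self-inverse classes, the real and imaginary parts are independent; discarding the imaginary parts leaves exactly one function $K_\omega$ per $G_n$-orbit, and these are independent.

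The universal kernel is therefore exactly the kernel of the averaging map, which yields the stated isomorphism. The coordinate count $b_n=|\mathcal O_n|$ is then quoted from \cite[Proposition~3.2]{bogaerts2014semidefinite}. Their formula comes from Burnside's lemma for $G_n$ on $S_n^2$: the identity and conjugation elements contribute $\sum_\alpha|C(\alpha)|^2$-type terms via $\sum_\chi\chi(\alpha)^2=|C_{S_n}(\alpha)|$, and the inversion elements contribute square-root counts via the Frobenius--Schur indicator $\sum_\chi\chi(\alpha)$, all Frobenius--Schur indicators being $1$ for $S_n$.

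\emph{Main obstacle.} The delicate point is Step 2: the passage from the independence of the multilinear functionals to the independence of the real parts of the diagonal specializations. The clean route I would take is to work with complexified $\rho$ (arbitrary matrices), where the diagonal polynomials are independent. Then I would use that $\Tr_{\sigma,\tau}$ and $\Tr_{\sigma^{-1},\tau^{-1}}$ are distinct polynomials whenever their classes differ, so that real and imaginary parts restricted to the Zariski-dense Hermitian set remain independent.
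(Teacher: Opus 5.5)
Your proposal is correct. Step 1 matches the paper's argument. The paper observes that $K_\omega$ does not depend on the chosen representative, using the conjugation and inversion identities. It then quotes \cite[Proposition~3.2]{bogaerts2014semidefinite} for $b_n=|\mathcal O_n|$.

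The paper never checks that the $b_n$ functions $K_\omega$ are linearly independent, i.e.\ that the universal identities are exactly the kernel of the orbit-sum map. That is what ``minimal'' requires. Your Step 2 supplies it: independence of the $P_\sigma\otimes P_\tau$ for $d_A,d_B\geq n$, then polarization and restriction to the diagonal. It also gives an explicit threshold: all $b_n$ coordinates are already independent in local dimensions $n$.

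Two points should be tightened.
\begin{itemize}
\item That $\Sep$ spans the Hermitian matrices does not by itself transfer a polynomial identity. Use instead that a spanning convex cone has nonempty interior, hence is Zariski dense in the Hermitian space.
\item The step you call the main obstacle is immediate. Let $T_C$ be the polynomial of a simultaneous-conjugacy class $C$, with $\omega=C\cup C^{-1}$. On Hermitian inputs $K_\omega=\frac12(T_C+T_{C^{-1}})$, or $K_\omega=T_C$ if $C=C^{-1}$. So a real relation $\sum_\omega c_\omega K_\omega=0$ is a complex relation among the $T_C$ in which distinct orbits involve disjoint sets of classes. Independence of the $T_C$ then forces $c_\omega=0$. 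No complexification is needed for that independence: polarization works over $\mathbb R$, and Hermitian product operators $\rho_1\otimes\cdots\otimes\rho_n$ span all operators.
\end{itemize}

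Your Burnside sketch needs a small correction for the inversion term. For $g=(\pi,\kappa)$ the condition $\pi\sigma\pi^{-1}=\sigma^{-1}$ is equivalent to $(\pi\sigma)^2=\pi^2$. So the number of fixed $\sigma$ is $r(\pi^2)$, where $r(\alpha)=\#\{x:x^2=\alpha\}=\sum_\chi\chi(\alpha)$. The identity $\sum_\pi r(\pi^2)^2=\sum_\alpha r(\alpha)^3$ is what produces the cube in the formula for $b_n$.
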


The resulting exact dimensions form the sequence \cite{A241584}:
$$
\begin{array}{c|rrrrrrrr}
n&1&2&3&4&5&6&7&8\\
\hline
b_n&1&4&11&43&155&761&4\,043&27\,190
\end{array}
$$

For $n=2$, the 4 conjugacy classes are trivial. We display the tensor invariants in \cref{fig:tensor-invariants-2}.

\begin{figure}[htb]
    \centering
    \includegraphics[width=0.2\linewidth]{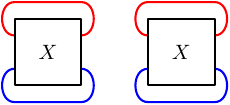}\qquad
    \includegraphics[width=0.2\linewidth]{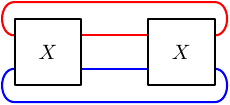}\qquad
    \includegraphics[width=0.2\linewidth]{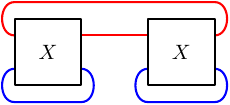}\qquad
    \includegraphics[width=0.2\linewidth]{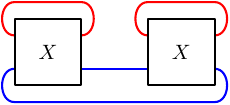}
    \caption{The four tensor trace invariants for $n=2$: $t_X^2 = (\Tr X)^2$, $p_2(X) = \Tr(X^2)$, $\pi_A = \Tr(X_A^2)$, and $\pi_B = \Tr(X_B^2)$.}
    \label{fig:tensor-invariants-2}
\end{figure}

For $n=3$, write
\[
e=\mathrm{id},\qquad s=(12),\qquad t=(13),\qquad u=(23),
\qquad r=(123),\qquad q=(132)=r^{-1}.
\]
The $36$ pairs in $S_3^2$ split into the following $11$ $G_3$-orbits,
where $G_3$ acts by simultaneous conjugation and simultaneous inversion:
\begin{align*}
\mathcal C_{111;e}
 &=\{(e,e)\},\\[2pt]
\mathcal C_{111;(12)}
 &=\{(e,s),(e,t),(e,u)\},\\[2pt]
\mathcal C_{111;(123)}
 &=\{(e,r),(e,q)\},\\[6pt]
\mathcal C_{21;e}
 &=\{(s,e),(t,e),(u,e)\},\\[2pt]
\mathcal C_{21;(12)}
 &=\{(s,s),(t,t),(u,u)\},\\[2pt]
\mathcal C_{21;(23)}
 &=\{(s,t),(s,u),(t,s),(t,u),(u,s),(u,t)\},\\[2pt]
\mathcal C_{21;(123)}
 &=\{(s,r),(s,q),(t,r),(t,q),(u,r),(u,q)\},\\[6pt]
\mathcal C_{3;e}
 &=\{(r,e),(q,e)\},\\[2pt]
\mathcal C_{3;(12)}
 &=\{(r,s),(r,t),(r,u),(q,s),(q,t),(q,u)\},\\[2pt]
\mathcal C_{3;(123)}
 &=\{(r,r),(q,q)\},\\[2pt]
\mathcal C_{3;(132)}
 &=\{(r,q),(q,r)\}.
\end{align*}
Their cardinalities of these classes are $1,3,2,3,3,6,6,2,6,2,2$, which sum to $3!^2=36$.

We represent graphically a representative from each class in
\cref{fig:tensor-invariants-3}.  The representatives are ordered from left
to right and then from top to bottom according to the list above.
\begin{figure}[htb]
    \centering
    \begin{tabular}{@{}c@{\qquad}c@{\qquad}c@{}}
        \includegraphics[width=0.30\linewidth]{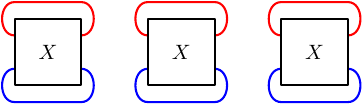} &
        \includegraphics[width=0.30\linewidth]{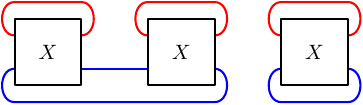} &
        \includegraphics[width=0.30\linewidth]{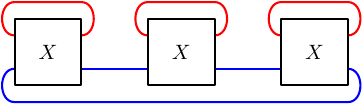} \\[6pt]
        \includegraphics[width=0.30\linewidth]{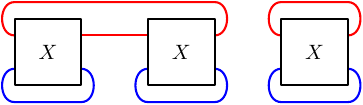} &
        \includegraphics[width=0.30\linewidth]{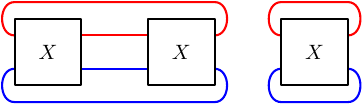} &
        \includegraphics[width=0.30\linewidth]{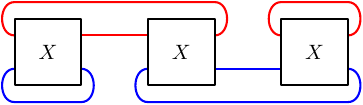} \\[6pt]
        \includegraphics[width=0.30\linewidth]{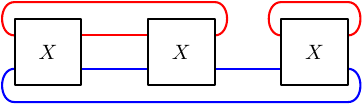} &
        \includegraphics[width=0.30\linewidth]{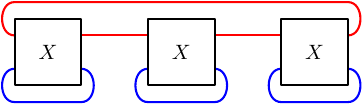} &
        \includegraphics[width=0.30\linewidth]{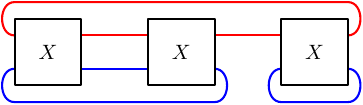} \\[6pt]
        \multicolumn{3}{c}{
            \includegraphics[width=0.30\linewidth]{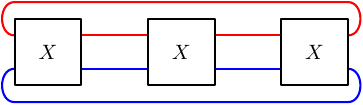}
            \qquad
            \includegraphics[width=0.30\linewidth]{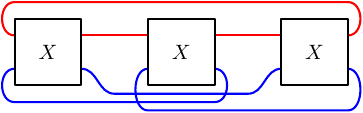}}
    \end{tabular}
    \caption{Representatives of the eleven $G_3$-orbits of pairs of
    permutations in $S_3^2$, ordered according to
    $\mathcal C_{111;e}$, $\mathcal C_{111;(12)}$,
    $\mathcal C_{111;(123)}$, $\mathcal C_{21;e}$,
    $\mathcal C_{21;(12)}$, $\mathcal C_{21;(23)}$,
    $\mathcal C_{21;(123)}$, $\mathcal C_{3;e}$,
    $\mathcal C_{3;(12)}$, $\mathcal C_{3;(123)}$,
    $\mathcal C_{3;(132)}$. The first argument (integer partition) corresponds to Alice's permutation, while the second one to Bob's permutation.}
    \label{fig:tensor-invariants-3}
\end{figure}

\subsection{Structure and extremal rays of \texorpdfstring{$\ineqsep_2$}{Ineq sep 2}}

The orbit parametrization of \cref{thm:minimal-parametrization-exact-size}
can be made completely explicit in degree two.

From the previous subsection, we recall the four tensor trace invariants of degree two (see \cref{fig:tensor-invariants-2}):
$$
\begin{array}{c|c|c|c}
\text{quotient coordinate}&(\sigma,\tau)&(r_A,r_B)&
K_{\lambda;\tau}(\rho)\\
\hline
((2),[s])&(s,s)&(1,1)&\Tr(\rho^2)\\
((2),[e])&(s,e)&(1,2)&\Tr(\rho_A^2)\\
((1,1),[s])&(e,s)&(2,1)&\Tr(\rho_B^2)\\
((1,1),[e])&(e,e)&(2,2)&(\Tr\rho)^2.
\end{array}
$$
Here, $e$ is the identity permutation, $s$ is the transposition $(12)$, $\rho_A=\Tr_B\rho$, $\rho_B=\Tr_A\rho$, and $r_A,r_B$ denote the
numbers of cycles of the two local permutations.  For the remainder of the
subsection, abbreviate
$$
t:=\Tr(\rho),
\qquad
\pi:=p_2(\rho)=\Tr(\rho^2),
\qquad
\pi_A:=\Tr(\rho_A^2),
\qquad
\pi_B:=\Tr(\rho_B^2).
$$

Let us normalize states ($t=1$) and introduce
$$
\mathcal K_2
:=
\left\{(\pi,\pi_A,\pi_B):
\rho\in\Sep(\mathbb C^{d_A}:\mathbb C^{d_B}),\quad
\Tr(\rho)=1,\quad d_A,d_B\geq1\right\}.
$$
\begin{proposition}
The closed convex hull of $\mathcal K_2$ is
\begin{equation}
\label{eq:ineqsep2-evaluation-tetrahedron}
\overline{\conv(\mathcal K_2)}
=
P_2
:=\left\{(\pi,\pi_A,\pi_B):
\pi\geq0,\quad \pi_A \geq \pi,\quad \pi_B \geq \pi,\quad
1+\pi \geq \pi_A+\pi_B\right\}.
\end{equation}
Equivalently, $P_2$ is the tetrahedron with vertices
\[
v_0=(0,0,0),
\qquad
v_A=(0,1,0),
\qquad
v_B=(0,0,1),
\qquad
v_{AB}=(1,1,1).
\]
\end{proposition}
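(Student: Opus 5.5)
We prove the two inclusions $\overline{\conv(\mathcal K_2)}\subseteq P_2$ and $P_2\subseteq\overline{\conv(\mathcal K_2)}$. First we check that the four listed inequalities cut out exactly the tetrahedron with the stated vertices. The four affine functionals $\pi$, $\pi_A-\pi$, $\pi_B-\pi$ and $1+\pi-\pi_A-\pi_B$ are linearly independent as affine maps. Each vertex annihilates exactly three of them:
\begin{itemize}
\item $v_0$ makes the first three zero;
\item $v_A$ makes $\pi$, $\pi_B-\pi$ and the last one zero;
\item symmetrically for $v_B$;
\item $v_{AB}$ makes the last three zero.
\end{itemize}
At each vertex the remaining functional equals $1$. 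Hence $P_2$ is the simplex $\conv\{v_0,v_A,v_B,v_{AB}\}$; this step is routine linear algebra.

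\emph{Inclusion $\conv(\mathcal K_2)\subseteq P_2$.} Since $P_2$ is closed and convex, it suffices to verify the four inequalities on every normalized separable state. We do not use convexity in $\rho$, because $\pi,\pi_A,\pi_B$ are quadratic. Instead, for $\rho=\sum_i p_i X_i\otimes Y_i$ with $X_i,Y_i\succeq0$ of unit trace, we expand bilinearly. Writing $a_{ij}=\Tr(X_iX_j)$ and $b_{ij}=\Tr(Y_iY_j)$, all in $[0,1]$, we get
$$
\pi=\sum_{i,j}p_ip_ja_{ij}b_{ij},\qquad
\pi_A=\sum_{i,j}p_ip_ja_{ij},\qquad
\pi_B=\sum_{i,j}p_ip_jb_{ij},\qquad
1=\sum_{i,j}p_ip_j .
$$
Each inequality then follows termwise from $0\le a,b\le1$:
\begin{itemize}
\item $ab\ge0$ gives $\pi\ge0$;
\item $a\ge ab$ gives $\pi_A\ge\pi$, and likewise $\pi_B\ge\pi$;
\item $1+ab-a-b=(1-a)(1-b)\ge0$ gives $1+\pi\ge\pi_A+\pi_B$.
\end{itemize}
Here $\Tr(XX')\le\|X\|_\infty\Tr X'\le1$ for unit-trace PSD matrices justifies $a_{ij},b_{ij}\le1$. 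This is the key step; the only care needed is that the terms are summed with nonnegative weights $p_ip_j$.

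\emph{Inclusion $P_2\subseteq\overline{\conv(\mathcal K_2)}$.} It suffices to show each vertex lies in $\overline{\mathcal K_2}$. We use product states of uniform spectra:
\begin{itemize}
\item $\rho=\frac{I_{d_A}}{d_A}\otimes\frac{I_{d_B}}{d_B}$ gives $(\frac1{d_Ad_B},\frac1{d_A},\frac1{d_B})$;
\item $d_A=d_B=1$ gives $v_{AB}$;
\item $d_A=1$, $d_B\to\infty$ gives $v_A$;
\item $d_B=1$, $d_A\to\infty$ gives $v_B$;
\item $d_A,d_B\to\infty$ gives $v_0$.
\end{itemize}
Closure absorbs the limits, and convexity gives the whole tetrahedron.

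Combining both inclusions proves \eqref{eq:ineqsep2-evaluation-tetrahedron}. The main obstacle is the first inclusion, specifically the last facet inequality, which is not implied by positivity of $\rho$ alone. The factorization $(1-a_{ij})(1-b_{ij})$ handles it, and it is exactly where separability enters.
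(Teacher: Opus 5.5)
Your proof is correct and follows the paper's argument essentially step for step. You use the same bilinear expansion with $\Tr(X_iX_j),\Tr(Y_iY_j)\in[0,1]$ and termwise factorizations to get $\mathcal K_2\subseteq P_2$. You then use the same product states with maximally mixed or pure local factors (your trivial factor $d=1$ plays the role of the paper's rank-one $P_A$, $P_B$), together with barycentric coordinates, for the reverse inclusion.

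One correction concerns your closing comment. The facet $1+\pi\geq\pi_A+\pi_B$ is subadditivity of the linear entropy and holds for all bipartite states, as the paper remarks. It is therefore not the step where separability is essential. The facets that genuinely need separability are $\pi_A\geq\pi$ and $\pi_B\geq\pi$, both of which fail for the Bell state.
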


\begin{proof}
We first prove that $\mathcal K_2\subseteq P_2$.  Let
$(\pi,\pi_A,\pi_B)\in\mathcal K_2$, and let $\rho$ be a separable
density matrix that gives this triple.  By separability, $\rho$ has a
finite decomposition
$$
\rho=\sum_i t_i X_i\otimes Y_i,
\qquad
t_i\geq0,
\qquad
\sum_i t_i=1,
$$
where $X_i$ and $Y_i$ are local density matrices; in particular,
$$
X_i,Y_i \succeq 0,
\qquad
\Tr(X_i)=\Tr(Y_i)=1.
$$
Define $A_{ij}:=\Tr(X_iX_j)$ and $B_{ij}:=\Tr(Y_iY_j)$.
Clearly, $A_{ij},B_{ij}\in[0,1]$.
Taking partial traces in the separable decomposition and using
$\Tr(X_i)=\Tr(Y_i)=1$ gives
$$
\rho_A=\sum_i t_iX_i,
\qquad
\rho_B=\sum_i t_iY_i.
$$
We can express now the purities as follows:
\begin{align*}
\pi=\Tr(\rho^2)
&=\sum_{i,j}t_it_jA_{ij}B_{ij},\\
\pi_A=\Tr(\rho_A^2)
&=\sum_{i,j}t_it_jA_{ij},\\
\pi_B=\Tr(\rho_B^2)
&=\sum_{i,j}t_it_jB_{ij}.
\end{align*}
Since $\sum_{i,j}t_it_j=(\sum_i t_i)^2=1$, it follows that
\begin{align*}
\pi
&=\sum_{i,j}t_it_jA_{ij}B_{ij}\geq0,\\
\pi_A-\pi
&=\sum_{i,j}t_it_jA_{ij}(1-B_{ij})\geq0,\\
\pi_B-\pi
&=\sum_{i,j}t_it_j(1-A_{ij})B_{ij}\geq0,\\
1+\pi-\pi_A-\pi_B
&=\sum_{i,j}t_it_j(1-A_{ij})(1-B_{ij})\geq0.
\end{align*}
Every summand on the right-hand sides is nonnegative.  Thus
$\mathcal K_2\subseteq P_2$.  Because $P_2$ is an intersection of closed
half-spaces, it is closed and convex, and hence
\[
\overline{\conv(\mathcal K_2)}\subseteq P_2.
\]

It remains to prove the reverse inclusion.  Let
$\mu_n:=I_n/n$ denote the maximally mixed state on $\mathbb C^n$, and
let $P_A$ and $P_B$ be rank-one density matrices.  All the states in the
following table are product states and therefore belong to the class
used in the definition of $\mathcal K_2$:
$$
\begin{array}{c|c}
\text{product state}&(\pi,\pi_A,\pi_B)\\
\hline
\mu_n\otimes\mu_n
    &(n^{-2},n^{-1},n^{-1})\\
P_A\otimes\mu_n
    &(n^{-1},1,n^{-1})\\
\mu_n\otimes P_B
    &(n^{-1},n^{-1},1)\\
P_A\otimes P_B
    &(1,1,1).
\end{array}
$$
Here we used $\Tr(\omega_n^2)=1/n$ and
$\Tr(P_A^2)=\Tr(P_B^2)=1$.  Letting $n\to\infty$ in the first three
rows, and using the last row directly, shows that
$$
v_0,v_A,v_B,v_{AB}\in\overline{\mathcal K_2}
\subseteq\overline{\conv(\mathcal K_2)}.
$$

Finally, the four affine functions defining $P_2$ give the barycentric
identity
$$
(\pi,\pi_A,\pi_B)
=
(1+\pi-\pi_A-\pi_B)v_0
+(\pi_A-\pi)v_A+(\pi_B-\pi)v_B+\pi\,v_{AB}.
$$
The four coefficients on the right sum to one, and they are nonnegative
precisely when $(\pi,\pi_A,\pi_B)\in P_2$.  Thus $P_2$ is the convex
hull of the four displayed vertices.  Since
$\overline{\conv(\mathcal K_2)}$ is convex and contains all four
vertices, we conclude that $P_2\subseteq\overline{\conv(\mathcal K_2)}$. Together with the first inclusion, this proves
\cref{eq:ineqsep2-evaluation-tetrahedron}.
\end{proof}

\begin{remark}
The assumption that $\rho$ is separable is essential.  For example, consider the Bell (or maximally entangled) state and the corresponding density matrix
$$
\ket{\Phi^+}:=\frac{\ket{00}+\ket{11}}{\sqrt2},
\qquad
\rho_{\Phi}:=\ketbra{\Phi^+}{\Phi^+}.
$$
This state is entangled, and it is pure, so
$\pi=\Tr(\rho_{\Phi}^2)=1$.  On the other hand,
$(\rho_{\Phi})_A=(\rho_{\Phi})_B=I_2/2$, whence
$$
\pi_A=\pi_B=\Tr\!\left((I_2/2)^2\right)=\frac12.
$$
Therefore
$$
\pi_A-\pi=\pi_B-\pi=-\frac12<0.
$$
Thus two of the inequalities defining $P_2$ fail for this entangled
state; they are consequences of separability, and are not valid for
arbitrary bipartite states.
\end{remark}

\begin{remark}
The closure in
\cref{eq:ineqsep2-evaluation-tetrahedron} cannot be omitted.  Indeed,
let $\rho$ be any density matrix on the finite-dimensional space
$\mathbb C^{d_A}\otimes\mathbb C^{d_B}$, and let
$\lambda_1,\ldots,\lambda_r>0$ be its nonzero eigenvalues.  Since
$\sum_{k=1}^r\lambda_k=1$, the Cauchy--Schwarz inequality gives
$$
1=\left(\sum_{k=1}^r\lambda_k\right)^2
\leq r\sum_{k=1}^r\lambda_k^2
=r\Tr(\rho^2).
$$
Consequently,
$$
\pi=\Tr(\rho^2)\geq\frac1r
\geq\frac1{d_Ad_B}>0.
$$
In particular, the first coordinate of every point of $\mathcal K_2$ is
strictly positive.  The same remains true for every finite convex
combination of points of $\mathcal K_2$: if
$$
x=\sum_{\ell=1}^m s_\ell x_\ell,
\qquad
s_\ell\geq0,
\qquad
\sum_{\ell=1}^m s_\ell=1,
\qquad
x_\ell\in\mathcal K_2,
$$
then the first coordinate of $x$ is
$\sum_{\ell=1}^m s_\ell\pi_\ell>0$.  Hence no point of
$\conv(\mathcal K_2)$ can have first coordinate equal to zero.

On the other hand, the three vertices $v_0$, $v_A$, and $v_B$ of $P_2$
all have first coordinate zero.  The product-state sequences constructed
in the proof show that these vertices belong to
$\overline{\mathcal K_2}$, but the preceding argument shows that they do
not belong to $\conv(\mathcal K_2)$.  Therefore
$$
\conv(\mathcal K_2)
\subsetneq
\overline{\conv(\mathcal K_2)}=P_2.
$$
Thus the closure is essential since it accounts for boundary points obtained
by allowing the local dimensions to tend to infinity.
\end{remark}
\begin{remark}
The last inequality in \cref{eq:ineqsep2-four-extremal-rays}
$$1 + \pi \geq \pi_A + \pi_B \iff 1 + \Tr(\rho_{AB}^2) \geq \Tr(\rho_{A}^2) + \Tr(\rho_{B}^2) $$
is precisely \emph{subadditivity of the linear entropy} (or the Tsallis $q=2$ entropy), see \cite[Appendix C]{rungta2001universal} or \cite[Theorem 2]{audenaert2007subadditivity}.
\end{remark}

Going back to un-normalized separable matrices, we can rephrase the previous result as follows.

\begin{corollary}
\label{cor:ineqsep2-extremal-rays}
In the partition--permutation quotient basis, the dimension-free cone
$\ineqsep_2$ has exactly four extremal rays, represented by
\begin{equation}
\label{eq:ineqsep2-four-extremal-rays}
\begin{gathered}
\mathbb R_{\geq0}K_{(2);s},
\qquad
\mathbb R_{\geq0}\bigl(K_{(2);e}-K_{(2);s}\bigr),\\[2pt]
\mathbb R_{\geq0}\bigl(K_{(1,1);s}-K_{(2);s}\bigr),
\qquad
\mathbb R_{\geq0}\bigl(K_{(1,1);e}+K_{(2);s}-K_{(2);e}-K_{(1,1);s}\bigr).
\end{gathered}
\end{equation}
Their evaluations are, respectively,
\[
\pi \geq 0,
\qquad
\pi_A \geq \pi,
\qquad
\pi_B \geq \pi,
\qquad
t^2+\pi \geq \pi_A +\pi_B.
\]

More explicitly, an arbitrary quotient element
\[
w=c_{t^2}K_{(1,1);e}+c_\pi K_{(2);s}
+c_{\pi_A}K_{(2);e}+c_{\pi_B}K_{(1,1);s}
\]
belongs to $\ineqsep_2$ if and only if
\begin{equation}
\label{eq:ineqsep2-facet-coefficient-test}
c_{t^2}\geq0,
\qquad
c_{t^2}+c_{\pi_A}\geq0,
\qquad
c_{t^2}+c_{\pi_B}\geq0,
\qquad
c_{t^2}+c_\pi+c_{\pi_A}+c_{\pi_B}\geq0.
\end{equation}
\end{corollary}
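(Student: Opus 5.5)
The plan is to reduce the corollary to the tetrahedron description of $\overline{\conv(\mathcal K_2)}=P_2$ from the preceding proposition, using homogeneity and duality. All four quotient coordinates $K_{(1,1);e}=t^2$, $K_{(2);s}=\pi$, $K_{(2);e}=\pi_A$, $K_{(1,1);s}=\pi_B$ are homogeneous of degree two in $\rho$. The only separable $\rho$ with $t=0$ is $\rho=0$, since a separable matrix is positive semidefinite. Hence a quotient element $w$ is nonnegative on $\Sep$ in all dimensions if and only if the affine function
\[
\ell_w(\pi,\pi_A,\pi_B)=c_{t^2}+c_\pi\pi+c_{\pi_A}\pi_A+c_{\pi_B}\pi_B
\]
is nonnegative on the normalized set $\mathcal K_2$. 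By linearity and continuity, this is equivalent to nonnegativity on $\overline{\conv(\mathcal K_2)}=P_2$. So $(\ineqsep_2)^{\mathrm{quot}}$ is identified with the cone of affine functions nonnegative on the tetrahedron $P_2$. Here I would use that, in degree two, the four orbit invariants are linearly independent as functions on separable states (already visible on the product states used in the proof), so the quotient is indeed $\mathbb R^4$.

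Next I would derive the coefficient test. An affine function is nonnegative on the polytope $P_2=\conv\{v_0,v_A,v_B,v_{AB}\}$ if and only if it is nonnegative at the four vertices. Evaluating $\ell_w$ at $v_0=(0,0,0)$, $v_A=(0,1,0)$, $v_B=(0,0,1)$, and $v_{AB}=(1,1,1)$ gives exactly the four conditions in \eqref{eq:ineqsep2-facet-coefficient-test}, namely $c_{t^2}$, $c_{t^2}+c_{\pi_A}$, $c_{t^2}+c_{\pi_B}$, and $c_{t^2}+c_\pi+c_{\pi_A}+c_{\pi_B}$.

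Finally I would identify the extremal rays. The linear map $w\mapsto(\ell_w(v_0),\ell_w(v_A),\ell_w(v_B),\ell_w(v_{AB}))$ is a bijection $\mathbb R^4\to\mathbb R^4$, because the vertices are affinely independent. It maps $(\ineqsep_2)^{\mathrm{quot}}$ onto the nonnegative orthant $\mathbb R^4_{\geq0}$. The extremal rays of the orthant are the four coordinate rays, so the extremal rays of $\ineqsep_2$ are their preimages, which are the affine functions vanishing at three vertices and equal to one at the fourth. These are exactly the barycentric coefficients already written down in the proof of the proposition: $\pi$ (nonzero only at $v_{AB}$), $\pi_A-\pi$ (at $v_A$), $\pi_B-\pi$ (at $v_B$), and $1+\pi-\pi_A-\pi_B$ (at $v_0$). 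Rehomogenizing with $t^2$ gives the four listed rays \eqref{eq:ineqsep2-four-extremal-rays}.

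The only delicate point is the first step: justifying the passage to the quotient and the linear independence of the four coordinates. This independence is what makes the evaluation map a bijection, and hence makes ``exactly four'' meaningful.
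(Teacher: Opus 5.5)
Your proposal is correct and follows the paper's proof essentially step for step: homogeneous reduction to the normalized set $\mathcal K_2$, passage to $\overline{\conv(\mathcal K_2)}=P_2$, evaluation at the four vertices to get the coefficient test, and the invertible linear map onto $\mathbb R^4_{\geq0}$, whose coordinate rays pull back to the barycentric coordinates $\pi$, $\pi_A-\pi$, $\pi_B-\pi$, $t^2+\pi-\pi_A-\pi_B$. You also note explicitly that the four invariants are linearly independent on product states, so the quotient really is $\mathbb R^4$; the paper leaves this implicit, but it does not change the argument.
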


\begin{proof}
Let $\rho$ be a separable positive semidefinite operator.  If $\rho=0$,
then every homogeneous degree-two invariant vanishes on $\rho$, so there
is nothing to prove.  Suppose, therefore, that $\rho\neq0$.  Then
$t=\Tr(\rho)>0$, and $\rho/t$ is a separable density matrix.  Since all
four invariants are homogeneous of degree two,
\[
\Tr_w(\rho)
=t^2\Tr_w\!\left(\frac{\rho}{t}\right)
=t^2L_w\!\left(
\frac{\pi}{t^2},\frac{\pi_A}{t^2},\frac{\pi_B}{t^2}
\right),
\]
where
\[
L_w(\pi,\pi_A,\pi_B)
=c_{t^2}+c_\pi\pi+c_{\pi_A}\pi_A+c_{\pi_B}\pi_B.
\]
It follows that $w$ is nonnegative on all separable positive
semidefinite operators, in all local dimensions, if and only if $L_w$ is
nonnegative on $\mathcal K_2$.  Because $L_w$ is affine and continuous,
it is nonnegative on $\mathcal K_2$ if and only if it is nonnegative on
$\overline{\conv(\mathcal K_2)}$.  By
\eqref{eq:ineqsep2-evaluation-tetrahedron}, the latter set is $P_2$.

Every point of $P_2$ is a convex combination of
$v_0,v_A,v_B,v_{AB}$.  Hence an affine function is nonnegative on $P_2$
if and only if it is nonnegative at these four vertices.  The
corresponding values of $L_w$ are
\[
L_w(v_0)=c_{t^2},
\qquad
L_w(v_A)=c_{t^2}+c_{\pi_A},
\qquad
L_w(v_B)=c_{t^2}+c_{\pi_B},
\qquad
L_w(v_{AB})=c_{t^2}+c_\pi+c_{\pi_A}+c_{\pi_B}.
\]
This proves \eqref{eq:ineqsep2-facet-coefficient-test}.

Set these values equal to
\[
\lambda_0=c_{t^2},
\qquad
\lambda_A=c_{t^2}+c_{\pi_A},
\qquad
\lambda_B=c_{t^2}+c_{\pi_B},
\qquad
\lambda_{AB}=c_{t^2}+c_\pi+c_{\pi_A}+c_{\pi_B}.
\]
A direct coefficient comparison gives the unique decomposition
\begin{align*}
w={}&
\lambda_{AB}K_{(2);s}
+\lambda_A\bigl(K_{(2);e}-K_{(2);s}\bigr)
+\lambda_B\bigl(K_{(1,1);s}-K_{(2);s}\bigr)\\
&+\lambda_0\bigl(K_{(1,1);e}+K_{(2);s}-K_{(2);e}-K_{(1,1);s}\bigr).
\end{align*}
Hence \eqref{eq:ineqsep2-facet-coefficient-test} is equivalent to a unique
nonnegative linear combination of the four generators in
\eqref{eq:ineqsep2-four-extremal-rays}.  The map
$w\mapsto(\lambda_0,\lambda_A,\lambda_B,\lambda_{AB})$ is an invertible
linear map from $\ineqsep_2$ onto $\mathbb R_{\geq0}^4$.  Since the
positive orthant has exactly its four coordinate axes as extremal rays, the
four displayed rays are both extremal and exhaustive.
\end{proof}

\section{Inequalities from few moments of realignment}\label{sec:realignment}
In this section, we show that the \emph{realignment criterion} \cite{rudolph2000separability,chen2003matrix} for quantum states can be used to derive some of the elements of the convex cone $\ineqsep_n$, the separable inequalities that are agnostic of the underlying dimension. The primary issue with considering the realignment test itself is that the trace norm considered in realignment is not an element of a group algebra, so the criterion cannot be directly used to formulate such an inequality. However, we show that its \emph{even} singular-value moments are
bipartite tensor trace invariants of the form
$$
\Tr_{\sigma,\tau}(\rho)
=
\Tr\left[(P_\sigma\otimes P_\tau)\rho^{\otimes n}\right],
$$
with different permutations on the two local tensor legs.  They are therefore
genuinely mixed local-unitary invariants in the sense of the preceding
section. We first recall the realignment criterion for separability.

\noindent \emph{Realignment criterion}. The realignment (or
reshuffling) map is given by
\begin{align*}
\Rn:\M_{d_A}\otimes \M_{d_B} &\longrightarrow
\M_{d_A^2,d_B^2}\\
\Rn(X)_{ik,jl} &= X_{ij,kl} \qquad i,k \in [d_A],\ j,l \in [d_B].
\end{align*}
Graphically, this reads as
\begin{center}
    \includegraphics{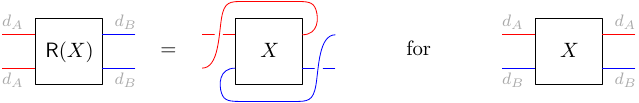}
\end{center}
Thus, this map $\Rn(X)$ is obtained by grouping the two ``$A$'' indices (top wires) and the two
``$B$'' indices (bottom wires) together to construct a rectangular matrix.
For a matrix $X$ and $1\leq p<\infty$, we use the Schatten norms
\begin{equation}
\label{eq:schatten-norm}
\|X\|_p=\left(\sum_i s_i(X)^p\right)^{1/p},
\end{equation}
where $s_i(X)$ are the singular values of $X$.  In particular, $\|X\|_1$ is
the trace norm and $\|X\|_2$ is the Hilbert--Schmidt norm.
The realignment criterion (sometimes called the computable cross norm, or CCNR, criterion) states that any separable state $X \in \mathcal{M}_d \otimes \mathcal{M}_d$ satisfies the following homogenous inequality \cite{rudolph2000separability,chen2003matrix}: $$\| \Rn(X) \|_1 \leq \operatorname{Tr}(X),$$
where the trace norm is the sum of the singular values of $\Rn(X)$.
Thus, any violation of this criterion certifies that the state is entangled. The realignment criterion is independent of the PPT criterion, as in neither one implies the other, in effect, this means that the realignment criterion can detect states that are PPT entangled \cite{chen2003matrix}.

It was later shown in \cite{zhang2008entanglement} that the realignment criterion can be improved by \emph{centering} the operator, yielding the so-called \emph{enhanced entanglement criterion}. We first introduce notation for some particular local unitary trace invariants. Let
$$
t_X=\Tr(X),
\qquad
X_A=\Tr_B(X),
\qquad
X_B=\Tr_A(X),
$$
and
$$
\pi_A(X)=\Tr(X_A^2),
\qquad
\pi_B(X)=\Tr(X_B^2),
\qquad
p_2(X)=\Tr(X^2),
\qquad
\pi_{\mathrm{mix}}(X)=\Tr\!\left(X(X_A\otimes X_B)\right).
$$
The first three are degree two local unitary trace invariants of $X$ and correspond to the local, respectively, global purities of $X$:
$$\pi_A(X) = \Tr_{(12),(1)(2)}(X),\qquad \pi_B(X) = \Tr_{(1)(2),(12)}(X), \qquad p_2(X) = \Tr_{(12),(12)}(X).$$
The last invariant, $\pi_{\mathrm{mix}}(X)$, is a degree three local unitary trace invariant of $X$:
$$\pi_{\mathrm{mix}}(X) = \Tr_{(1)(23),(12)(3)}(X)$$
which can be seen from the following graphical representaiton:
\begin{center}
    \includegraphics{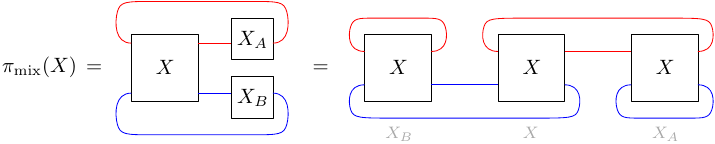}
\end{center}

\begin{theorem}[Enhanced realignment criterion \cite{zhang2008entanglement}]
For any bipartite matrix $X \in \M_{d_A} \otimes \M_{d_B}$, define the following centered operator, homogeneous in $X$:
\begin{equation}\label{eq:def-tildeX}
\tilde X := t_X X - X_A \otimes X_B.
\end{equation}
Then, for any separable state $X$, we have
$$\|\Rn(\tilde X)\|_1 \leq \sqrt{G(X)},$$
where
\begin{equation}\label{eq:def-G}
G(X) := \left(t_X^2-\pi_A\right)\left(t_X^2-\pi_B\right).
\end{equation}
\end{theorem}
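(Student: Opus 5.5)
The plan is to reduce the statement to a Cauchy--Schwarz estimate on a separable decomposition. By homogeneity ($\tilde X$ has degree two and $G$ has degree four), I may assume $t_X=1$. Write
$$
X=\sum_i p_i\,A_i\otimes B_i
$$
with $p_i\geq0$, $\sum_ip_i=1$, and $A_i,B_i$ density matrices. Then $X_A=\sum_ip_iA_i$ and $X_B=\sum_ip_iB_i$. The realignment map sends a product to a rank-one matrix: $\Rn(A\otimes B)=|A\rangle\langle \bar B|$ (up to the fixed conventions of $\Rn$), where $|A\rangle$ denotes the vectorization of $A$. The key identity is the centering formula
$$
\tilde X=X-X_A\otimes X_B=\frac12\sum_{i,j}p_ip_j\,(A_i-A_j)\otimes(B_i-B_j),
$$
which one checks by expanding the right-hand side into $\sum_i p_i A_i\otimes B_i-\sum_{i,j}p_ip_jA_i\otimes B_j$.

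Applying $\Rn$ and the triangle inequality for the trace norm gives
$$
\|\Rn(\tilde X)\|_1\leq \frac12\sum_{i,j}p_ip_j\,\|A_i-A_j\|_2\,\|B_i-B_j\|_2 ,
$$
since $\|\,|a\rangle\langle b|\,\|_1=\|a\|\,\|b\|$ and the vectorization is a Hilbert--Schmidt isometry. Cauchy--Schwarz with respect to the probability weights $p_ip_j$ then bounds this by
$$
\Bigl(\tfrac12\sum_{i,j}p_ip_j\|A_i-A_j\|_2^2\Bigr)^{1/2}\Bigl(\tfrac12\sum_{i,j}p_ip_j\|B_i-B_j\|_2^2\Bigr)^{1/2}.
$$
Finally, I compute the variance-type sum:
$$
\tfrac12\sum_{i,j}p_ip_j\|A_i-A_j\|_2^2=\sum_ip_i\Tr(A_i^2)-\Tr(X_A^2)\leq 1-\pi_A,
$$
using $\Tr(A_i^2)\leq1$ for density matrices. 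The same computation gives $1-\pi_B$ on the $B$ side, so the product is bounded by $\sqrt{G(X)}$, as claimed.

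The main obstacle I expect is obtaining the sharp factor $1-\pi_A$ rather than a cruder bound. The route above handles this by combining the centering identity with the purity bound $\Tr(A_i^2)\leq1$, and this is exactly where positivity and unit trace of the local factors enter. A secondary point needing care is the convention for $\Rn$ on product operators (transpose or conjugate on the $B$ factor); this does not affect the singular values, so it is harmless. The case $t_X=0$ is trivial, since then $X=0$ for separable $X$. Undoing the normalization then recovers the homogeneous form $\|\Rn(\tilde X)\|_1\leq\sqrt{(t_X^2-\pi_A)(t_X^2-\pi_B)}$.
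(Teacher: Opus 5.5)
Your proof is correct and complete. The reduction to $t_X=1$ works, since $\tilde X$ is homogeneous of degree $2$ and $G$ of degree $4$, and $t_X=0$ forces $X=0$ for separable $X$. The centering identity checks out on expansion, and $\Rn(A\otimes B)$ is indeed the rank-one matrix $\operatorname{vec}(A)\operatorname{vec}(B)^{T}$ in the paper's convention, with trace norm $\|A\|_2\|B\|_2$. The triangle inequality and Cauchy--Schwarz steps are valid. The variance identity $\tfrac12\sum_{i,j}p_ip_j\|A_i-A_j\|_2^2=\sum_ip_i\Tr(A_i^2)-\pi_A\leq 1-\pi_A$ is also correct.

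Note that the paper gives no proof of this statement; it is quoted from \cite{zhang2008entanglement}. Your argument is the standard route for this criterion: separable decomposition, centering, rank-one realignment of product terms, then Cauchy--Schwarz.

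A slightly lighter version uses the single-sum identity $\tilde X=\sum_ip_i(A_i-X_A)\otimes(B_i-X_B)$ in place of the double sum. It leads to the same bound, because $\sum_ip_i\|A_i-X_A\|_2^2=\sum_ip_i\Tr(A_i^2)-\pi_A$.
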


The enhanced realignment criterion is actually \emph{strictly stronger} than the realignment criterion. If a normalized state satisfies the centered bound, then it satisfies the
plain realignment bound $\|\Rn(\rho)\|_1\leq1$.  Indeed,
$$
\|\Rn(\rho)\|_1
\leq
\|\Rn(\rho-\rho_A\otimes\rho_B)\|_1
+\|\Rn(\rho_A\otimes\rho_B)\|_1
$$
and
$$
\|\Rn(\rho_A\otimes\rho_B)\|_1
=
\|\rho_A\|_2\|\rho_B\|_2
=
\sqrt{\pi_A\pi_B}.
$$
Therefore the centered bound implies (here, $X = \rho$ and thus $t_X  = 1$)
$$
\|\Rn(\rho)\|_1
\leq
\sqrt{(1-\pi_A)(1-\pi_B)}
+\sqrt{\pi_A\pi_B}
\leq1,
$$
where the last step is the Cauchy--Schwarz ineqquality applied to the two unit vectors
$(\sqrt{\pi_A},\sqrt{1-\pi_A})$ and
$(\sqrt{\pi_B},\sqrt{1-\pi_B})$.  Thus, every state detected by the plain
realignment criterion is detected by the centered criterion.

\begin{example}\label{ex:enhanced-strict}
The enhanced realignment criterion is strictly stronger than the standard
realignment criterion. To illustrate this, consider the family of two-qubit
states
\begin{equation}\label{eq:ex22-mixture}
\rho_c=
\frac12\proj{00}
+\frac{1+2c}{4}\proj{\psi^+}
+\frac{1-2c}{4}\proj{\psi^-},
\qquad
\ket{\psi^\pm}
=\frac1{\sqrt2}\bigl(\ket{01}\pm\ket{10}\bigr), \nonumber
\end{equation}
where $0\le c\le \frac12$. A direct calculation yields
\begin{equation}\label{eq:ex22-values}
\bigl\|\Rn(\rho_c)\bigr\|_1
=c+\frac1{\sqrt2},
\qquad
\bigl\|\Rn(\widetilde{\rho_c})\bigr\|_1
=c+\frac18,
\qquad
\sqrt{G(\rho_c)}
=\frac38, \nonumber
\end{equation}
where, the centered operator $\widetilde{\rho_c}$ and the $G$ quantity are defined in \eqref{eq:def-tildeX}, resp.~\eqref{eq:def-G}:
\[
 \widetilde{\rho_c}
=t_{\rho_c}\rho_c-(\rho_c)_A\otimes(\rho_c)_B,
\qquad
G(\rho_c)
=
\bigl(t_{\rho_c}^2-\pi_A(\rho_c)\bigr)
\bigl(t_{\rho_c}^2-\pi_B(\rho_c)\bigr).
\]
Therefore, we obtain
\[
\|\Rn(\rho_c)\|_1>t_{\rho_c}
\quad\Longleftrightarrow\quad
c>1-\frac1{\sqrt2},
\]
whereas
\[
\|\Rn(\widetilde{\rho_c})\|_1>\sqrt{G(\rho_c)}
\quad\Longleftrightarrow\quad
c>\frac14.
\]
Hence, for every
\[
\frac14<c\le 1-\frac1{\sqrt2}
\]
the state $\rho_c$ is detected by the enhanced realignment criterion but not
by the standard realignment criterion.
\end{example}

As we mentioned before, the trace norm of realignment $\| \Rn(X) \|_1 \leq \operatorname{Tr}(X)$ is not a local trace invariant of the matrix $X$. We will instead consider the higher moments of the realignment operator $\Rn(X)$. More precisely, in the next proposition, we show that the \emph{even} singular value moments of realignment are tensor trace invariants with the permutations $\alpha_m$ and $\beta_m$ which are defined as
\begin{align*}
\alpha_m&:=(1\,\,2m)(2\,\,3)(4\,\,5)\cdots(2m-2\,\,2m-1)\\
\beta_m&:=(1\,\,2)(3\,\,4)\cdots(2m-1\,\,2m).
\end{align*}
Both are fixed-point-free involutions in $S_{2m}$. Let us mention here that similar ideas were put forward in \cite[Fig.~2, panel~(b)]{tarabunga2026quantifying}), where $X$ was assumed to have a matrix product operator (MPO) structure, and also in \cite[Section 3.1.6]{carrozza2026tensor}, in even more generality.

\begin{proposition}\label{prop:even-realignment-moments-as-trace-invariants}
For a Hermitian bipartite operator $X$, let $s_1 \geq s_2 \geq s_3 \ldots \geq s_{d^2} \geq 0$ denote the singular values of $\Rn(X)$, with $d = \min(d_A, d_B)$. Then, we have the following equality:
$$r_m(X)
:=
\Tr\left[(\Rn(X)\Rn(X)^*)^m\right]
= \sum^{d^2}_{i = 1} s^{2m}_i =
\Tr_{\alpha_m,\beta_m}(X).
$$
In other words, $r_m(X)$ is the linear evaluation of
$(\alpha_m, \beta_m) \in \mathbb R[S_{2m}\times S_{2m}]$ on $X^{\otimes 2m}$.
\end{proposition}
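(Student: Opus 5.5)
The plan is a direct index computation: expand $\Tr[(\Rn(X)\Rn(X)^*)^m]$ entrywise, rewrite every factor as an entry of $X$, and read off the two permutations that pair row indices with column indices on the $A$ legs and on the $B$ legs. The equality $\Tr[(\Rn(X)\Rn(X)^*)^m]=\sum_i s_i^{2m}$ is immediate, since the eigenvalues of $\Rn(X)\Rn(X)^*$ are the squared singular values $s_i^2$. All the work is in the second equality.

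\emph{Step 1 (expand the trace).} Label the rows of $\Rn(X)$ by $A$-pairs $u=(i,k)$ and the columns by $B$-pairs $v=(j,l)$, so that $\Rn(X)_{u,v}=X_{ij,kl}$. Then
$$\Tr[(\Rn(X)\Rn(X)^*)^m]=\sum_{u_1,\dots,u_m}\sum_{v_1,\dots,v_m}\prod_{r=1}^m \Rn(X)_{u_r v_r}\,\overline{\Rn(X)_{u_{r+1}v_r}},$$
with indices taken cyclically, $u_{m+1}=u_1$.

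\emph{Step 2 (remove the conjugates).} Hermiticity of $X$ gives $\overline{X_{ij,kl}}=X_{kl,ij}$. Writing $u_r=(i_r,k_r)$ and $v_r=(j_r,l_r)$, the product becomes a product of $2m$ entries of $X$. Place them on copies numbered $1,\dots,2m$:
\begin{itemize}
\item copy $2r-1$ carries $X_{i_rj_r,\,k_rl_r}$;
\item copy $2r$ carries $X_{k_{r+1}l_r,\,i_{r+1}j_r}$.
\end{itemize}

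\emph{Step 3 (identify the permutations).} Compare with the definition $\Tr_{\sigma,\tau}(X)=\sum\prod_s X_{a_sb_s,\,a_{\sigma(s)}b_{\tau(s)}}$. Here $a_s$ and $b_s$ are the row indices of copy $s$. In each case, the column index of one copy equals the row index of the copy listed.

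On the $B$ leg:
\begin{itemize}
\item the column index $l_r$ of copy $2r-1$ is the row index of copy $2r$;
\item the column index $j_r$ of copy $2r$ is the row index of copy $2r-1$.
\end{itemize}
Hence $\tau(2r-1)=2r$ and $\tau(2r)=2r-1$, that is, $\tau=\beta_m$.

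On the $A$ leg:
\begin{itemize}
\item the column index $k_r$ of copy $2r-1$ is the row index of copy $2r-2$ (copy $2m$ when $r=1$);
\item the column index $i_{r+1}$ of copy $2r$ is the row index of copy $2r+1$ (copy $1$ when $r=m$).
\end{itemize}
Hence $\sigma$ swaps $2r$ and $2r+1$ and swaps $1$ and $2m$, so $\sigma=\alpha_m$.

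Since the free summation indices are exactly the row indices $a_s,b_s$ of the $2m$ copies, the sum equals $\Tr_{\alpha_m,\beta_m}(X)$.

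The main obstacle is bookkeeping. One must fix the convention $\Rn(X)_{ik,jl}=X_{ij,kl}$ consistently with the first index being on $A$. One must also track that conjugation swaps row and column, which is what turns the $A$-pairing into the shifted matching $\alpha_m$ rather than $\beta_m$. A wrong choice of which leg carries the shift would produce $(\beta_m,\alpha_m)$. That variant is related by relabelling the copies and leaves the value unchanged, so it serves as a consistency check.

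Finally, I would confirm that the trace is already real, since it is a sum of $s_i^{2m}$, so taking the real part in $\Tr_w$ changes nothing. As a sanity check, for $m=1$ we get $\alpha_1=\beta_1=(12)$, which gives $r_1=\Tr_{(12),(12)}(X)=\Tr(X^2)$, matching the known purity identity.
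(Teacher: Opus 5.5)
Your proposal is correct and is essentially the paper's proof. The paper draws $\Rn(X)\Rn(X)^*$ as a tensor diagram, chains $m$ copies using $X=X^*$, and reads off that the $A$ wires are contracted by $\alpha_m$ and the $B$ wires by $\beta_m$. Your index computation carries out the same contraction explicitly, with copies $2r-1$ and $2r$, and with Hermiticity turning $\overline{X_{i_{r+1}j_r,k_{r+1}l_r}}$ into $X_{k_{r+1}l_r,i_{r+1}j_r}$. Your bookkeeping, the match between summation indices and the rows of the $2m$ copies, and your side checks (reality of the trace, the case $m=1$, and the cyclic shift relating $(\beta_m,\alpha_m)$ to $(\alpha_m,\beta_m)$) are all sound.
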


\begin{proof}
We prove the result using tensor diagrams. Firstly, note that $\Rn{(X)} \Rn{(X)^*}$ can be represented by the following tensor diagram:
\begin{center}
    \includegraphics{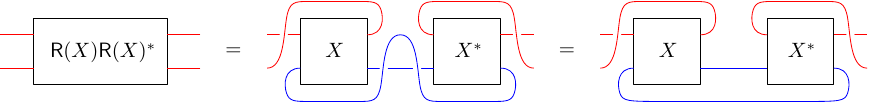}
\end{center}

The quantity $r_m=\Tr\big((\Rn{(X)} \Rn{(X)^*)}^m\big)$ can be represented by repeating the above diagram $m$ times and contracting the wires (we also use $X=X^*$):
\begin{center}
    \includegraphics{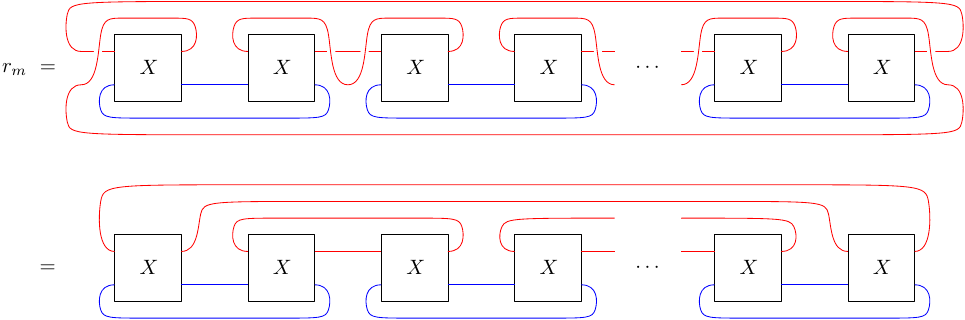}
\end{center}
The final diagram shows that the top wires are connected by the permutation $\alpha_m$ while the bottom wires are connected with the permutation $\beta_m.$
\end{proof}

Note that the first even moment of the realignment of a bipartite quantum state $\rho$, $r_1(\rho) = \Tr[(\Rn(\rho) \Rn(\rho)^*)] = \operatorname{Tr}(\rho^2) = p_2(\rho)$ is exactly the purity of $\rho$. For all Hermitian operators $X$, it holds that $r_m(X) \geq 0$, as they are traces of positive semidefinite matrices.

Analogous to realigment, we now show that the
the even moments of the \emph{centered} realignment operator are elements of $\mathbb R[S_n\times S_n].$ We first define the even moments,
$$
r_m(\tilde X)
:=
\Tr\!\left[\bigl(R(\tilde X)R(\tilde X)^*\bigr)^m\right],
\qquad m\geq1 .
$$
Each $r_m(\tilde X)$ is homogeneous of degree $4m$ in $X$ since $\tilde X := t_X X - X_A \otimes X_B$ is already a quadratic function of $X$.  For example, the first even moment is
\begin{equation}\label{eq:r1-tildeX}
r_1(\tilde  X)
=
\Tr(\tilde X^{\,2})
=
t^2_{X} p_2(X)-2t_X \pi_{\mathrm{mix}}(X)+\pi_A(X)\pi_B(X).
\end{equation}
Note that the centered realignment moment family already contains the quantity $$\pi_{\mathrm{mix}}(X)=\Tr(X(X_A\otimes X_B))$$ in the first moment. We can show that these moments are also expressible in terms of elements in the group algebra $\mathbb{R}[S_n \times S_n]$.

\begin{proposition}
\label{prop:centered-realignment-moments}
For every $m\geq1$, the invariant $r_m(\tilde X)$ is a linear combination of
tensor trace invariants $\Tr_{\sigma,\tau}$ with $\sigma,\tau\in S_{4m}$. In particular,
$$
r_1(\tilde X) = \Tr_{(1)(2)(34), (1)(2)(34)}(X) - 2 \Tr_{(1)(2)(34), (1)(23)(4)}(X) +  \Tr_{(12)(3)(4), (1)(2)(34)}(X).
$$
\end{proposition}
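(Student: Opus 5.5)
The plan is to start from \cref{prop:even-realignment-moments-as-trace-invariants}, which is valid for any Hermitian bipartite operator. Since $\tilde X=t_XX-X_A\otimes X_B$ is Hermitian whenever $X$ is, it gives
$$r_m(\tilde X)=\Tr_{\alpha_m,\beta_m}(\tilde X)=\Tr\bigl[(P_{\alpha_m}\otimes P_{\beta_m})\tilde X^{\otimes 2m}\bigr].$$
I then expand $\tilde X^{\otimes 2m}$ by multilinearity into $2^{2m}$ terms. In each term, each tensor slot carries either $t_XX$ or $-X_A\otimes X_B$. It therefore suffices to show that every multilinear contraction $\Tr_{(\alpha_m,\beta_m)}(Y_1,\dots,Y_{2m})$ with $Y_j\in\{t_XX,\;X_A\otimes X_B\}$ is a single tensor trace invariant $\Tr_{\sigma,\tau}(X)$ with $\sigma,\tau\in S_{4m}$. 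The signs $(-1)^{\#}$ are collected afterwards, and the real part is taken at the end; since $r_m(\tilde X)$ is real, this gives an honest element of $\mathbb R[S_{4m}\times S_{4m}]$.

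\textbf{Step 1: rewrite each slot as two copies of $X$.} Each slot uses exactly two copies of $X$, so the total degree is uniformly $4m$.
\begin{itemize}
\item $t_XX$ is $X\otimes X$, with the second copy contracted on itself. In trace-invariant language, the extra copy has fixed points in both $\sigma$ and $\tau$.
\item $X_A\otimes X_B=\Tr_B(X)\otimes\Tr_A(X)$ is also two copies of $X$. The first copy closes its $B$-leg on itself, which is a fixed point of $\tau$, and supplies the open $A$-legs. The second copy closes its $A$-leg, a fixed point of $\sigma$, and supplies the open $B$-legs.
\end{itemize}

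\textbf{Step 2: assemble the permutations.} I label the $4m$ copies as $2j-1,2j$ for slot $j$. The permutation $\sigma$ is defined by applying $\alpha_m$ to whichever copy carries the open $A$-leg of each slot, and fixing the remaining copies. The permutation $\tau$ is defined the same way from $\beta_m$ on the copies carrying the open $B$-legs. In Penrose notation this is immediate: replacing a box by a pair of boxes with some wires capped still leaves a diagram in which every $A$-output is wired to exactly one $A$-input. So the contraction is $\Tr_{\sigma,\tau}$ for a well-defined pair of permutations. Summing over the choices gives
$$r_m(\tilde X)=\sum_{S\subseteq[2m]}(-1)^{|S|}\Tr_{\sigma_S,\tau_S}(X).$$

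\textbf{Step 3: the case $m=1$.} Here $\alpha_1=\beta_1=(12)$, and I would expand into four terms.
\begin{itemize}
\item Both slots $t_XX$: this gives $t_X^2p_2(X)$.
\item One slot of each type: the two mixed terms are equal (both equal $t_X\pi_{\mathrm{mix}}(X)$) and combine with coefficient $-2$.
\item Both slots $X_A\otimes X_B$: this gives $\pi_A\pi_B$.
\end{itemize}
I then relabel the copies to match the displayed representatives. For example, $t_X^2p_2$ becomes $\Tr_{(1)(2)(34),(1)(2)(34)}$, and $\pi_{\mathrm{mix}}\,t_X$ becomes $\Tr_{(1)(2)(34),(1)(23)(4)}$ up to simultaneous conjugation. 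This reproduces \eqref{eq:r1-tildeX}.

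\textbf{Main obstacle.} I expect the main difficulty to be bookkeeping rather than conceptual. The copy labelling must be consistent, so that the $A$-open and $B$-open copies of an $X_A\otimes X_B$ slot are distinct copies, with the correct capped fixed points. Getting this right is what makes the explicit $m=1$ formula come out exactly, up to the $G_n$-orbit identifications of \cref{thm:minimal-parametrization-exact-size}, which allow relabeling representatives.
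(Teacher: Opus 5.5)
Your proposal is correct and follows the paper's route. You apply \cref{prop:even-realignment-moments-as-trace-invariants} to the Hermitian operator $\tilde X$, expand $\tilde X^{\otimes 2m}$ multilinearly, and write both $t_X X$ and $X_A\otimes X_B$ as two copies of $X$ with capped legs, so that each term is a single $\Tr_{\sigma,\tau}(X)$ with $\sigma,\tau\in S_{4m}$. Your $m=1$ bookkeeping, including the relabelings by simultaneous conjugation, correctly reproduces \eqref{eq:r1-tildeX} and the displayed formula, with more detail than the paper's one-line proof.
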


\begin{proof}
The general statement follows from expanding the result in \cref{prop:even-realignment-moments-as-trace-invariants} for the centered matrix $\tilde X$ from \cref{eq:def-tildeX}. The case $m=1$ is just a rewriting of \cref{eq:r1-tildeX}.
\end{proof}

Next, we want to express both the standard and the centered realignment criteria in terms of tensor invariants of the quantum state. To do this, we consider the mathematical theory for moment based optimization for the $1$ norm.

\subsection{Moment based estimation of the 1 norm}\label{sec:moment-interpolation}

Consider the following \emph{discrete} moment problem for a given $r$-tuple of non-negative real numbers $r \in \R_+^m$:

\begin{equation}
\label{eq:alpha-minimization}
\alpha(r) = \min_{n \in \mathbb{N}} \min_{s \in [0,1]^n} \left\{ \sum^{n}_{i=1} s_i \;\middle|\; \forall k \in [m], \; \sum^n_{i=1} s^{2k}_i= r_k\right\}.
\end{equation}

The quantity $\alpha(r)$ is the minimization of the $1$ norm of $s$ subject to specified first $m$ even moments given by $r$, independent of the size of the vector $s$. Note that if $\alpha(r) > 1$, it implies that there is no set (of arbitrary dimension) of singular values with the even moments $r_k$ such that $\|s\| \leq 1$.  In this paper, we only look at the the case $r_1 \leq 1$, which also implies $r_k \leq 1$ for all $k \in [m]$.

Consider the case $m=1$. Then, the problem \cref{eq:alpha-minimization} has an exact solution $\alpha(r_1) =   \sqrt{r_1}.$ This follows from the fact that ${\sum^{n}_{i=1} s_i} \geq \sqrt{\sum^{n}_{i=1} s^2_i}$ and the minimum can be obtained with choosing $n=1$ (a single atom) and $s_1 = \sqrt{r_1} \leq 1$. For $m=2$, the exact solution is given in the next theorem.

\begin{theorem}\label{thm:r1-r2-exact}
The problem in \cref{eq:alpha-minimization} is feasible (i.e.~$\alpha(r)< \infty$) iff $0<r_2\le r_1^{2} \leq 1$ or $r_1 = r_2 = 0$. For the first case, the minimum is attained for a vector with support of size
\[
n=\left\lceil \frac{r_1^{2}}{r_2}\right\rceil.
\]
Setting $\Delta:=\sqrt{n r_2-r_1^2}$ and $m:=n-1$, we have
\begin{equation}
\label{eq:solution-alphar1r2}
\alpha(r_1,r_2)=\frac{\sqrt{m^{2}r_1+m^{3/2}\Delta}+\sqrt{r_1-\sqrt{m}\Delta}}{\sqrt{n}}.
\end{equation}
If $r_1^2/r_2\in \N$ then $\Delta=0$ and
$\alpha(r) = r_1^{3/2}/\sqrt{r_2}.$ For $r_1 = r_2 = 0$, we have $\alpha(r) = 0.$
\end{theorem}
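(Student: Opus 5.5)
We would begin by substituting $u_i:=s_i^2\in[0,1]$. This turns \cref{eq:alpha-minimization} (with $m=2$) into the problem
\[
\min\Big\{\textstyle\sum_i\sqrt{u_i}\;:\;\sum_i u_i=r_1,\ \sum_i u_i^2=r_2,\ u_i\in[0,1]\Big\}.
\]
Zero entries can be dropped, so the support size is the only discrete parameter.

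\textbf{Feasibility.} Since $u_i\ge0$, we have $\sum u_i^2\le(\sum u_i)^2$, so $r_2\le r_1^2$. Also $r_2=0$ forces all $u_i=0$, hence $r_1=0$. The bound $r_1\le1$ is the standing normalization of the paper.

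For the converse, fix $0<r_2\le r_1^2\le1$ and put $n=\lceil r_1^2/r_2\rceil$. We look for a two-level vector with $n-1$ entries equal to $a$ and one entry equal to $b\le a$, that is,
\[
(n-1)a+b=r_1,\qquad (n-1)a^2+b^2=r_2 .
\]
Eliminating $b$ gives a quadratic in $a$ with discriminant proportional to $nr_2-r_1^2=\Delta^2\ge0$. Taking the larger root gives
\[
a=\frac{mr_1+\sqrt m\,\Delta}{mn},\qquad b=\frac{r_1-\sqrt m\,\Delta}{n}.
\]
The condition $r_1^2/r_2>n-1$ is exactly what makes $b\ge0$, and $b\le a\le r_1\le1$. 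So the problem is feasible, and $m\sqrt a+\sqrt b$ simplifies to \cref{eq:solution-alphar1r2}. When $n=r_1^2/r_2$ we get $\Delta=0$ and $a=b=r_1/n$, so $\alpha=\sqrt{n r_1}=r_1^{3/2}/\sqrt{r_2}$.

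\textbf{Structure of minimizers.} It remains to show that this vector is optimal.

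First, existence. An atom of size $u\le\varepsilon$ contributes $\sqrt u\ge u/\sqrt\varepsilon$ to the objective. Near-optimal vectors therefore cannot carry much mass in tiny atoms, which bounds the effective support. Compactness in the $u_i$ coordinates then gives a minimizer.

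Next, the first-order conditions. At a minimizer, Lagrange/KKT on the nonzero coordinates gives $\frac{1}{2\sqrt{u_i}}=\lambda+2\mu u_i$. The left side is strictly convex in $u$ and the right side is affine, so there are at most two positive solutions. Hence the nonzero $u_i$ take at most two values $a>b$ (the box constraint $u_i\le1$ is inactive because $r_1\le1$).

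Then the key exchange step, which we expect to be the main obstacle. We must show the smaller value $b$ occurs at most once. Suppose two coordinates equal $b$, and add a third coordinate (any other one). We perturb these three coordinates along the one-dimensional curve that keeps $\sum u$ and $\sum u^2$ fixed. Along this curve the first variation of $\sum\sqrt{u_i}$ vanishes by stationarity. We would compute the second variation explicitly, using the strict concavity of $\sqrt{\cdot}$ together with the ordering $b<a$, and show it is strictly negative. That contradicts minimality. The same computation should also exclude configurations with two distinct values in which the larger one is repeated with multiplicity pattern other than ``all large but one''.

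\textbf{Conclusion.} Any minimizer therefore has $k$ entries equal to $a$ and at most one entry $b<a$. The constraints force $k<r_1^2/r_2\le k+1$, so the support size is exactly $n=\lceil r_1^2/r_2\rceil$. The values $a$ and $b$ are then uniquely determined by the two moment equations, which yields the closed form above.
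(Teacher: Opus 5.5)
Your proof is correct and takes a different route from the paper's. The paper normalizes $p=u/r_1$ and writes
\[
\alpha(r_1,r_2)=\sqrt{r_1}\,\min\bigl\{\|p\|_{1/2}^{1/2}:\|p\|_2^2=r_2/r_1^2\bigr\}.
\]
It then imports the extremal structure from Sakai's sharp comparison of R\'enyi entropies of two orders \cite{sakai2017sharp}: the minimizer is $w(x)=(x,\dots,x,1-\lfloor 1/x\rfloor x)$. It finally inverts $x\mapsto\|w(x)\|_2$ to reach the closed form. You prove that structural fact from scratch: compactness, Lagrange multipliers giving at most two levels, and a second-order exchange showing the smaller level is simple. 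You then get $a,b$ by solving the two moment equations, which bypasses the inversion formula. The paper's proof is shorter and rests on a more general theorem. Yours is elementary and self-contained, and the only property of the objective it uses is strict convexity of $f'$ for $f(u)=\sqrt u$. It therefore applies verbatim to any $\sum_i f(u_i)$ with $f'''>0$, e.g.\ $f(u)=u^q$ with $0<q<1$.

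Three points need tightening.

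\textbf{Existence.} Your small-atom estimate with a fixed $\varepsilon$ does not by itself produce an exactly feasible limit, since the support bound $r_1/\varepsilon$ degenerates as $\varepsilon\to0$. It is simpler to fix the length $N$, where the feasible set is compact. A minimizer restricted to its support is a local minimizer of the restricted problem, because nearby feasible points on the support, padded with zeros, stay feasible. Lagrange applies there: LICQ fails only when all nonzero entries are equal, which is already the one-level form. This shows the length-$N$ minimum equals the closed form for every $N\ge\lceil r_1^2/r_2\rceil$; below that it is infeasible by Cauchy--Schwarz.

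\textbf{The exchange step.} The third coordinate must be one of the $a$'s. At three equal coordinates the two constraints are tangent and there is no admissible curve; if no $a$ exists, the vector is one-level and already of the claimed form. The sign comes from convexity of $f'$, not from concavity of $\sqrt{\cdot}$. At $(b,b,a)$ the tangent is $v=(1,-1,0)$, and the second-order correction satisfies $w_3=-2/(a-b)$ and $w_1+w_2=2/(a-b)$. This gives
\[
\frac{d^2}{dt^2}\sum_i f(u_i(t))\Big|_{t=0}=2\Bigl(f''(b)-\frac{f'(a)-f'(b)}{a-b}\Bigr)<0 .
\]
The inequality holds because the secant slope of the strictly convex function $f'(u)=\frac{1}{2\sqrt u}$ on $[b,a]$ exceeds $f''(b)$. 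Concavity alone gives two terms of opposite sign, $f''(b)<0$ but $f'(b)>f'(a)$, and does not settle it. The same formula at $(a,a,b)$ is positive, so nothing further needs excluding about the multiplicity of the larger value.

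\textbf{Conclusion step.} ``$k<r_1^2/r_2\le k+1$'' holds only when the smaller entry $b$ is actually present, in which case it is strict on both sides. In the one-level case with $k$ atoms, $r_1^2/r_2=k$ exactly. The support size is $\lceil r_1^2/r_2\rceil$ in both cases.
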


We present the proof of the theorem in \cref{sec:proof-1-norm} which uses results of \cite{sakai2017sharp} about Rényi entropies; see also \cite{harremoes2009joint}. Using \cref{thm:r1-r2-exact}, we can now obtain the characterization of the curve $\alpha(r_1,r_2)=1.$ Let $$(n-1) r_2<r^2_1 \leq n r_2.$$
In this region, $\alpha(r_1,r_2) = \frac{\sqrt{m^{2}r_1+m^{3/2}\Delta}+\sqrt{r_1-\sqrt{m}\Delta}}{\sqrt{n}} = 1 \implies \sqrt{m^{2}r_1+m^{3/2}\Delta}+\sqrt{r_1-\sqrt{m} \Delta} = \sqrt{n}$. Therefore, we can obtain $\alpha(r_1,r_2)$ as a collection of arcs. When, $r^2_1 = nr_2$, $\alpha(r_1,r_2) = \frac{r^{3/2}_1}{\sqrt{r_2}} = 1 \iff r_1^3 = r_2 = \frac{r_1^2}{n} \implies r_1 = \frac{1}{n}, r_2 = \frac{1}{n^3}$. For $n=2$, we can obtain the following curve for the interval $r_1=[\frac{1}{2}, 1].$

\begin{corollary} \label{cor:alpha-leq-1-r1-1/2-1}
    For $r_1 \in [\frac{1}{2},1]$  and $r_2 \leq r^2_1$, $$\alpha(r_1,r_2)\leq1 \iff 2r_2-r^2_1 \geq 2r_1-1.$$
\end{corollary}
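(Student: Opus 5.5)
The plan is to reduce to the two regimes of \cref{thm:r1-r2-exact} that can occur once $r_1\ge \tfrac12$. Throughout, fix $r_1\in[\tfrac12,1]$ and $0<r_2\le r_1^2$, so the problem is feasible. We split according to the support size $n=\lceil r_1^2/r_2\rceil$: the cases are $n\ge 3$ (that is, $r_2<r_1^2/2$), $n=2$, and the degenerate $n=1$ (that is, $r_2=r_1^2$).

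\emph{Case $n\ge3$.} We show that both sides of the equivalence are false. The right-hand side fails because $2r_2-r_1^2<0\le 2r_1-1$. For the left-hand side we avoid the closed form \eqref{eq:solution-alphar1r2} and use a Hölder (log-convexity) bound instead. For any feasible vector $s$,
\[
\sum_i s_i^2\le\Bigl(\sum_i s_i\Bigr)^{2/3}\Bigl(\sum_i s_i^4\Bigr)^{1/3},
\]
so $\alpha(r_1,r_2)\ge r_1^{3/2}/\sqrt{r_2}$. Combined with $r_2<r_1^2/2$ this gives $\alpha>\sqrt{2r_1}\ge1$. Hence neither side holds, and the equivalence is true in this range. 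In particular the criterion only has content on the region $r_2\ge r_1^2/2$.

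\emph{Case $n=1$.} Here the optimum is a single atom, so $\alpha=\sqrt{r_1}\le1$. The right-hand side reads $r_1^2\ge 2r_1-1$, i.e.\ $(r_1-1)^2\ge0$, which is always true. So both sides hold.

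\emph{Case $n=2$.} This is the main case; it is a short computation rather than a real obstacle. By \cref{thm:r1-r2-exact} with $m=1$ and $\Delta=\sqrt{2r_2-r_1^2}$, the optimizer has two atoms $a,b\in[0,1]$ with
\[
a^2=\tfrac{r_1+\Delta}{2},\qquad b^2=\tfrac{r_1-\Delta}{2},\qquad \alpha=a+b .
\]
Rather than manipulate the nested radicals, I would square: $\alpha^2=r_1+2ab$. Since $1-r_1\ge0$, the condition $\alpha\le1$ is equivalent to $4a^2b^2\le(1-r_1)^2$. Now
\[
a^2b^2=\frac{r_1^2-\Delta^2}{4}=\frac{r_1^2-r_2}{2},
\]
so $\alpha\le 1$ becomes $2(r_1^2-r_2)\le 1-2r_1+r_1^2$. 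Rearranging gives $2r_2-r_1^2\ge 2r_1-1$, as claimed.

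The only points needing care are that squaring is legitimate, which holds because both $a+b$ and $1$ are nonnegative, and that the hypothesis $r_1\ge\tfrac12$ enters exactly once. It is used in the case $n\ge3$, to ensure $\sqrt{2r_1}\ge1$ and $2r_1-1\ge0$, which together rule out any support size larger than $2$.
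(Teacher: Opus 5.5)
Your proof is correct and takes the paper's route. The paper obtains the corollary by specializing the closed form of \cref{thm:r1-r2-exact} to the support-size-two regime and solving $\alpha\le1$ there. Your squaring step, $\alpha^2=r_1+2ab$ with $a^2b^2=(r_1^2-r_2)/2$, handles the nested radicals cleanly. Your H\"older bound $\alpha\ge r_1^{3/2}/\sqrt{r_2}>\sqrt{2r_1}\ge1$ makes explicit the paper's tacit claim that support sizes $n\ge3$ cannot give $\alpha\le1$ when $r_1\ge\frac12$.

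The only case you skip is the trivial one $r_2=0$. There the problem is infeasible, so $\alpha=\infty$, and the right-hand side $-r_1^2\ge 2r_1-1$ also fails for $r_1\ge\frac12$, so the equivalence still holds.
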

The inequality above is valid for all values of $r_1$, but forms the boundary of $\alpha(r_1,r_2) \leq 1$ for $r_1 \in [1/2,1].$

\begin{proposition}\label{prop:pairsum}
Every separable $X$ with $t_X=\Tr X$ satisfies
\begin{equation}\label{eq:pairsum}
  2\Bigl(r_1(X)^2-r_2(X)\Bigr)
  \leq \bigl(t_X^2-r_1(X)\bigr)^2.
\end{equation}
\end{proposition}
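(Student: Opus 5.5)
The plan is to reduce the inequality to an elementary statement about the singular values of the normalized realignment, and then use only the realignment criterion $\|\Rn(X)\|_1\leq t_X$. Both sides of \eqref{eq:pairsum} are homogeneous of degree $4$ in $X$, since $r_1$ has degree $2$ and $r_2$ has degree $4$. So we may normalize. If $t_X=0$, then the separable (hence positive semidefinite) $X$ vanishes, and there is nothing to prove. Otherwise we replace $X$ by $X/t_X$ and assume $t_X=1$.

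Let $s_1,\dots,s_N\geq0$ be the singular values of $\Rn(X)$. By \cref{prop:even-realignment-moments-as-trace-invariants},
$$r_1=\sum_i s_i^2, \qquad r_2=\sum_i s_i^4.$$
Hence
$$r_1^2-r_2=\sum_{i\neq j}s_i^2s_j^2=2\sum_{i<j}s_i^2s_j^2 .$$
The realignment criterion gives $S:=\sum_i s_i\leq 1$.

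The key step is then the chain
$$1-r_1\;\geq\; S^2-r_1 \;=\; 2\sum_{i<j}s_is_j\;\geq\;0 .$$
Squaring these nonnegative quantities gives
$$(1-r_1)^2\;\geq\;4\Bigl(\sum_{i<j}s_is_j\Bigr)^2\;\geq\;4\sum_{i<j}s_i^2s_j^2\;=\;2\,(r_1^2-r_2).$$
The last inequality holds because expanding the square of a sum of nonnegative terms only adds nonnegative cross terms. Undoing the normalization, via multiplication by $t_X^4$, yields \eqref{eq:pairsum}.

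The only point needing care is the sign used when squaring. One needs $1-r_1\geq S^2-r_1\geq0$ and not merely $1\geq S$, and both facts are immediate here. I expect no real obstacle. The content is that a vector in the $\ell^1$ unit ball satisfies this quartic moment constraint, and this is exactly the relation between $r_1$ and $r_2$ used in \cref{cor:alpha-leq-1-r1-1/2-1}.
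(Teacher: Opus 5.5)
Your proof is correct and follows the paper's argument essentially verbatim. You bound $\sum_{i<j}s_is_j$ above by $\frac12(t_X^2-r_1)$ via the CCNR criterion, and, after squaring, bound its square below by $\frac12(r_1^2-r_2)$ using nonnegativity of the cross terms; normalizing to $t_X=1$ first, rather than carrying $t_X$ throughout as the paper does, is purely cosmetic.
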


\begin{proof}
Let $s_1,s_2,\ldots$ be the singular values of $\Rn(X)$.  Then
\[
 r_1(X)=\sum_i s_i^2,
 \qquad
 r_2(X)=\sum_i s_i^4.
\]
Put $e_2:=\sum_{i<j}s_i s_j$.  The CCNR criterion gives
$\sum_i s_i\leq t_X$, and therefore
\[
 e_2=\frac12\left(\left(\sum_i s_i\right)^2-r_1(X)\right)
 \leq \frac12\bigl(t_X^2-r_1(X)\bigr).
\]
Here $t_X^2-r_1(X)\geq0$, since
$r_1(X)=\sum_i s_i^2\leq(\sum_i s_i)^2\leq t_X^2$.
Since all $s_i s_j$ are nonnegative,
\[
 e_2^2
 \geq \sum_{i<j}s_i^2s_j^2
 =\frac12\bigl(r_1(X)^2-r_2(X)\bigr).
\]
Combining the two inequalities proves \eqref{eq:pairsum}.
\end{proof}

\begin{figure}[htb]
    \centering
    \includegraphics[width=\linewidth]{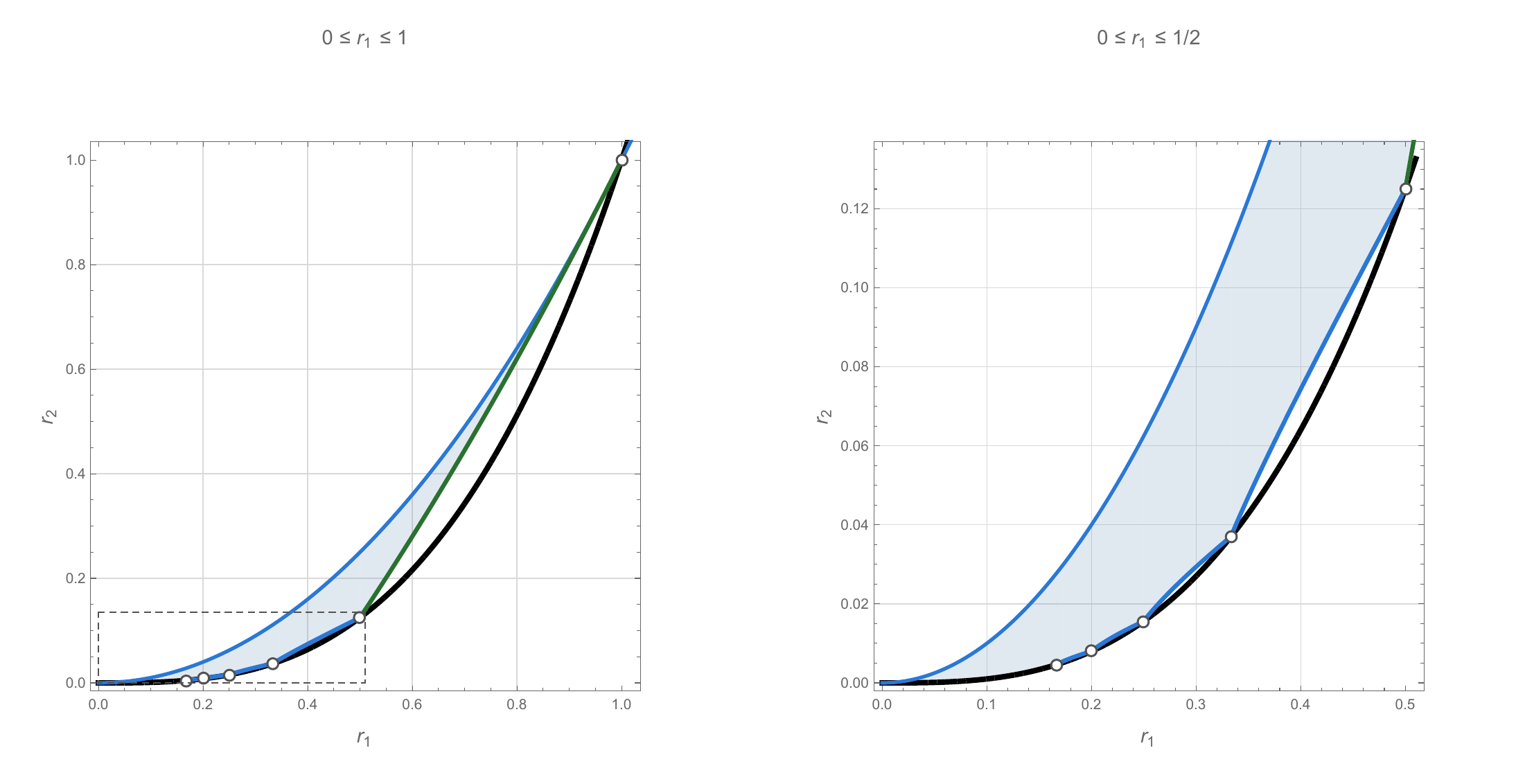}
    \caption{The figure shows the exact solution of $\alpha(r_1,r_2) \leq 1$ (in light blue). The boundary of this region (in blue) is given by the collection of arcs (ref. \cref{thm:r1-r2-exact}), and the upper curve $r^2_1=r_2$ that encodes feasibility of $\alpha(r).$ The first arc (in green) is given by \cref{cor:alpha-leq-1-r1-1/2-1}. The black curve is $r^3_1=r_2$, which is obtained from the continuous approximation $\alpha_c(r)=1$ in \cref{thm:r1-r2-alpha-c}. The right figure is the zoomed rectangle in the left ones.}
\label{fig:alpha-r1-r2}
\end{figure}

The previous results for $m=2$ show that the exact computation of $\alpha(r)$ is likely to be non-trivial for general $m$ and $r \in \R_+^m$. In the second part of this section, we introduce a lower bound on $\alpha(r)$ by relaxing the problem from the \emph{discrete} to the \emph{continuous} case, and show that the relaxation can be computed as a \emph{semidefinite program} (SDP).

\begin{remark}
    Note that if $r_1 > 1$, it implies that $\alpha(r) > 1$ as ${\sum^{n}_{i=1} s_i} \geq \sqrt{\sum^{n}_{i=1} s^2_i} = \sqrt{r_1}.$ 
\end{remark}

\medskip

\noindent\textbf{Relaxations of $\alpha(r)$}. Let $\mathbb{B}[0,1]$ denote (un-normalized) Borel measures on the semialgebraic set $[0,1] = \{t \in \R \, \mid \, t(1-t) \geq 0 \}$. Consider the following relaxation of the optimization problem for $\alpha(r)$ given in \cref{eq:alpha-minimization}.

\begin{equation*}
\alpha_c(r) := \min_{\mu \in \mathbb{B}[0,1]} \left\{ \int x d\mu \;\middle|\; \forall k \in [m], \; \int x^{2k} d \mu= r_k\right\}.
\end{equation*}

It is easy to infer that $\alpha(r) \geq \alpha_c(r)$, since any feasible singular value distribution in \cref{eq:alpha-minimization} corresponds to a Borel measure $\mu = \sum^n_{i = 1}\delta_{s_i}$ which supported in $[0,1]$ and satisfies all the moment conditions. The lower bound $\alpha_c(r)$ can be \emph{strictly} less than $\alpha(r)$ for general $r$, as it is shown in \cref{ex:alpha-c-strict}.

We now show that the continuous relaxation can be recast as a semidefinite program. A truncated moment sequence $y=(y_j)_{j=0}^{2m} \in \mathbb{R}_+^{2m+1}$ is said to have a \emph{representing measure} over $[0,1]$ if there is a measure $\mu$ such that
$$y_j = \int^1_0 x^j d\mu \qquad \text{for all } j=0,\ldots,2m.$$
By using results about the moment problem in one variable (or the truncated Hausdorff moment problem), this is true if and only if the corresponding localizing matrices satisfy the following conditions (see \cite[Theorem 3.2(c)]{lasserre2009moments} or \cite[Theorem 4.10]{laurent2008sums}):
$$M(gy) \succeq 0 \qquad \text{ for } \qquad g(x)= x(1-x), 1$$
where $M(y)_{\alpha,\beta} = y_{\alpha+\beta}$ is a Hankel matrix and
$(gy)_\gamma = \sum_{\delta \geq 0} g_\delta y_{\delta+\gamma}$, with all indices denoting subscripts of the moment sequence. The sizes of the localizing matrices can be truncated to $m \times m$ and $(m+1) \times (m+1)$ as the constraints only apply up to $y_{2m}$.  Hence, the SDP formulation is given by
$$\alpha_c(r) = \min\{y_1 \mid \left(y_{i+j}\right)_{i,j=0}^{m}\succeq 0, \left(y_{i+j+1}-y_{i+j+2}\right)_{i,j=0}^{m-1}\succeq 0\text{ and }\forall k \in [m], \, y_{2k} = r_k\}.$$

We now proceed to show that interpolation inequalities between $L^p$ norms provides a lower bound for $\alpha_c(r)$ and even the exact solutions for $m\leq 2$. The standard interpolation (Hölder's) inequality for $L^p$ norms gives
$$
\left(\int x^{2a} d \mu\right)^{1/(2a)}
\leq
\left(\int x d \mu\right)^\theta
\left(\int x^{2b} d\mu\right)^{(1-\theta)/(2b)},
$$
where $\theta:=(b-a)/(a(2b-1))$ is such that
$$\frac{1}{2a} = \frac{\theta}{1} + \frac{1-\theta}{2b}.$$
For any $1< 2a < 2b$, such a $\theta \in (0,1)$ exists and provides a lower bound for $\int x d \mu$. In particular, applying the interpolation inequality with $a=1$ and $b=2$ (which is the smallest integer setting), we obtain

$$\left(\int x^{2} d \mu\right)^{1/2}
\leq
\left(\int x d \mu\right)^{1/3}
\left(\int x^{4} d\mu\right)^{1/6} \qquad \implies \qquad\alpha_c(r) \geq \frac{r^{3/2}_1}{\sqrt{r_2}}.$$
Note that, unsing different considerations, the same inequality has been obtained in \cite[Section 3.2]{tarabunga2026quantifying}. The lower bound above is attained by the unnormalized measure
$$
\mu_* = \frac{r_1^2}{r_2}\,\delta_{\sqrt{r_2/r_1}}.
$$
Indeed, we have
$$\int x\,d\mu_*=\frac{r_1^{3/2}}{r_2^{1/2}}, \qquad\qquad \int x^2\,d\mu_*=r_1, \qquad\qquad \int x^4\,d\mu_*=r_2.$$
We have thus proven the following result.
\begin{theorem}\label{thm:r1-r2-alpha-c}
In the case $m=2$, for every $0<r_2 \leq r_1$, we have
$$
\alpha_c(r_1,r_2)=\frac{r_1^{3/2}}{\sqrt{r_2}}.
$$
\end{theorem}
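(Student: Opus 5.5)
The plan is to prove matching lower and upper bounds. The lower bound comes from Hölder's inequality and holds for every feasible measure. The upper bound comes from exhibiting the single-atom measure $\mu_*$, which is feasible exactly under the hypothesis $r_2\leq r_1$.

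\emph{Lower bound.} Let $\mu\in\mathbb B[0,1]$ be any feasible measure, so that $\int x^2\,d\mu=r_1$ and $\int x^4\,d\mu=r_2$. I would split the integrand as $x^2=x^{2/3}\cdot x^{4/3}$ and apply Hölder's inequality with conjugate exponents $3/2$ and $3$:
\[
r_1=\int x^{2/3}x^{4/3}\,d\mu\leq\Bigl(\int x\,d\mu\Bigr)^{2/3}\Bigl(\int x^4\,d\mu\Bigr)^{1/3}=\Bigl(\int x\,d\mu\Bigr)^{2/3}r_2^{1/3}.
\]
This is exactly the interpolation inequality displayed above with $a=1$, $b=2$, $\theta=1/3$. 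Since $r_1>0$ (because $r_2>0$ and $r_2\leq r_1$), rearranging gives $\int x\,d\mu\geq r_1^{3/2}/\sqrt{r_2}$. Taking the infimum over feasible $\mu$ yields $\alpha_c(r_1,r_2)\geq r_1^{3/2}/\sqrt{r_2}$.

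\emph{Upper bound.} I would take $\mu_*=\frac{r_1^2}{r_2}\delta_{x_*}$ with $x_*=\sqrt{r_2/r_1}$. Two checks are needed.
\begin{itemize}
\item Admissibility: the weight $r_1^2/r_2$ is positive, and $x_*\in[0,1]$ precisely because $0<r_2\leq r_1$. This is the only place the hypothesis enters, and it is the point to state carefully. Note that $\mu_*$ need not be a probability measure, which is allowed in $\mathbb B[0,1]$.
\item Moments: a direct computation gives $\int x^2\,d\mu_*=r_1$, $\int x^4\,d\mu_*=r_2$ and $\int x\,d\mu_*=r_1^{3/2}/\sqrt{r_2}$.
\end{itemize}
Thus $\mu_*$ is feasible and attains the lower bound. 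Consequently the minimum in the definition of $\alpha_c$ is attained and equals $r_1^{3/2}/\sqrt{r_2}$.

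\emph{Possible refinements.} There is no real obstacle in this argument. The only points of care are the following.
\begin{itemize}
\item The equality case of Hölder's inequality forces $\mu$ to be supported on a single point $x>0$, which explains why a one-atom measure is optimal.
\item If one wants the statement to cover feasibility boundaries, then for $r_2>r_1$ the Hölder bound remains valid, but $\mu_*$ leaves $[0,1]$. In that regime the constraint $x\leq1$ becomes active and the formula need not hold, which is why the theorem restricts to $r_2\leq r_1$.
\end{itemize}
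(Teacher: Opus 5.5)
Your proof is correct and is essentially the paper's argument. The lower bound is the Hölder interpolation inequality with $(a,b)=(1,2)$; your exponents $3/2$ and $3$ give exactly the square of the paper's displayed inequality. The bound is then attained by the same one-atom measure $\mu_*=\frac{r_1^2}{r_2}\delta_{\sqrt{r_2/r_1}}$, with $r_2\leq r_1$ used precisely to place the atom in $[0,1]$.

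One minor correction to your closing remark: for $r_2>r_1$ the problem is simply infeasible. Indeed, $x^4\leq x^2$ on $[0,1]$ forces $\int x^4\,d\mu\leq\int x^2\,d\mu$ for every admissible $\mu$.
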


We illustrate this result with an example showing that $\alpha_c < \alpha$.

\begin{example}\label{ex:alpha-c-strict}
Consider the two-moment vector $r=(r_1,r_2)=(1, 2/3)$.
We claim that the continuous relaxation is strictly smaller than the discrete optimum for this vector. The formula above gives
$$
\alpha_c\left(1,\frac{2}{3}\right)=\sqrt{\frac{3}{2}}.
$$
This value is attained by the positive, unnormalized measure $\mu_* = (3/2)\delta_{\sqrt{2/3}}$; note that this measure has a \emph{non-integer weight} for its Dirac mass, so it is not feasible for the optimization problem in \cref{eq:alpha-minimization}. Indeed, in the notation of
\cref{thm:r1-r2-exact},
$$
n=\left\lceil\frac{r_1^2}{r_2}\right\rceil=2,
\qquad
\Delta=\sqrt{2r_2-r_1^2}=\frac{1}{\sqrt{3}}.
$$
Consequently,
$$
\alpha\left(1,\frac{2}{3}\right)
=\frac{1}{\sqrt{2}}
\left(
\sqrt{1+\frac{1}{\sqrt{3}}}
+\sqrt{1-\frac{1}{\sqrt{3}}}
\right)=\sqrt{1+\sqrt{\frac{2}{3}}},
$$
proving the claim. Actually, one can easily see from \cref{thm:r1-r2-exact,thm:r1-r2-alpha-c} that
$$\alpha(r_1,r_2) = \alpha_c(r_1,r_2) \quad \iff \quad \frac{r_1^2}{r_2} \in \N.$$
\end{example}

It is natural to ask whether other such interpolation inequalities provide independent bounds for $\alpha_c(r)$. We show next that the interpolation inequality at $a=1, b=2$ provides the tightest lower bound for $\alpha_c(r)$.
\begin{proposition}
The supremum of the following optimization problem ($s \neq 0$):
    $$\gamma(s) =\max_{a < b \in \N} \frac{\|s\|_{2a}^{2a(2b-1)/(b-a)}}{\|s\|_{2b}^{2b(2a-1)/(b-a)}}$$
    is attained at $a=1$ and $b=2$.
\end{proposition}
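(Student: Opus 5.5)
The plan is to pass to logarithms and reduce the statement to an elementary fact about chords of a convex function of one variable. Fix $s\neq 0$ with nonnegative entries and set
$$
\varphi(p):=\log\sum_i s_i^p=\log\|s\|_p^p,\qquad p>0 .
$$
By Hölder's inequality, $p\mapsto\varphi(p)$ is convex. This is the log-convexity of $p$-norms already invoked in the introduction: the exponent $p$ enters the sum linearly, so $\sum_i s_i^{\theta p+(1-\theta)q}\leq(\sum_i s_i^p)^\theta(\sum_i s_i^q)^{1-\theta}$. Call the quantity inside the maximum $Q(a,b)$. Its logarithm is
$$
\log Q(a,b)=\frac{(2b-1)\varphi(2a)-(2a-1)\varphi(2b)}{b-a}.
$$

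Next I would recentre at $p=1$. Put $T(x):=\varphi(1+x)-\varphi(1)$, which is convex on $[0,\infty)$ with $T(0)=0$, and write $u:=2a-1$, $v:=2b-1$. A direct rearrangement, using $(2b-1)-(2a-1)=2(b-a)$, gives
$$
\log Q(a,b)=2\varphi(1)+2\,\frac{vT(u)-uT(v)}{v-u}.
$$
The fraction $\frac{vT(u)-uT(v)}{v-u}$ is exactly the value at $x=0$ of the affine function interpolating $T$ at $x=u$ and $x=v$, i.e.\ the intercept $c(u,v)$ of the chord of $T$ over $[u,v]$. Maximizing $Q$ over $a<b$ is therefore the same as maximizing the chord intercept $c(u,v)$ over odd integers $1\leq u<v$. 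The constraint $a<b$ in $\N$ forces $u\geq1$ and $v\geq3$. As a by-product, $c\leq T(0)=0$, which recovers the interpolation bound $Q\leq\|s\|_1^2$.

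The key step is a monotonicity lemma for convex functions. If $u\geq u'$ and $v\geq v'$, with $u<v$ and $u'<v'$, all to the right of $0$, then $c(u,v)\leq c(u',v')$. I would prove it in two moves, each changing one endpoint.

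When the right endpoint is increased from $v'$ to $v$, the chord pivots about $(u,T(u))$ and its slope does not decrease, by the three-chord lemma. Since $0<u$, evaluating at $0$ lowers the intercept.

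When the left endpoint is increased, the chord pivots about $(v,T(v))$ and again its slope does not decrease. Since $0<v$, the intercept again does not increase.

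Applying the lemma with $(u',v')=(1,3)$ gives $c(u,v)\leq c(1,3)$ for all admissible pairs, hence $Q(a,b)\leq Q(1,2)$. The supremum is therefore attained at $a=1$, $b=2$.

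I expect the main obstacle to be bookkeeping rather than ideas. One point is getting the identity for $\log Q$ exactly right, including the sign and the factor $2$. The other is handling degenerate cases. If all nonzero $s_i$ are equal, or there is a single nonzero entry, then $T$ is affine and all $Q(a,b)$ coincide, which is consistent with the claim (attained, though not uniquely). Zero entries are harmless because $\varphi$ only involves the positive $s_i$.
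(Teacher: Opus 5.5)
Your proposal is correct and is essentially the paper's argument: the paper also uses convexity of $P(t)=\log\sum_i|s_i|^t$, rewrites $\log\gamma$ both as $2P(2a)-(2a-1)\frac{P(2b)-P(2a)}{b-a}$ and as $2P(2b)-(2b-1)\frac{P(2b)-P(2a)}{b-a}$ (these are exactly your two chord-intercept pivots about the left and right endpoints), and concludes from the monotonicity of secant slopes in each endpoint. Your recentring at $p=1$ is a tidy geometric repackaging that also gives $Q(a,b)\leq\|s\|_1^2$; when chaining the two moves, take $(1,3)\to(1,v)\to(u,v)$ so that every intermediate pair stays admissible.
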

\begin{proof}
    The proof follows from the log-convexity of the $p$-norms, i.e.~
    $P(t) := \operatorname{log}\|s\|^{t}_{t} = \operatorname{log}\sum_i |s_i|^t$
    is a convex function. First, note that by the monotonicity of the logarithm, it follows that
    $$\operatorname{log}\gamma(s) = \max_{a < b \in \N} \frac{(2b-1) P(2a) - (2a-1)  P(2b)}{b-a}  $$
    also attains its maximum at the same arguments. Now, note that the secant
    $\frac{P(2b) - P(2a)}{b-a}$ is a nondecreasing function in both $a$ and $b$ due to the convexity of $P$.

    $$\frac{(2b-1) P(2a) - (2a-1)  P(2b)}{b-a} =  2P(2a) - (2a - 1) \frac{P(2b) - P(2a)}{b-a}$$ which shows that this function is nonincreasing in $b$. Similarly, we have,

    $$\frac{(2b-1) P(2a) - (2a-1)  P(2b)}{b-a} =  2 P(2b) - (2b - 1) \frac{P(2b) - P(2a)}{b-a}$$ which shows it is nonincreasing in $a$. This implies that the maximum is attained at the minimum possible values of $a$ and $b$, proving the result.
\end{proof}

\subsection{Universal minorants}
We describe here a systematic approach to provide lower bounds for the quantity $\alpha_c(r)$ by considering the \emph{dual optimization problem}. This approach also enables us to construct various elements in $\ineqsep$ of various degrees in the later section. By the generalized moment-problem duality
\cite[Section~1.2, Theorems~1.2--1.3 and Eq.~(1.6)]
{lasserre2009moments}, and after the change of variables $t=x^2$,
the dual problem is
\begin{equation}
\beta_c(r) := \operatorname{sup}\{\sum^m_{k=1}p_k r_k \mid \forall x \in [0,1], \, \sum^m_{k=1} p_k x^k \leq \sqrt{x}\}
\end{equation}

Here, instead of polynomials that provide optimum of $\beta_c(r)$ for some $r$, we would like to look at points that approximate $\beta_c(r)$ for all $r \in \mathbb{R}^m$; we call such points \emph{universal} for $\beta_c(r)$.

\begin{enumerate}
    \item For any $m\geq1$, we can choose the \emph{universal binomial minorant}
\begin{equation}\label{eq:binomial-realignment-minorant}
f_m(x):=
x\sum_{a=0}^{m-1}\frac{(2a)!}{4^a (a!)^2}(1-x)^a,
\qquad 0\leq x\leq1,
\end{equation} since it satisfies $f_m(x) \leq \sqrt{x}$ for $x \in [0,1].$ It is easy to see that $f_m(x)$ is a minorant by considering the Taylor expansion around $x=1^-$:
$$\sqrt{x} =\frac{x}{\sqrt{1-(1-x)}} = x \sum_{a=0}^{\infty} \frac{(2a)!}{4^a (a!)^2} (1-x)^a.$$ Since each term of the expansion is positive for all $0 \leq x \leq 1$, we can conclude that $f_m(x) \leq \sqrt{x}.$
\item We also introduce a minorant based on $\beta$-weighing the cost function; we call these the \emph{universal $L^1_\beta$-weighted minorants}:
\begin{equation}
\label{eq:l1beta}
g_{m,\beta}\in
\operatorname*{arg\,min}_{h\in\mathcal H_m}
\int_0^1 x^{-\beta}\bigl(\sqrt{x}-h(x)\bigr)\,\dd x,
\end{equation}
where we define the set of polynomials
\begin{equation}
\label{eq:Hm}
\mathcal H_m:=\left\{
h(x)=\sum_{k=1}^m a_kx^k:
h(1)=1,\ h(x)\leq\sqrt{x}\ \forall x\in[0,1]
\right\}.
\end{equation}

We present the SDP formulation of this in the appendix, and solutions to this optimization problem for all $\beta \in [0,3/2)$. These universal minorants better approximate the function $\sqrt{x}$ near $0$ as $\beta$ increases (whereas the previous $f_m$ minorant provides a good approximation around $1$).

\end{enumerate}

The binomial minorant families $f_m$ converge pointwise to the function $\sqrt{x}$ in the interval $[0,1]$. The $L^1_\beta$ minorants are obtained by minimizing an integrated approximation error and, unlike the binomial family, need not form a nested hierarchy. Nevertheless, their flexibility can yield substantially improved detection power at low orders.
We plot these polynomial minorants in the  \cref{fig:minorants}, which allows to judge the quality of the approximation of $\sqrt{x}$ by these minorants.

\begin{figure}[htb]
    \centering
    \includegraphics[width=0.5\linewidth]{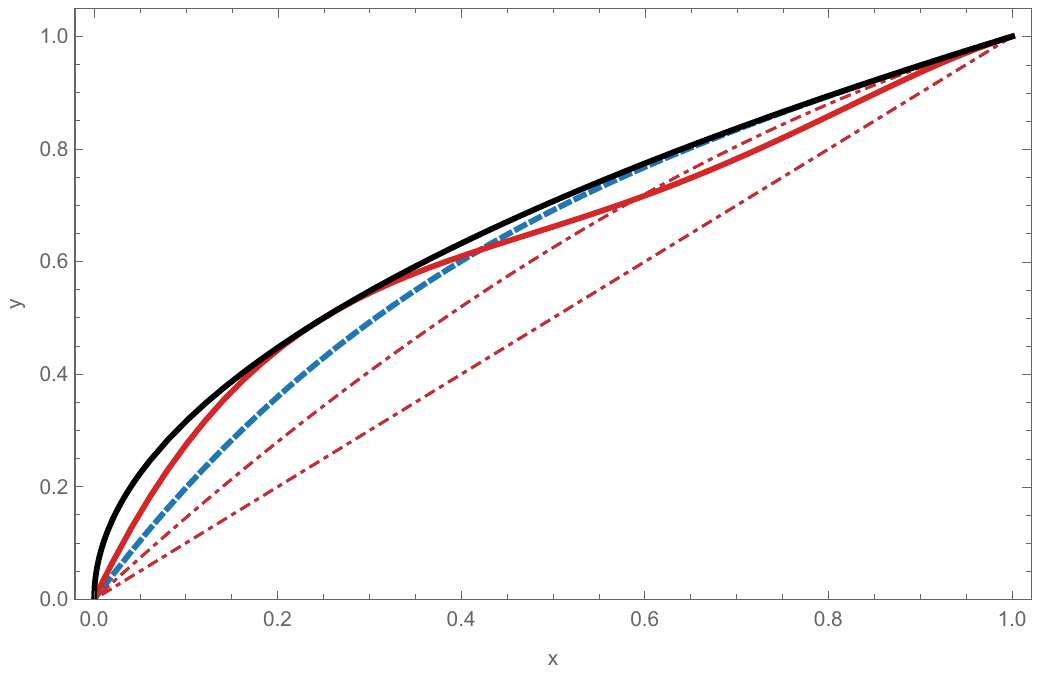}
    \caption{The figure shows the polynomial minorants of $y=\sqrt{x}$. The binomial minorant $f_4$ is presented in dashed blue, as well as the the minorant $g_{4,1}$ obtained from \cref{eq:l1beta} in red. The first two minorants $f_1, f_2$, are represented by $y=x$ and $y =\frac{3}{2}x - \frac{1}{2}x^2$ in dot-dashed lines. For every $\beta$, the minorants $g_{1,\beta}$ and $g_{2,\beta}$ coincide with these two curves. Qualitatively, the binomial minorant approximates better the curve at $1$, while the SDP based minorant is better near $0.$}
    \label{fig:minorants}
\end{figure}

\subsection{Inequalities for separable states}
Recall that we have the convex cone of dimension independent inequalities on separable states $$
\ineqsep_n
=
\left\{
w\in\mathbb R[S_n\times S_n]:
\begin{array}{l}
\Tr_w(\rho)\geq0\ \text{for every }d_A,d_B\geq1\\
\text{and every }\rho\in\Sep(\mathbb C^{d_A}:\mathbb C^{d_B})
\end{array}
\right\}.
$$

We will now show that the results of the previous section can be used to derive elements in the convex cone $\ineqsep_n.$ Firstly, combining the realignment criterion and enhanced realignment criterion with the results of the previous section, we have the following simple corollary.
\begin{proposition}
    \label{thm:realignment-moments-sep}
    Let $r(X) \in \mathbb{R}^m$ be the vector of the first $m$ even moments of the realignment of a separable state $X$. Then, the quantity $\alpha$ from \cref{eq:alpha-minimization} is bounded by
    $$\alpha\left(r\left(\frac{X}{t_X}\right)\right) \leq\left\| \frac{\Rn(X)}{t_X}\right\|_1 \leq 1.$$

    Similarly, let $\tilde X := t_XX - X_A \otimes X_B$ be the centered operator associated with the separable state $X$. Assume that $G(X)= (t^2_X-\pi_A(X))(t^2_X-\pi_B(X)) \neq 0$. Then,
    $$\alpha\left(r\left(\frac{\tilde X}{\sqrt{G(X)}}\right)\right) \leq \left\| \frac{\Rn{(\tilde X)}}{\sqrt{G(X)}}\right\|_1 \leq 1.$$
\end{proposition}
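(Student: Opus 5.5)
The plan is to prove each chain of two inequalities separately, and to treat the right-hand inequality in each chain as a restatement of the corresponding realignment criterion.

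\emph{Standard case.} First I handle $X=0$, which is excluded since $t_X$ appears in the denominator. For separable $X\neq 0$ we have $t_X>0$, and since $\Rn$ is linear, $\Rn(X/t_X)=\Rn(X)/t_X$.

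Let $s_1,\dots,s_N$ be the singular values of $\Rn(X/t_X)$. By \cref{prop:even-realignment-moments-as-trace-invariants}, its even moments satisfy
\[
r_k(X/t_X)=\sum_i s_i^{2k}=r_k(X)/t_X^{2k}.
\]
The CCNR criterion gives $\sum_i s_i\le 1$. Hence every $s_i$ lies in $[0,1]$, and $r_1\le1$ as required in \cref{eq:alpha-minimization}.

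So the vector $s\in[0,1]^N$ is feasible for the minimization defining $\alpha(r(X/t_X))$, with $n=N$. By definition of the minimum,
\[
\alpha\bigl(r(X/t_X)\bigr)\le \sum_i s_i=\|\Rn(X)\|_1/t_X\le 1,
\]
where the last step is again the CCNR criterion.

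\emph{Centered case.} The argument is the same, with the normalization $\sqrt{G(X)}$ replacing $t_X$. The hypothesis $G(X)\ne0$ makes the division legitimate. For separable $X$ one has $\pi_A\le t_X^2$ and $\pi_B\le t_X^2$, for instance from the $\ineqsep_2$ description (\cref{cor:ineqsep2-extremal-rays}) or directly from $\Tr(Y^2)\le(\Tr Y)^2$ for $Y\succeq0$. Hence $G(X)\ge0$, and therefore $G(X)>0$, so $\sqrt{G(X)}$ is a positive real.

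Let $\tilde s_i$ be the singular values of $\Rn(\tilde X)/\sqrt{G(X)}$. The enhanced realignment criterion gives $\sum_i\tilde s_i\le1$, so each $\tilde s_i\in[0,1]$.

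Because $\tilde X$ is Hermitian, \cref{prop:even-realignment-moments-as-trace-invariants} applies to $\tilde X/\sqrt{G(X)}$. Its even moments are therefore exactly $\sum_i\tilde s_i^{2k}$, which makes $\tilde s$ feasible for the problem defining $\alpha$. The same chain of inequalities then follows.

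\emph{Main obstacle.} The only delicate point is the homogeneity and normalization of the enhanced criterion. It is stated for separable \emph{states}, whereas here $X$ may be unnormalized. I would check that both sides scale the same way. Under $X\mapsto cX$, the centered operator satisfies $\tilde X\mapsto c^2\tilde X$ and $G\mapsto c^4G$. Thus $\|\Rn(\tilde X)\|_1\le\sqrt{G(X)}$ is homogeneous of degree $2$ on both sides, and it transfers from $X/t_X$ to $X$. The same homogeneity check, now of degree $1$, handles the standard CCNR bound $\|\Rn(X)\|_1\le t_X$.

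Everything else is the observation that a realizable singular value vector is a feasible point of \cref{eq:alpha-minimization}.
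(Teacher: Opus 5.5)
Your proposal is correct and follows the paper's argument. The normalized singular-value vector of the realignment has entries in $[0,1]$ by the (enhanced) realignment criterion, so it is feasible for \cref{eq:alpha-minimization}, and hence $\alpha$ is at most its sum, which is at most $1$; you also supply the centered case that the paper leaves to the reader, including the checks that $G(X)>0$ for separable $X$ and that both sides of the enhanced criterion are homogeneous of degree two.
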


\begin{proof}
We use the following elementary observation.  Let $Y$ be a bipartite
operator, and let $s_1,\ldots,s_N$ be the singular values of $\Rn(Y)$,
including zero singular values if necessary.  By the definition of the
realignment moments, we have, for all $k \geq 1$,
\begin{equation}\label{eq:realignment-moments-singular-values}
r_k(Y)=\Tr\!\left[\bigl(\Rn(Y)\Rn(Y)^*\bigr)^k\right]
=\sum_{i=1}^N s_i^{2k}.
\end{equation}
Moreover, for every $c\geq0$, linearity of the realignment map gives
$\Rn(cY)=c\Rn(Y)$.  Consequently, the singular values of $\Rn(cY)$ are
$cs_1,\ldots,cs_N$, and
\begin{equation}\label{eq:realignment-moment-homogeneity}
r_k(cY)=c^{2k}r_k(Y).
\end{equation}

We first consider the uncentered statement.  Since $X$ is a state,
$t_X=\Tr(X)>0$.  Let $s_1,\ldots,s_N$ be the singular values of
$\Rn(X)$ and set $u_i:=s_i/t_X$ for $i=1,\ldots,N$.
The realignment criterion for separable states implies
$$
\sum_{i=1}^N u_i
=\frac{\|\Rn(X)\|_1}{t_X}
\leq1.
$$
In particular, every $u_i$ belongs to $[0,1]$.  Furthermore,
\eqref{eq:realignment-moments-singular-values} and
\eqref{eq:realignment-moment-homogeneity} show that, for every
$k\in[m]$,
$$
\sum_{i=1}^N u_i^{2k}
=\frac{r_k(X)}{t_X^{2k}}
=r_k\!\left(\frac{X}{t_X}\right).
$$
Thus, $u=(u_1,\ldots,u_N)$ is a feasible vector in the minimization
problem defining $\alpha\bigl(r(X/t_X)\bigr)$.  Since $\alpha$ is the
minimum of the sum of the entries over all such feasible vectors, we obtain
$$
\alpha\left(r\left(\frac{X}{t_X}\right)\right)
\leq \sum_{i=1}^N u_i
=\left\|\frac{\Rn(X)}{t_X}\right\|_1
\leq1.
$$

The proof of the centered statement is very similar and left to the reader.
\end{proof}

\begin{remark}\label{rem:G-zero}
    Regarding the second point in the result above, if $G(X) = 0$, then either $\pi_A(X) = t^2_X.$ or $\pi_B(X) = t^2_X.$ Then, considering the first case, $\pi_A(X) = t^2_X \iff \pi_A(\frac{X}{t_X})=1.$ This allows us to write $\frac{X_A}{t_X} = \ketbra{\psi}{\psi}$ with $\|\psi\|_2=1$. Hence, $X$ is a product matrix, $X = Y \otimes Z$ for some $Y,Z \in \mathcal{M}^+$, which consequently implies that $\tilde X = 0.$ Therefore, all moments are $0$, while the state is \emph{detected} as separable (in particular, it is a product state) by just estimating $r_1(\tilde X)$. 
\end{remark}

\begin{remark}
    Note that $r_1\Big(\frac{\tilde X}{\sqrt{G(X)}}\Big)$ can be strictly $>1$ $\iff r_1(\tilde X) > G(X)$, but this immediately implies that $X$ is entangled. 
\end{remark}

Recall that we showed that $\alpha(r(X))$ can be computed exactly for $r(X) \in \mathbb{R}^2$, and also has a neat lower bound $\alpha_c(r(X))$ for higher number of moments. This lower bound can be computed exactly by solving an SDP of $O(m)$ where $m$ are the number of accessible moments. Dually, lower bounds on $\alpha_c(r(X))$ can be obtained by \emph{feasible points} of the optimization problem $\beta_c(r)$, that correspond to polynomial minorants of $\sqrt{x}$ on the interval $[0,1].$
We present some of the inequalities we can derive on separable states using \cref{thm:realignment-moments-sep}.

The first inequality can be derived from the interpolation lower bound \cref{thm:r1-r2-alpha-c}; this is exactly the degree $6$ inequality
\begin{equation}
\label{eq:interpolation-sep-re}
    r^3_1(X) \leq r_2(X)t_X^2.
\end{equation}

The same inequality has been identified in \cite[eq. (23)]{tarabunga2026quantifying} in terms of a new quantity called the $r_4$-negativity. Then, the \cref{cor:alpha-leq-1-r1-1/2-1} implies the following inequality on all separable states with $r_1 = \operatorname{Tr}(X^2) \geq \frac{1}{2}$:

\begin{equation}
\label{eq:chi-inequality}
    2r_2(X)+t_X^4 \geq r_1(X)^2+2t_X^2r_1(X).
\end{equation}

Although this inequality holds for $r_1 \leq \frac{1}{2}$ as well, it is weaker than the inequality in \cref{eq:interpolation-sep-re} in this regime. Finally, for any polynomial minorant $p := \sum^m_{k=1} p_k x^k$ of $\sqrt{x}$ in the interval $[0,1]$, we have the following inequality:

\begin{equation}
\label{eq:minorant-ineq}
    \sum^m_{k=1} p_k r_k t_X^{2(m-k)} \leq t^{2m}_X.
\end{equation}

By choosing either the binomial minorant/$L^1$-beta approximation, we obtain the first two such inequalities:

\begin{equation}
\label{eq:1-and-2-minorants}
r_1 \leq t^2_X \qquad\text{ and }\qquad \frac{3}{2} r_1 t^2_X- \frac{1}{2} r_2 \leq t^4_X.
\end{equation}

Note that the first inequality actually belongs to the smaller set $\ineqpsd_2$ while the second one is the non-trivial inequality belonging to $\ineqsep_4 \backslash \iota_+(\ineqpsd_4)$.

Finaly, all the enhanced realignment based inequalities can be obtained by formally replacing $t^2_X \to G(X), X \to \tilde X$, following the similarity of the result in \cref{thm:realignment-moments-sep}. For example, we have the following inequalities for all separable states which follow from \cref{eq:interpolation-sep-re} and, respectivlely, \cref{eq:chi-inequality}:

\begin{enumerate}
    \item $r^3_1(\tilde X) \leq r_2(\tilde X)G(X)$
    \item $2 r_2(\tilde X) + G(X)^2 \geq r_1(\tilde X)^2 + 2G(X)r_1(\tilde X)$.
\end{enumerate}

Moreover, for any polynomial minorant $p := \sum^m_{k=1} p_k x^k$ of $\sqrt{x}$ in the interval $[0,1]$, we have the following inequality.

\begin{equation}
\label{eq:}
    \sum^m_{k=1} p_k r_k(\tilde X) G(X)^{m-k} \leq G(X)^m
\end{equation}

Recall that the enhanced realignment criterion implies the realignment criterion for all quantum states. It is natural to ask whether this also applies to the moment-based criterion, i.e.:

$$\forall X \in \mathcal{M}^+,  \quad \alpha\left(r\left(\frac{\tilde X}{\sqrt{G(X)}}\right)\right) \leq 1 \implies \alpha\left(r\left(\frac{X}{t_X}\right)\right) \leq 1.$$

In the implication above, we tacitly assume that $X\neq0$ and
$G(X)\neq0$, so that the normalization $\tilde X/\sqrt{G(X)}$
makes sense, see \cref{rem:G-zero}.

\section{Detection efficiency of the criteria} \label{sec:detection}
In this section, we examine the detection efficiency of the following degree-four
entanglement criteria:
\begin{align}
\label{eq:crit-one-moment}
\tilde r_1&\leq G,\\
\label{eq:crit-minorant2}
\tfrac32 r_1-\tfrac12 r_2&\leq1,\\
\label{eq:crit-pairsum}
2\bigl( r_1^{2}- r_2\bigr)&\leq(1- r_1)^2,
\end{align}
Their provenance is as follows.  Criterion \eqref{eq:crit-one-moment} is obtained from
\cref{thm:realignment-moments-sep} for $m=1$ and it
uses a single centered realignment moment.  Criterion \eqref{eq:crit-minorant2} is the
second binomial minorant $f_2(x)=\frac32x-\frac12x^2$ of \cref{eq:binomial-realignment-minorant} and
Criterion \cref{eq:crit-pairsum} is \cref{prop:pairsum}: for
$ r_1\in[\frac12,1]$ it is \emph{equivalent} to $\alpha( r_1,
r_2)\leq1$. 

Finally, we shall use the standard two-moment interpolation criterion 
\begin{equation}\label{eq:crit-interpolation}
r_1^{3}\leq r_2,
\end{equation}
which follows from the interpolation bound
$\alpha_c(r)\geq r_1^{3/2}r_2^{-1/2}$ with $(a,b)=(1,2)$, and is homogeneous
of degree six before restricting to normalized states: for an unnormalized
separable operator $X$, it reads
$r_1(X)^3\leq t_X^2r_2(X)$.

\Cref{sec:pure} shows that each of the four
criteria discussed above detects every entangled pure state. \Cref{sec:comparison} establishes an ordering between the
standard interpolation and 3 moments PPT criteria, and \cref{sec:degree-four} compares the
degree-four criteria against one another, both on explicit states and statistically.
\Cref{sec:tiles} is devoted to PPT entanglement, where we consider the $3 \times 3$ Tiles state, and finally, \cref{sec:ppt-state}
exhibits a PPT entangled state in $\C^3\otimes\C^3$ that is detected at
substantially lower binomial orders, and for which the centered interpolation
criterion already achieves detection using only two centered moments.

\subsection{Pure states}\label{sec:pure}

The next proposition shows that each of the four criteria above detects every entangled pure state.
\begin{proposition}\label{prop:pure-general}
Let $\rho=\proj\psi$ be a pure state.  Each of the following three, degree four criteria is
violated if and only if $\psi$ is entangled:
\begin{enumerate}
\item the one-moment centered criterion \cref{eq:crit-one-moment}
$\tilde r_1\le G$;
\item the degree-two minorant criterion \cref{eq:crit-minorant2}
$\tfrac32 r_1-\tfrac12 r_2\le1$;
\item the pair-sum criterion \cref{eq:crit-pairsum}
$2( r_1^{\,2}- r_2)\le(1- r_1)^2$;
\end{enumerate}
\end{proposition}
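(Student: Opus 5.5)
The plan is to reduce all three criteria to explicit functions of the Schmidt coefficients of $\psi$, and then compare them directly.

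\textbf{Reduction to Schmidt coefficients.} Write $\ket\psi=\sum_i\sqrt{\lambda_i}\,\ket{i}\otimes\ket{i}$ with $\lambda_i\geq0$ and $\sum_i\lambda_i=1$. Then $\rho=\sum_{i,j}\sqrt{\lambda_i\lambda_j}\,\ket{ii}\bra{jj}$. The only nonzero entries of $\rho$ are $\rho_{ii,kk}=\sqrt{\lambda_i\lambda_k}$, so by $\Rn(X)_{ik,jl}=X_{ij,kl}$ the realigned matrix $\Rn(\rho)$ is diagonal, with entries $\sqrt{\lambda_i\lambda_k}$ at positions $(ik,ik)$. Its singular values are therefore $s_{ik}=\sqrt{\lambda_i\lambda_k}$ for all pairs $(i,k)$.

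Put $P:=\sum_i\lambda_i^2=\pi_A=\pi_B$ and $Q:=\sum_i\lambda_i^3$. Then
\[
r_1=\sum_{i,k}\lambda_i\lambda_k=1,
\qquad
r_2=\sum_{i,k}\lambda_i^2\lambda_k^2=P^2 .
\]
Throughout we use the standard fact that $\psi$ is entangled if and only if it has at least two nonzero Schmidt coefficients, which is equivalent to $P<1$.

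\textbf{Criteria (2) and (3).} Substituting $r_1=1$ and $r_2=P^2$:
\begin{itemize}
\item Criterion \eqref{eq:crit-minorant2} reads $\tfrac32-\tfrac12P^2\leq1$, i.e.\ $P^2\geq1$. This holds only when $P=1$, so it is violated exactly when $\psi$ is entangled.
\item Criterion \eqref{eq:crit-pairsum} reads $2(1-P^2)\leq0$. This again holds only when $P=1$.
\end{itemize}

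\textbf{Criterion (1).} Here $t=1$, and I use the expansion \eqref{eq:r1-tildeX}:
\[
\tilde r_1=p_2(\rho)-2\pi_{\mathrm{mix}}(\rho)+\pi_A\pi_B .
\]
The three terms are:
\begin{itemize}
\item $p_2(\rho)=1$, since $\rho$ is pure;
\item $\pi_A\pi_B=P^2$;
\item $\pi_{\mathrm{mix}}=\bra\psi\rho_A\otimes\rho_B\ket\psi=\sum_i\lambda_i\cdot\lambda_i\lambda_i=Q$, using $\rho_A=\rho_B=\diag(\lambda_i)$.
\end{itemize}
Also $G=(1-P)^2$. The criterion $\tilde r_1\leq G$ thus becomes
\[
1-2Q+P^2\leq1-2P+P^2,
\qquad\text{i.e.}\qquad Q\geq P .
\]
On the other hand, $\lambda_i^3\leq\lambda_i^2$ because $\lambda_i\in[0,1]$, so $Q\leq P$. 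Equality holds if and only if every $\lambda_i$ lies in $\{0,1\}$, that is, if and only if $\psi$ is a product vector. Hence the criterion holds exactly for product states and is violated exactly for entangled ones.

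In the product case $G=0$, but $\tilde\rho=0$ as in \cref{rem:G-zero}, so the inequality reads $0\leq0$ and no normalization by $G$ is needed.

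\textbf{Main obstacle.} The only delicate step is computing the realigned matrix from the Schmidt form, in particular checking that it is diagonal in the pair index $(i,k)$ so that its singular values can be read off directly. The identification $\pi_{\mathrm{mix}}=Q$ also needs care. After that, each criterion is a one-line comparison of $P$ and $Q$.
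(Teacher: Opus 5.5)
Your proof is correct and follows the paper's argument essentially step for step. Both use the Schmidt decomposition and the realignment singular values $\sqrt{\lambda_i\lambda_k}$ to get $r_1=1$ and $r_2=P^2$, reduce criteria (2) and (3) to $P\geq 1$, and use the identity $G-\tilde r_1=-2(P-Q)$ for the centered criterion. Your one implicit step, taking the Schmidt bases to be the computational bases, is harmless because every quantity involved is a local unitary invariant.
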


\begin{proof}
Write the Schmidt decomposition of the pure state as
$$
\ket\psi=\sum_{i=1}^{r}s_i\ket{ii},
\qquad
s_i>0,
\qquad
\sum_{i=1}^{r}s_i^{\,2}=1,
$$
where $r$ is the Schmidt rank of $\ket\psi$; the state is separable precisely
when $r=1$.  In the Schmidt bases,
$$
\rho=\proj\psi=\sum_{i,j=1}^{r}s_is_j\ket{e_if_i}\bra{e_jf_j},
\qquad\text{so that}\qquad
\Rn(\rho)=\sum_{i,j=1}^{r}s_is_j\ket{e_ie_j}\bra{f_if_j}.
$$
The latter is a sum of rank-one terms with orthonormal left and right vectors,
so the singular values of $\Rn(\rho)$ are exactly the products $s_is_j$ of
Schmidt coefficients.  Consequently the realignment moments are available in
closed form:
$$
r_1=\sum_{i,j=1}^{r}s_i^{\,2}s_j^{\,2}=\Bigl(\sum_{i=1}^{r}s_i^{\,2}\Bigr)^{2}=1,
\qquad
r_2=\sum_{i,j=1}^{r}s_i^{\,4}s_j^{\,4}=\Bigl(\sum_{i=1}^{r}s_i^{\,4}\Bigr)^{2}.
$$
Write $x=(s_i^{\,2})_{i=1}^{r}$ for the vector of Schmidt coefficients, so that
$\|x\|_1=\sum_is_i^{\,2}=1$ and $\|x\|_2^{\,2}=\sum_is_i^{\,4}$.  Recall that a
nonnegative vector satisfies $\|x\|_2\leq\|x\|_1$, with equality if and only if
$x$ has support one.  Hence
\begin{equation}\label{eq:pure-support-one}
\sum_{i=1}^{r}s_i^{\,4}\leq1,
\qquad\text{with equality if and only if }r=1 .
\end{equation}
\emph{(1).}  The centered criterion requires the two marginals.  From the
Schmidt decomposition, $\rho_A=\sum_is_i^{\,2}\proj{e_i}$ and
$\rho_B=\sum_is_i^{\,2}\proj{f_i}$, so that
$$
\Tr(\rho_A^{\,2})=\Tr(\rho_B^{\,2})=\sum_{i=1}^{r}s_i^{\,4},
\qquad
\Tr\bigl(\rho(\rho_A\otimes\rho_B)\bigr)
=\bra\psi\rho_A\otimes\rho_B\ket\psi
=\sum_{i=1}^{r}s_i^{\,6}.
$$
Together with $\Tr(\rho^2)=1$ this gives
$$
G=\bigl(1-\Tr\rho_A^{\,2}\bigr)\bigl(1-\Tr\rho_B^{\,2}\bigr)
=\Bigl(1-\sum_{i}s_i^{\,4}\Bigr)^{2}
$$
and, expanding $\tilde\rho=\rho-\rho_A\otimes\rho_B$,
$$
\tilde r_1=\Tr(\tilde\rho^{\,2})
=\Tr(\rho^2)-2\Tr\bigl(\rho(\rho_A\otimes\rho_B)\bigr)+\Tr(\rho_A^{\,2})\Tr(\rho_B^{\,2})
=1-2\sum_is_i^{\,6}+\Bigl(\sum_is_i^{\,4}\Bigr)^{2}.
$$
Subtracting the above two terms, we get
$$
G-\tilde r_1
=-2\Bigl(\sum_{i=1}^{r}s_i^{\,4}-\sum_{i=1}^{r}s_i^{\,6}\Bigr)
=-2\sum_{i=1}^{r}s_i^{\,4}\bigl(1-s_i^{\,2}\bigr).
$$
Since $0<s_i^{\,2}\leq1$ for every $i$, each summand $s_i^{\,4}(1-s_i^{\,2})$
is nonnegative, and it vanishes if and only if $s_i^{\,2}=1$.  The
normalization $\sum_is_i^{\,2}=1$ forces this to happen for every $i$ exactly
when $r=1$.  Hence $G-\tilde r_1=0$ for product vectors, while
$G-\tilde r_1<0$, i.e.\ the criterion is violated, as soon as $r\geq2$.\\
\emph{(2).}  Substituting the moments computed above into the degree-two
minorant criterion $\tfrac32r_1-\tfrac12r_2\leq1$ gives
$$
\tfrac32-\tfrac12\Bigl(\sum_{i=1}^{r}s_i^{\,4}\Bigr)^{2}\leq1
\qquad\Longleftrightarrow\qquad
\sum_{i=1}^{r}s_i^{\,4}\geq1 .
$$
Comparing with \cref{eq:pure-support-one}, the criterion holds if and only if
$\sum_is_i^{\,4}=1$, that is, if and only if $\|x\|_2=\|x\|_1$, which happens
exactly when $x$, and hence $s$, has support one.  Therefore the criterion is
satisfied precisely for product vectors and violated precisely for entangled
ones.\\
\emph{(3).}  Since $r_1=1$, the right-hand side of the pair-sum criterion
$2(r_1^{\,2}-r_2)\leq(1-r_1)^2$ vanishes, and the criterion reduces to
$$
2\Bigl(1-\bigl(\textstyle\sum_is_i^{\,4}\bigr)^{2}\Bigr)\leq0
\qquad\Longleftrightarrow\qquad
\sum_{i=1}^{r}s_i^{\,4}\geq1 ,
$$
which is the same condition as in \emph{(2)}.  The conclusion follows by the
same argument.

\end{proof}

Note that \emph{(2)} and \emph{(3)} reduce to the same condition
$\sum_is_i^{\,4}\geq1$, so on pure states they are equivalent.

\subsection{Comparison of the standard interpolation criterion with the
3 moments PPT criterion}
\label{sec:comparison}

Recall the 3 moments PPT criterion of \cref{sec:sep}:
$$
q_3(\rho) = p_3\bigl(\rho^{\Gamma_B}\bigr)\geq p_2\bigl(\rho^{\Gamma_B}\bigr)^2 = q_2(\rho)^2.
$$
We compare this criterion
with the standard interpolation criterion \cref{eq:crit-interpolation}.  In
this case there is a strict ordering: every state detected by the
interpolation criterion is already detected by the 3 moments PPT criterion.
We shall use two auxiliary moment inequalities.  The first is a scalar statement for the
eigenvalues of an arbitrary Hermitian matrix which, in particular, does not
require the matrix itself to be positive semidefinite.

\begin{lemma}\label{lem:ppt3-to-ppt4}
Let $X\in\Msa_d$ be Hermitian with $p_1:=\Tr(X)>0$.  Then
$$
p_1(X)p_3(X)\geq p_2(X)^2
\quad\Longrightarrow\quad
p_1(X)^2p_4(X)\geq p_2(X)^3.
$$
\end{lemma}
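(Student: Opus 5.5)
The plan is to chain the hypothesis with the even-shift Hankel (Cauchy--Schwarz) inequality $p_2p_4\geq p_3^2$. That inequality was recorded in \cref{sec:all} as an element of $\ineqall_6$, so it holds for every Hermitian $X$, with no positivity assumption. Positivity of $p_1$ is used only to divide and to fix signs.

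First I rule out degenerate cases. If $p_2(X)=\sum_i x_i^2=0$, then $X=0$, which contradicts $p_1>0$; hence $p_2>0$. Next, the hypothesis $p_1p_3\geq p_2^2>0$ together with $p_1>0$ gives
$$
p_3\geq \frac{p_2^2}{p_1}>0.
$$
This sign information is the one delicate point. For a signed spectrum $p_3$ could a priori be negative, and then squaring the inequality would be illegitimate. Here both sides are positive, so squaring is allowed and yields $p_3^2\geq p_2^4/p_1^2$.

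Second, I apply Cauchy--Schwarz to the vectors $(x_i)_i$ and $(x_i^2)_i$:
$$
p_3^2=\Bigl(\sum_i x_i\cdot x_i^2\Bigr)^2\leq \Bigl(\sum_i x_i^2\Bigr)\Bigl(\sum_i x_i^4\Bigr)=p_2p_4 .
$$
This is exactly the determinant of the Hankel block $\bigl(\begin{smallmatrix}p_2&p_3\\ p_3&p_4\end{smallmatrix}\bigr)$ being nonnegative.

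Combining the two bounds and dividing by $p_2>0$ gives
$$
p_4\geq \frac{p_3^2}{p_2}\geq \frac{p_2^3}{p_1^2},
$$
and multiplying by $p_1^2$ yields $p_1^2p_4\geq p_2^3$. I expect no real obstacle. The only step requiring care is verifying $p_3>0$ before squaring, and this is exactly where the hypotheses $p_1>0$ and $p_1p_3\geq p_2^2$ enter.
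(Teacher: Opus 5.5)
Your proof is correct and is essentially the paper's argument. The paper applies Jensen's inequality for $t\mapsto t^2$ under the probability weights $w_i=\mu_i^2/p_2$, i.e.\ $p_4/p_2\geq (p_3/p_2)^2$, which is exactly your Cauchy--Schwarz (Hankel) bound $p_3^2\leq p_2p_4$, and then combines it with $p_3/p_2\geq p_2/p_1$ just as you do; your explicit check that $p_3>0$ before squaring makes precise a sign step the paper leaves implicit.
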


\begin{proof}
Let $\mu_1,\ldots,\mu_d\in\mathbb R$ be the eigenvalues of $X$.  Since
$p_1>0$, the matrix $X$ is nonzero and hence
\[
p_2=\sum_{i=1}^d\mu_i^2>0.
\]
Consequently,
\[
w_i:=\frac{\mu_i^2}{p_2},\qquad i=1,\ldots,d,
\]
defines a probability vector.  If $\mathbb E_w$ denotes expectation with
respect to these weights, then
\[
\mathbb E_w[\mu]
=\frac{1}{p_2}\sum_i\mu_i^3
=\frac{p_3}{p_2},
\qquad
\mathbb E_w[\mu^2]
=\frac{1}{p_2}\sum_i\mu_i^4
=\frac{p_4}{p_2}.
\]
The hypothesis gives $\mathbb E_w[\mu]\geq p_2/p_1$.  Jensen's inequality for
the convex function $t\mapsto t^2$ therefore yields
\[
\frac{p_4}{p_2}
=\mathbb E_w[\mu^2]
\geq \mathbb E_w[\mu]^2
\geq \frac{p_2^2}{p_1^2}.
\]
Multiplying by $p_1^2p_2$ proves the claim.
\end{proof}

The second
inequality compares the fourth moments of the realignment and the partial
transpose.  Unlike the preceding scalar implication, it uses the positivity
of the original bipartite operator.

\begin{proposition}\label{prop:realignment-pt4}
For every positive semidefinite bipartite operator $\rho$,
\begin{equation}\label{eq:realignment-pt4}
r_2(\rho)
\geq q_4(\rho) =
p_4(\rho^{\Gamma_B}).
\end{equation}
Equivalently,
$$
\|\Rn(\rho)\|_4^4\geq\|\rho^{\Gamma_B}\|_4^4.
$$
In tensor trace notation, \cref{eq:realignment-pt4} says that
$$
\Tr_{(14)(23),(12)(34)}(\rho)
-
\Tr_{(1234),(1432)}(\rho)
\geq0.
$$
Thus the corresponding element of $\mathbb R[S_4\times S_4]$ belongs to
$\ineqsep_4$. In fact, the inequality holds on the full positive
semidefinite cone, not only on separable operators; we present it as an element of $\ineqsep_4$ and not $\ineqpsd_4$ because it requires the bipartite structure of $\rho$.
\end{proposition}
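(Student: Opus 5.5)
The plan is to decompose $\rho$ into rank-one pieces, write both fourth moments as sums over quadruples of pure components, and compare the two sums by Cauchy--Schwarz.

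\textbf{Step 1: rank-one decomposition.} Since $\rho\succeq0$, write $\rho=\sum_k \ket{v_k}\bra{v_k}$. Identify each vector $v_k\in\mathbb C^{d_A}\otimes\mathbb C^{d_B}$ with a $d_A\times d_B$ matrix $V_k$, so that $\rho_{ia,jb}=\sum_k (V_k)_{ia}\overline{(V_k)_{jb}}$. Positivity of $\rho$ is used only here; for a general Hermitian operator one would get signed weights, and the Cauchy--Schwarz step below would fail.

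\textbf{Step 2: the realignment moment.} From $\Rn(\rho)_{ij,ab}=\rho_{ia,jb}=\sum_k (V_k)_{ia}\overline{(V_k)_{jb}}$, we get $\Rn(\rho)=\sum_k V_k\otimes\overline{V_k}$, viewed as a map $\mathbb C^{d_B}\otimes\mathbb C^{d_B}\to\mathbb C^{d_A}\otimes\mathbb C^{d_A}$ (up to the ordering convention of the paper's $\Rn$). Hence
$$
\Rn(\rho)\Rn(\rho)^*=\sum_{k,l}(V_kV_l^*)\otimes\overline{(V_kV_l^*)}.
$$
Squaring and taking the trace gives
$$
r_2(\rho)=\sum_{k,l,m,n}\bigl|T_{klmn}\bigr|^2,
\qquad
T_{klmn}:=\Tr\bigl(V_kV_l^*V_mV_n^*\bigr).
$$

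\textbf{Step 3: the partial transpose moment.} Insert the same decomposition into $q_4(\rho)=\Tr_{(1234),(1432)}(\rho)$. The contraction splits into an $A$-cycle and a $B$-cycle. Because the $B$-permutation is the inverse cycle, the $B$ indices are traversed in the opposite order. I therefore expect
$$
q_4(\rho)=\sum_{k,l,m,n}T_{klmn}\,\overline{T_{\phi(k,l,m,n)}},
$$
where $\phi$ is a fixed bijection of index quadruples (a cyclic shift or reversal, e.g.\ $(k,l,m,n)\mapsto(l,m,n,k)$), possibly with $T$ and $\overline T$ interchanged. This is the main obstacle. I would carry it out carefully with the index formula for $\Tr_{\sigma,\tau}$, or diagrammatically as in \cref{prop:even-realignment-moments-as-trace-invariants}. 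The first sanity check is the pure-state case, where $\rho^{\Gamma_B}$ and $\Rn(\rho)$ have explicit singular values in terms of Schmidt coefficients (as in \cref{prop:pure-general}). The point to confirm is that the second factor is again a trace of an alternating word $V V^* V V^*$ in the same four matrices, hence a reindexed and/or conjugated $T$.

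\textbf{Step 4: Cauchy--Schwarz.} The sum $q_4(\rho)$ is real, since $\rho^{\Gamma_B}$ is Hermitian. Apply Cauchy--Schwarz over the index set $(k,l,m,n)$:
$$
q_4(\rho)\le\Bigl(\sum|T_{klmn}|^2\Bigr)^{1/2}\Bigl(\sum|T_{\phi(klmn)}|^2\Bigr)^{1/2}=\sum|T_{klmn}|^2=r_2(\rho),
$$
using that $\phi$ is a bijection and that conjugation preserves moduli. The tensor-trace form of the statement then follows from \cref{prop:even-realignment-moments-as-trace-invariants} with $m=2$, which gives $r_2=\Tr_{\alpha_2,\beta_2}$ with $\alpha_2=(14)(23)$ and $\beta_2=(12)(34)$, together with $q_4=\Tr_{(1234),(1432)}$ from \cref{prop:psd-embeddings}. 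Since every separable operator is positive semidefinite, membership in $\ineqsep_4$ is immediate.
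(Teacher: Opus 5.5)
Your proposal is correct and is essentially the paper's proof: the same rank-one decomposition and matricization, the same tensor $T_{klmn}=\Tr(V_kV_l^*V_mV_n^*)$ (the paper's $N$) with $r_2(\rho)=\sum|T_{klmn}|^2$, and Cauchy--Schwarz of $T$ against a reindexed copy of itself. The Step~3 computation you left open does go through: following the index cycles gives $q_4(\rho)=\sum_{k,l,m,n}T_{klmn}T_{lmnk}=\sum_{k,l,m,n}T_{klmn}\overline{T_{knml}}$ (using $\overline{T_{abcd}}=T_{dcba}$), so $\phi$ swaps the second and fourth indices, which is exactly the paper's $q_4=\langle N,P_{24}N\rangle$, and your Step~4 applies verbatim.
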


\begin{proof}
Start from the spectral decomposition of $\rho$, with $\rank \rho = s$
$$
\rho=\sum_{a=1}^s\ket{\varphi_a}\bra{\varphi_a},
$$
where the vectors $\varphi_a$ absorb the square roots of the nonzero
eigenvalues of $\rho$ and are therefore generally not normalized.  Let
$M_a\in\M_{d_A \times d_B}$ be the matricization of $\varphi_a$ defined by
$\operatorname{vec}(M_a)=\varphi_a$.  We shall express the quantities $r_2(\rho)$ and $q_4(\rho)$ in the statement terms of the 3-tensor $M \in \mathbb C^s \otimes \mathbb C^{d_A} \otimes \mathbb C^{d_B}$ using the graphical tensor calculus.

First, the tensor $M$ and the positive semidefinite operator $\rho$ are represented by

\begin{center}
\includegraphics[scale=1]{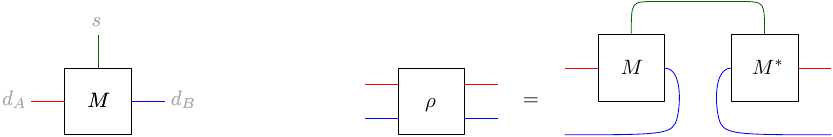}
\end{center}

The diagrams for the realignment $\Rn(\rho)$ and the partial transposition $\rho^{\Gamma_B}$ are

\begin{center}
\includegraphics[scale=1]{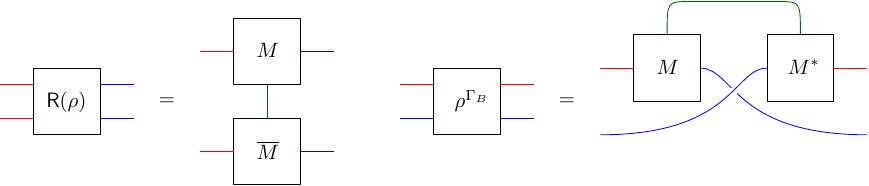}
\end{center}

The quantity $r_2(\rho) = \Tr ([\Rn(\rho) \Rn(\rho)^*]^2)$  is given by

\begin{center}
\includegraphics[scale=1]{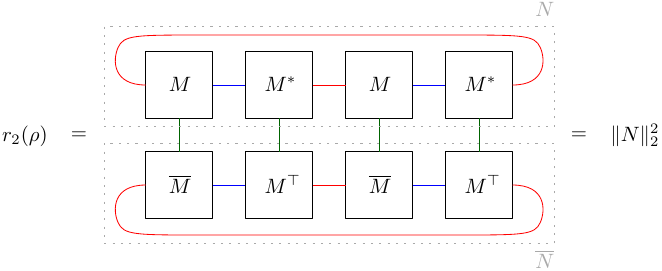}
\end{center}

Define the tensor $N \in (\mathbb C^s)^{\otimes 4}$ by $N_{\alpha\beta\gamma\delta}
:=\Tr\bigl(M_\alpha M_\beta^* M_\gamma M_\delta^*\bigr)$.
After graphically rearranging the $M$ tensors in the diagram for
$q_4(\rho) = \Tr( [\rho^{\Gamma_B}]^4)$ we obtain

\begin{center}
\includegraphics[scale=1]{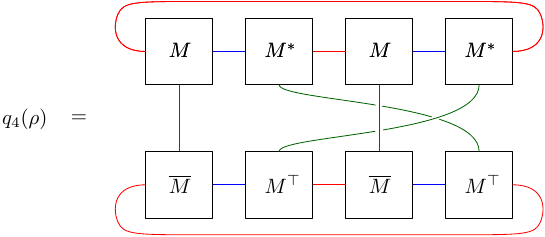}
\end{center}

In terms of the tensor $N$, the diagram above can be written as $q_4(\rho) = \langle N, P_{24} N \rangle$, where $P_{24}$ is the operator permutting the second and fourth indices of $N$.
Since $\rho^{\Gamma_B}$ is Hermitian, $q_4(\rho)=\Tr([\rho^{\Gamma_B}]^4)\geq 0$,
so $q_4(\rho)=\bigl|\langle N,P_{24}N\rangle\bigr|$.
Note that the operator $P_{24}$ only permutes the entries of its input tensor, hence $\|P_{24}N\|_2 = \|N\|_2$.
A simple application of the Cauchy-Schwarz inequality gives
$$q_4(\rho) = \bigl|\langle N, P_{24} N \rangle\bigr|
\leq \|N\|_2\|P_{24}N\|_2 = \|N\|_2^2 = r_2(\rho),$$
proving the claim.
\end{proof}

Combining the two inequalities gives the desired relation between
the criteria.

\begin{theorem}\label{thm:interpolation-implied-ppt3}
Let $\rho$ be a normalized bipartite state.  Then
\begin{equation}\label{eq:ppt3-implies-interpolation}
q_3(\rho)\geq r_1(\rho)^2
\quad\Longrightarrow\quad
r_2(\rho)\geq r_1(\rho)^3.
\end{equation}
Equivalently, every state detected by the standard interpolation criterion
is detected by the 3 moments PPT criterion:
$$
r_2(\rho)<r_1(\rho)^3
\quad\Longrightarrow\quad
q_3(\rho)<r_1(\rho)^2.
$$
\end{theorem}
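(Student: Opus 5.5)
The plan is to chain \cref{lem:ppt3-to-ppt4} with \cref{prop:realignment-pt4}, using the fact that the first moments of $\rho^{\Gamma_B}$ coincide with quantities already present in the realignment picture. Set $X:=\rho^{\Gamma_B}$. This is Hermitian with $p_1(X)=\Tr\rho=1>0$. Since partial transposition preserves the Hilbert--Schmidt inner product, we have
$$
p_2(X)=q_2(\rho)=\Tr(\rho^2)=p_2(\rho)=r_1(\rho),
$$
where the last equality is the identity $r_1=p_2$ noted after \cref{prop:even-realignment-moments-as-trace-invariants}. Consequently the hypothesis $q_3(\rho)\geq r_1(\rho)^2$ reads exactly $p_1(X)p_3(X)\geq p_2(X)^2$.

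First, apply \cref{lem:ppt3-to-ppt4} to $X$. This lemma requires only Hermiticity, not positivity, which matters because $\rho^{\Gamma_B}$ may fail to be PSD. It gives $p_1(X)^2p_4(X)\geq p_2(X)^3$, that is,
$$
q_4(\rho)\geq r_1(\rho)^3 .
$$
Second, invoke \cref{prop:realignment-pt4}. It applies because $\rho$ is a state, hence positive semidefinite, and it yields $r_2(\rho)\geq q_4(\rho)$. Chaining the two inequalities gives $r_2(\rho)\geq r_1(\rho)^3$, which is \eqref{eq:ppt3-implies-interpolation}. The second formulation of the theorem is just the contrapositive.

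There is little real obstacle once the two auxiliary results are in hand. The main point needing care is the identification $q_2(\rho)=r_1(\rho)$, which is what lets the PPT-moment hypothesis be phrased with $r_1^2$ on the right-hand side. I would justify it in one line: partial transposition is a real-linear isometry for the trace pairing, and $\rho^{\Gamma_B}$ is Hermitian, so $\Tr\bigl[(\rho^{\Gamma_B})^2\bigr]=\Tr(\rho^2)$. I would also remark that positivity of $\rho$ is used only in the second step, so no separability assumption is needed anywhere.
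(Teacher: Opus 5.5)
Your proposal is correct and matches the paper's proof: you apply \cref{lem:ppt3-to-ppt4} to the Hermitian matrix $\rho^{\Gamma_B}$, using $q_1=1$ and $q_2=r_1$, to get $q_4\geq r_1^3$, and then chain this with $r_2\geq q_4$ from \cref{prop:realignment-pt4}. Your one-line justification of $q_2=r_1$ (partial transposition preserves the Hilbert--Schmidt norm) is a small addition; the paper only asserts this identity.
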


\begin{proof}
For a quantum state $\rho$, we have $p_1 = q_1 = 1$ and the purity $\Tr(\rho^2) = p_2 = q_2 = r_1$. From \cref{lem:ppt3-to-ppt4} we have
$$
q_3 \geq r_1^2  = q_2^2 \implies q_4 \geq q_2^3 = r_1^3.
$$
Using \cref{prop:realignment-pt4} we also have $q_4 \leq r_2$. Hence, $r_2 \geq r_1^3$, as claimed.
\end{proof}

The inclusion in \cref{thm:interpolation-implied-ppt3} is strict.  Indeed,
consider the two qutrit isotropic state
$$
\rho_p=p\proj{\Phi_3}+(1-p)I_9/9.
$$
The eigenvalues of
$\rho_p^{\Gamma_B}$ are $(1+2p)/9$ with multiplicity 6 and $(1-4p)/9$
with multiplicity 3.  The singular values of $\Rn(\rho_p)$ are $1/3$
with multiplicity 1 and $p/3$ with multiplicity 8.  Therefore
$$
r_1(\rho_p)=\frac{1+8p^2}{9},
\qquad
r_2(\rho_p)=\frac{1+8p^4}{81},
\qquad
q_3(\rho_p)
=\frac{6(1+2p)^3+3(1-4p)^3}{729}.
$$
At $p=3/10$, these formulas give
$$
q_3(\rho_{3/10})-r_1(\rho_{3/10})^2
=-\frac{16}{5625}<0,
$$
whereas
$$
r_2(\rho_{3/10})-r_1(\rho_{3/10})^3
=\frac{140461}{22781250}>0.
$$
Thus the 3 moments PPT criterion detects $\rho_{3/10}$, while the standard
interpolation criterion does not.

\subsection{Comparison of degree-four criteria} \label{sec:degree-four}

In this section we compare the 4 degree-four criteria for separability discussed in this paper. We are able to place them in the same setting, as elements of $\ineqsep_4$.  For an arbitrary positive semidefinite bipartite
operator $\rho$, write
$$
t=\Tr\rho,
\qquad
p_k=\Tr(\rho^k),
\qquad
q_k=\Tr\bigl[(\rho^{\Gamma_B})^k\bigr],
\qquad
r_k=\Tr\bigl[(\Rn(\rho)\Rn(\rho)^*)^k\bigr].
$$
Clearly $p_1=q_1=t$ while the purity is given by $p_2=q_2 = r_1=\Tr(\rho^2)$.  The four homogeneous
separability criteria we consider are:
\begin{align}
W_{\rm P}&:=tq_3-q_2^2,
&
W_{\rm E}&:=G-\tilde r_1,
\label{eq:four-defects-first}\\
W_{\rm M}&:=t^4-\frac32t^2r_1+\frac12r_2,
&
W_{\rm S}&:=(t^2-r_1)^2-2(r_1^2-r_2).
\label{eq:four-defects-second}
\end{align}
Here the subscripts stand respectively for 3 moments PPT criterion, the first enhanced
realignment moment criterion, the second binomial \emph{minorant}, and the
\emph{pair-sum} criterion.  Thus $W_{\rm P},W_{\rm E},W_{\rm M},W_{\rm
S}\in\ineqsep_4$.  A normalized state is detected by a criterion precisely
when the corresponding $W$ trace invariant is strictly negative.

Note that although it is displayed in degree-four form, the enhanced realignment defect $W_E$ factors by $t=\Tr \rho$; this can be seen by expanding the fomula and noticing that the term $\pi_A \pi_B$ which is the constant term in $t$ cancells. Its degree-four presentation is therefore only the corresponding homogeneous lift of a genuine inequality from $\ineqsep_3$.

For clarity, we formulate implications in terms of these detection sets:
$$
\mathcal V_j:=\{\rho \succeq 0\, : \, \Tr\rho=1,\ W_j(\rho)<0\},
\qquad j\in\{\mathrm P,\mathrm E,\mathrm M,\mathrm S\}.
$$

\begin{theorem}
\label{thm:four-detection-sets}
Uniformly over all finite local dimensions, the only inclusions among the four detection
sets are
\begin{equation}
\mathcal V_{\rm M}\subsetneq\mathcal V_{\rm P}
\qquad \text{ and } \qquad
\mathcal V_{\rm M}\subsetneq\mathcal V_{\rm S}.
\label{eq:only-degree-four-inclusions}
\end{equation}
\end{theorem}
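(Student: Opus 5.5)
The proof has two parts: the two inclusions (each strict), and explicit separating states for the remaining ten ordered pairs.

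\emph{Inclusion $\mathcal V_{\rm M}\subseteq\mathcal V_{\rm S}$.} On normalized states, put $r_1\in[0,1]$ and $r_2\le r_1^2$. Then
\[
W_{\rm S}-W_{\rm M}=(1-r_1)^2-2r_1^2+2r_2-1+\tfrac32 r_1-\tfrac12 r_2=\tfrac32 r_2-r_1^2-\tfrac12 r_1 .
\]
This sign is not obviously constant, so I will argue through the detection region instead. $W_{\rm M}<0$ means $r_2<3r_1-2$, which forces $r_1>2/3$. On that range, compare the boundary lines $r_2=3r_1-2$ and $r_2=\tfrac12\bigl(4r_1-1-r_1^2\bigr)\cdot$ (the solution of $W_{\rm S}=0$, i.e.\ $2r_2=r_1^2+2r_1-1$). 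Their difference is
\[
\tfrac12\bigl(r_1^2+2r_1-1\bigr)-(3r_1-2)=\tfrac12(r_1-2)^2-\tfrac32+\dots,
\]
and the cleanest form is $\tfrac12(r_1^2-4r_1+3)=\tfrac12(1-r_1)(3-r_1)\ge 0$. Hence $r_2<3r_1-2$ implies $2r_2<r_1^2+2r_1-1$, which is $W_{\rm S}<0$.

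\emph{Inclusion $\mathcal V_{\rm M}\subseteq\mathcal V_{\rm P}$.} I will argue by contrapositive, following \cref{thm:interpolation-implied-ppt3}. Assume $q_3\ge q_2^2=r_1^2$. \Cref{lem:ppt3-to-ppt4} gives $q_4\ge r_1^3$, and \cref{prop:realignment-pt4} then gives $r_2\ge q_4\ge r_1^3$. Finally, $r_1^3\ge 3r_1-2$ on $[0,1]$, because $r_1^3-3r_1+2=(r_1-1)^2(r_1+2)\ge0$. So $W_{\rm M}\ge0$, i.e.\ the state is not in $\mathcal V_{\rm M}$.

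\emph{Strictness and non-inclusions.} I will exhibit explicit states, mostly from parametrized families where all four defects are computable in closed form. The families are: isotropic states $\rho_p$ (whose spectra were given above); Werner states; the two-qubit family $\rho_c$ of \cref{ex:enhanced-strict}; and a PPT entangled state such as the Tiles state mixed with noise. Pure states are not useful here, since \cref{prop:pure-general} shows they are detected by all of E, M and S. The non-inclusions to establish are the following.
\begin{enumerate}
\item $\mathcal V_{\rm P}\not\subseteq$ M, S and $\mathcal V_{\rm S}\not\subseteq$ M: check isotropic $\rho_p$ at suitable $p$. Note that M and S depend only on $(r_1,r_2)$, and for isotropic states $r_1,r_2$ are explicit.
\item Anything not contained in P: a PPT entangled state detected by E or S. Here I would use a state like $\rho_\star$, or search among separable-looking mixtures, noting that $\rho$ must have $q_3\ge q_2^2$ but violate the other criteria.
\item $\mathcal V_{\rm E}$ versus the others: use $\rho_c$, where $W_{\rm E}$ is explicit. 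Conversely, find a state with $W_{\rm E}\ge0$ but $W_{\rm S}<0$ or $W_{\rm P}<0$, e.g.\ an isotropic state near threshold or states with nonmaximal marginals.
\end{enumerate}

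The main obstacle is certifying the non-inclusions that involve P against S and E in \emph{both} directions with exact rational witnesses. In particular, a state in $\mathcal V_{\rm S}\setminus\mathcal V_{\rm P}$ must have PT-moments passing the 3 moments test while violating the pair-sum inequality, and such a state likely requires a carefully chosen PPT-ish mixed state. I would first numerically scan two-parameter families (isotropic plus product noise, Werner mixtures) and then extract rational parameter values where exact arithmetic confirms the sign pattern, as was done for $\rho_{3/10}$.
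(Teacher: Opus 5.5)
Your two inclusion arguments are correct and match the paper's. They use the factorizations $r_1^3-(3r_1-2)=(1-r_1)^2(r_1+2)$ and $\tfrac12(1-r_1)(3-r_1)$, together with \cref{lem:ppt3-to-ppt4} and \cref{prop:realignment-pt4}. Incidentally, $W_{\rm S}-W_{\rm M}=\tfrac32r_2-r_1^2-\tfrac12r_1\le\tfrac12r_1(r_1-1)\le0$ does have constant sign, since $r_2\le r_1^2$, so $\mathcal V_{\rm M}\subseteq\mathcal V_{\rm S}$ is immediate.

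\textbf{The gap.} It lies in the second half, which is where the content of ``the only inclusions'' and of both strictness claims sits. Each of the ten remaining ordered pairs needs an explicit, exactly verified state, and you give only a search plan. Your item (1) is viable: in $d=3$ the isotropic state with $p=\tfrac12$ lies in $\mathcal V_{\rm P}\setminus(\mathcal V_{\rm M}\cup\mathcal V_{\rm S}\cup\mathcal V_{\rm E})$, and $p\in(0.61,0.80)$ gives $\mathcal V_{\rm S}\setminus\mathcal V_{\rm M}$. The other candidates you name fail:
\begin{itemize}
\item For $\rho_c$ of \cref{ex:enhanced-strict} with $0\le c\le\tfrac12$, one finds $W_{\rm E}=\tfrac18-\tfrac{c^2}2$, $W_{\rm P}=\tfrac1{64}-\tfrac{c^4}4$, $W_{\rm S}=\tfrac38-\tfrac{11c^2}8$ and $W_{\rm M}>0$. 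This family is detected by none of the four criteria.
\item The state $\rho_\star$ has $W_{\rm E}=\tfrac{11}{36}$ and $W_{\rm S}=\tfrac{203}{432}$, both positive.
\item Isotropic states cannot separate E or S from P. In $d=3$, $W_{\rm P}(\rho_p)=-\tfrac{8p^2(4p-1)(2p+1)}{81}$ is negative on the whole NPT range $p>\tfrac14$, so every isotropic state detected by E or S is already detected by P.
\item Within the isotropic family, $W_{\rm E}<0$ already forces $W_{\rm S}<0$. Hence $\mathcal V_{\rm E}\not\subseteq\mathcal V_{\rm S}$ needs a state with non-maximally-mixed marginals.
\end{itemize}

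\textbf{The missing idea.} $W_{\rm P}\ge0$ is far weaker than PPT, so the witnesses outside $\mathcal V_{\rm P}$ should be NPT states that the three-moment test misses, not PPT entangled states. Your own chain shows the PPT route is blocked for S: $W_{\rm P}\ge0$ gives $r_2\ge r_1^3$, and then $W_{\rm S}<0$ forces $(2r_1-1)(1-r_1^2)>0$, i.e.\ purity above $\tfrac12$. That rules out $\rho_\star$ and the Tiles state at once. For E, PPT implies the reduction inequality $\rho\le I\otimes\rho_B$, which caps the purity of a $3\times3$ PPT state with maximally mixed marginals at $\tfrac13$, below the E-threshold $\tfrac59$.

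\textbf{What the paper does instead.} It uses the two-qubit state $\rho_1$ supported on $\spanop\{\ket{00},\ket{11}\}$, with diagonal $(\tfrac34,\tfrac14)$ and coherence $\tfrac18$. Its partial transpose $\rho_1^{\Gamma_B}$ has eigenvalues $\tfrac34,\tfrac14,\tfrac18,-\tfrac18$. One gets $W_{\rm P}=\tfrac7{1024}>0$ and $W_{\rm M}>0$, yet $W_{\rm E}=-\tfrac1{32}$ and $W_{\rm S}=-\tfrac{13}{128}$. This single state settles $\mathcal V_{\rm S}\not\subseteq\mathcal V_{\rm M},\mathcal V_{\rm P}$ and $\mathcal V_{\rm E}\not\subseteq\mathcal V_{\rm M},\mathcal V_{\rm P}$. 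The remaining witnesses are:
\begin{itemize}
\item the isotropic state at $p=\tfrac12$, for $\mathcal V_{\rm P}\not\subseteq\mathcal V_{\rm M},\mathcal V_{\rm S},\mathcal V_{\rm E}$;
\item the unbalanced mixture $\tfrac{51}{100}\proj{\Phi_2}+\tfrac{49}{100}\proj{12}$, for $\mathcal V_{\rm E}\not\subseteq\mathcal V_{\rm S}$;
\item a $5\times5$ state on the boundary $W_{\rm E}=0$ with $W_{\rm M},W_{\rm S}<0$, for $\mathcal V_{\rm M},\mathcal V_{\rm S}\not\subseteq\mathcal V_{\rm E}$.
\end{itemize}
Together these certify all ten non-inclusions by exact arithmetic. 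Until you supply witnesses of this kind, the ``only'' part of the theorem is unproved.
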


\begin{proof}
Let $\rho$ be normalized and set
$$
x:=r_1(\rho)=\Tr(\rho^2), \quad
y:=r_2(\rho), \quad
a:=\Tr(\rho_A^2), \quad
b:=\Tr(\rho_B^2), \quad
c:=\Tr[\rho(\rho_A\otimes\rho_B)].$$
We can now express the 4 invariants in terms of these parameters and $q_3$:
\begin{equation}
(W_{\rm P},W_{\rm E},W_{\rm M},W_{\rm S})
=\left(q_3-x^2,\ 1+2c-x-a-b,\
1-\frac32x+\frac12y,\ 1-2x-x^2+2y\right).
\label{eq:four-normalized-defects}
\end{equation}
Since $\rho$ is a density matrix, its purity satisfies $0<x\leq1$.  Violation of the second
binomial minorant means
\begin{equation}
W_{\rm M}<0
\quad\Longleftrightarrow\quad
y<3x-2.
\label{eq:minorant-violation-threshold}
\end{equation}
For $0\leq x\leq1$ one has
$$
x^3-(3x-2)=(1-x)^2(x+2)\geq0.
$$
It follows from \eqref{eq:minorant-violation-threshold} that $y<x^3$.
Now \cref{thm:interpolation-implied-ppt3} gives $q_3<x^2$
which is $W_{\rm P}<0$.  This proves
$\mathcal V_{\rm M}\subseteq\mathcal V_{\rm P}$.

Likewise,
\begin{equation}
\left(x+\frac12x^2-\frac12\right)-(3x-2)
=\frac12(1-x)(3-x)\geq0.
\label{eq:pairsum-dominates-minorant}
\end{equation}
Combining \eqref{eq:minorant-violation-threshold} and
\eqref{eq:pairsum-dominates-minorant} gives
$$
y<x+\frac12x^2-\frac12,
$$
which is exactly $W_{\rm S}<0$. Hence
$\mathcal V_{\rm M}\subseteq\mathcal V_{\rm S}$.

It remains to prove strictness and to rule out every other inclusion.  We consider the following four matrices:
\begin{align*}
\rho_1&=\frac18\begin{bmatrix}6&0&0&1\\0&0&0&0\\0&0&0&0\\1&0&0&2\end{bmatrix}
\in\M_2\otimes\M_2 \\
\rho_2&=\frac12\proj{\Phi_3}+\frac12\cdot\frac{I_9}{9}\in\M_3\otimes\M_3\\
\rho_3&:=\frac{51}{100}\proj{\Phi_2}
       +\frac{49}{100}\proj{12}\in\M_2\otimes\M_2\\
\rho_4&:=\frac15\proj{00}
       +\frac45\proj{\Phi_4^\perp}\in\M_5\otimes\M_5,
\end{align*}
where
$$
\ket{\Phi_d}:=\frac1{\sqrt d}\sum_{i=1}^d\ket{ii}
       \in\mathbb C^d\otimes\mathbb C^d
$$
is the maximally entangled state. For $\rho_4$, the local basis of $\mathbb C^5$ is
$\ket0,\ldots,\ket4$, and we set
$$
\ket{\Phi_4^\perp}:=\frac12\sum_{i=1}^4\ket{ii}
       \in\mathbb C^5\otimes\mathbb C^5.
$$
Note that the two terms in $\rho_4$ have orthogonal support. The four matrices $\rho_1,\rho_2,\rho_3,\rho_4$ are manifestly density matrices.

Computation of partial traces, partial transposition, and realignment give, in the
ordered tuple $(x,y,q_3,a,b,c)$, the following values for the four density matrices:
\begin{align*}
\rho_1:\quad&
\left(
\frac{21}{32},
\frac{657}{2048},
\frac7{16},
\frac58,
\frac58,
\frac7{16}
\right),\\
\rho_2:\quad&
\left(
\frac13,
\frac1{54},
\frac5{81},
\frac13,
\frac13,
\frac19
\right),\\
\rho_3:\quad&
\left(
\frac{2501}{5000},
\frac{54804409}{400000000},
\frac{492797}{2000000},
\frac{12401}{20000},
\frac{12401}{20000},
\frac{737699}{2000000}
\right),\\
\rho_4:\quad&
\left(
\frac{17}{25},
\frac{17}{625},
\frac1{25},
\frac15,
\frac15,
\frac1{25}
\right).
\end{align*}

Substitution into \eqref{eq:four-defects-first}--\eqref{eq:four-defects-second}
gives the following exact table; a negative entry means detection.
\begin{equation}
\begin{array}{c|rrrr}
 &W_{\rm P}&W_{\rm E}&W_{\rm M}&W_{\rm S}\\
\hline
\rho_1&\frac{7}{1024}&-\frac1{32}&\frac{721}{4096}&-\frac{13}{128}\\[1mm]
\rho_2&-\frac4{81}&\frac29&\frac{55}{108}&\frac7{27}\\[1mm]
\rho_3&-\frac{190077}{50000000}&-\frac{2601}{1000000}
&\frac{254564409}{800000000}&\frac{4684401}{200000000}\\[1mm]
\rho_4&-\frac{264}{625}&0&-\frac4{625}&-\frac{96}{125}
\end{array}
\end{equation}

\end{proof}
The table above contains all the information we need, see \cref{fig:degree-four-comparison}:
\begin{itemize}
\item $\rho_2\in\mathcal V_{\rm P}\setminus\mathcal V_{\rm M}$
and $\rho_1\in\mathcal V_{\rm S}\setminus\mathcal V_{\rm M}$, proving
that both inclusions in \eqref{eq:only-degree-four-inclusions} are strict
\item the pair $\rho_1,\rho_2$ proves that $\mathcal V_{\rm P}$ and
$\mathcal V_{\rm S}$ are incomparable
\item the same pair $\rho_1,\rho_2$ proves that $\mathcal V_{\rm P}$ and
$\mathcal V_{\rm E}$ are incomparable
\item the pair $\rho_1,\rho_4$ proves that  $\mathcal V_{\rm M}$ and
$\mathcal V_{\rm E}$ are incomparable
\item finally, the pair $\rho_3,\rho_4$ proves that $\mathcal V_{\rm S}$ and
$\mathcal V_{\rm E}$ are incomparable.
\end{itemize}
\begin{figure}[htbp]
\centering
\begin{tikzcd}[
  row sep=5.8em,
  column sep=7.8em
]
& \mathrm P
  \arrow[dl,dotted,bend right=12,"2"']
  \arrow[dr,dotted,bend left=12,"{2,4}"]
  \arrow[dd,dotted,bend right=34,"2"']
& \\
\mathrm M
  \arrow[ru]
  \arrow[rd]
  \arrow[rr,dotted,bend left=12,"4"]
&
& \mathrm E
  \arrow[ll,dotted,bend left=12,"{1,3}"]
  \arrow[ul,dotted,bend left=12,"1"']
  \arrow[dl,dotted,bend right=12,"3"] \\
& \mathrm S
  \arrow[ul,dotted,bend left=12,"1"]
  \arrow[ur,dotted,bend right=12,"4"']
  \arrow[uu,dotted,bend right=34,"1"]
&
\end{tikzcd}
\caption{Complete ordered comparison of the four entanglement criteria.  The
only solid arrows are $\mathrm M\to\mathrm P$ and
$\mathrm M\to\mathrm S$; they encode inclusion of the detection sets. The ten dotted arrows give explicit witnesses for
all remaining ordered non-inclusions among the four states $\rho_{1,2,3,4}$.}
\label{fig:degree-four-comparison}
\end{figure}

The fact that the second binomial minorant entanglement criterion is weak can also be seen to follow from the fact that the ray $\mathbb R_{\geq0}W_{\rm M}$ is not extremal in the
full cone $\ineqsep_4$.  More precisely,
$$
W_{\rm M}
=\frac14W_{\rm S}
+\frac14(t^2-r_1)(3t^2-r_1).
$$ 

\medskip

Finally, we plot the realignment moments of randomly generated qubit mixed states in \cref{fig:realignment-comparison}.  To randomly generate the $2$ qubit mixed states, we consider Haar distributed random states on $\mathbb{C}^2 \otimes \mathbb{C}^2 \otimes \mathbb{C}^k$ and trace out the last subsystem of dimension $k$.

\begin{figure}[htbp]
    \centering

    \begin{subfigure}[t]{0.48\textwidth}
        \centering
        \includegraphics[width=\linewidth]{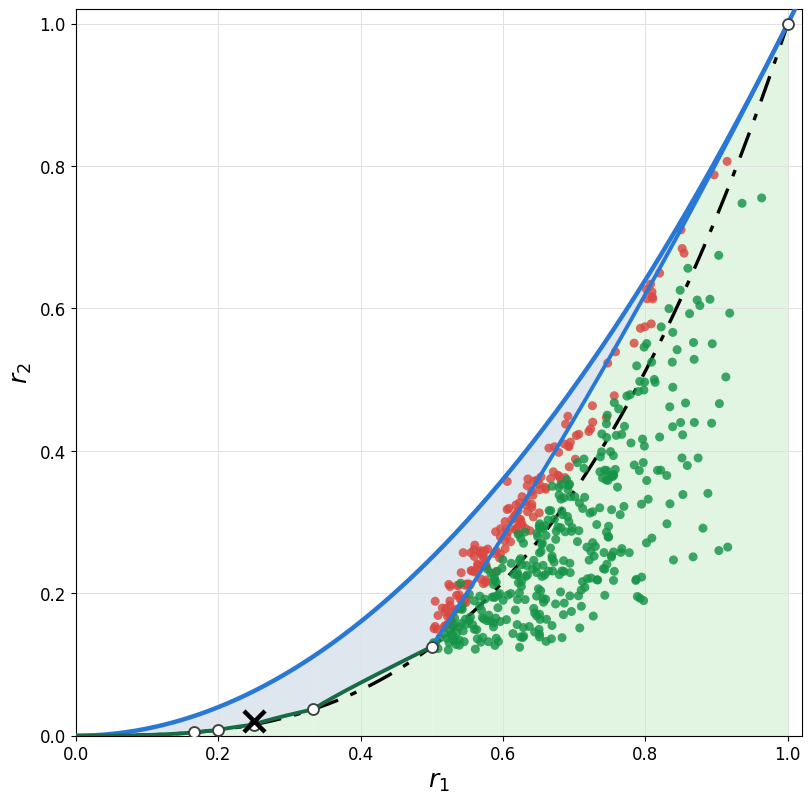}
        \label{fig:realignment-ppt}
    \end{subfigure}
    \hfill
    \begin{subfigure}[t]{0.48\textwidth}
        \centering
        \includegraphics[width=\linewidth]{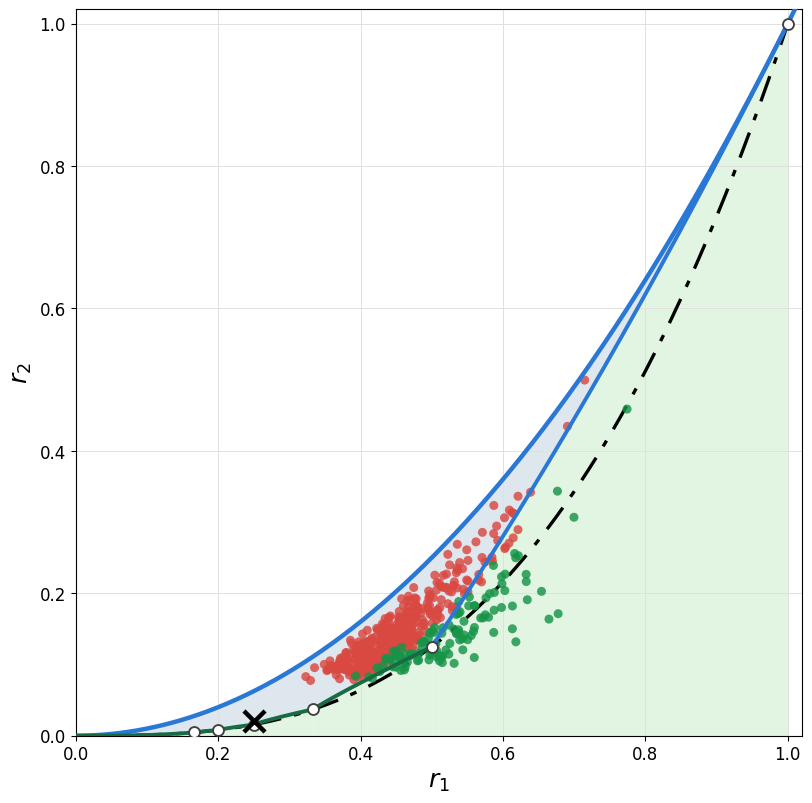}
        \label{fig:realignment-k4}
    \end{subfigure}

    \caption{The figures above show values $(r_1, r_2)$ for $1000$ randomly generated mixed quantum states of two qubits, and also for the Tiles $3 \times 3$ state (black cross). The left and the right figures have scattered $(r_1, r_2)$  for $k=2$ and $k=4$ respectively; higher $k$ implies less purity, $\operatorname{Tr}(\rho^2) =r_1$. The color of the dots denotes whether the state is detected entangled by the 3 moments PPT criterion; the green dots correspond to the detectable states while the red ones are not detected. The dots in the light green region are detected by utilizing the inequalities corresponding to the realignment moments; in particular the first arc corresponds to the inequality $W_S$. The dot-dashed black line denotes the interpolation inequality $r_2 \geq r_1^3$ which is weaker than the 3 moments PPT criterion (ref. \cref{thm:interpolation-implied-ppt3}). The Tiles state is not detectable by just using the first two moments of realignment or the 3 moments PPT criterion.}
    \label{fig:realignment-comparison}
\end{figure}

\subsection{PPT entangled state detection: the Tiles state}\label{sec:tiles}

We now turn to the case that motivates our construction: the detection of PPT-entangled states. As discussed in \cref{sec:sep}, the criteria arising from $\iota_+(\ineqpsd_n)$ and $\iota_\Gamma(\ineqpsd_n)$ depend only on the spectra of $\rho$ and $\rho^{\Gamma_B}$, respectively. Since both $\rho$ and $\rho^{\Gamma_B}$ are positive semidefinite for a PPT state, all such criteria are necessarily satisfied and hence cannot detect PPT entanglement. Therefore, to detect PPT entangled states through trace invariants, one needs genuinely bipartite elements of $\ineqsep_n$, such as those arising from moments of the realignment map.

We illustrate this mechanism using the Tiles state $\rho_{\rm N}$ \cite{bennett1999unextendible}, defined as the normalized projection onto the orthogonal complement of the five-element unextendible product basis in $\C^3\otimes\C^3$. The state $\rho_{\rm T}$ is PPT-entangled, and both the standard and centered realignment criteria detect its entanglement. To assess the detection efficiency of our proposed realignment-moment criteria, \Cref{tab:tiles} reports, for each criterion, its value at $p=1$ and the corresponding detection threshold $p^\ast$ for the noisy family
\[
\rho_{\rm N}(p)
=
p\,\rho_{\rm T}
+
(1-p)\,\frac{\text{I}_9}{9}.
\]
\begin{table}[htbp]
\centering
\small
\setlength{\tabcolsep}{6pt}
\renewcommand{\arraystretch}{1.2}
\begin{tabular}{@{}cl c l@{}}
\toprule
& \textbf{Criterion}
& \makecell[c]{\textbf{detects}\\$\rho_{\rm T}$}
& \textbf{detection range}\\
\midrule
\multirow{6}{*}{\rotatebox{90}{\textbf{standard}}}
& $\|\Rn(\rho)\|_1\leq1$ \ (realignment)                 & \yes & $p\in(0.88969,\,1]$\\[1.5pt]
\cmidrule(l){2-4}

& interpolation $(a,b)=(1,2)$, \ $r_1^3\leq r_2$         & \no  & \nodet\\
& binomial minorant, $k=52$ (minimal)                    & \yes & $p\in(0.99946,\,1]$\\
& $L^1_\beta$ minorant, $k=9$ \ ($\beta=1.35$)           & \yes & $p\in(0.98966,\,1]$\\
& $L^1_\beta$ minorant, $k=15$ \ ($\beta=1$)             & \yes & $p\in(0.92576,\,1]$\\
\midrule
\multirow{6}{*}{\rotatebox{90}{\textbf{centered}}}
& $\|\Rn(\tilde\rho)\|_1\leq\sqrt G$ \ (enhanced)        & \yes & $p\in(0.88218,\,1]$\\[1.5pt]
\cmidrule(l){2-4}

& interpolation $(a,b)=(1,2)$, \ $\tilde r_1^{\,3}\leq G\,\tilde r_2$     & \yes & $p\in(0.93296,\,1]$\\
& binomial minorant, $k=20$ (minimal)                    & \yes & $p\in(0.99507,\,1]$\\
& $L^1_\beta$ minorant, $k=5$ \ ($\beta=1.45$)           & \yes & $p\in(0.94239,\,1]$\\
& $L^1_\beta$ minorant, $k=15$ \ ($\beta=0.5$)           & \yes & $p\in(0.90813,\,1]$\\
\bottomrule
\end{tabular}
\caption{\justifying Detection of the noisy Tiles state
$\rho_{\rm N}(p)=p\,\rho_{\rm T}+(1-p) \frac{\text{I}_9}{9}$. The binomial minorants are the $f_k$
of \cref{eq:binomial-realignment-minorant} and the $L^1_\beta$ minorants the
optimizers of \cref{eq:l1beta}, $k$ being both the order and the number of
moments used.}
\label{tab:tiles}
\end{table}
The binomial hierarchy is nested: since
$f_{k+1}(x)-f_k(x)=\binom{2k}{k}4^{-k}\,x(1-x)^k\geq0$ on $[0,1]$, a state
detected at order $k$ is detected at every order $k'\geq k$. For the noisy
Tiles state, however, the first order at which it becomes effective is high,
$k=52$ with the standard moments and $k=20$ with the centered ones, and the
two-moment standard interpolation criterion does not detect $\rho_{\rm T}$ at
all. Whether such orders are intrinsic to the hierarchy or a feature of the
Tiles state is the question addressed in the next subsection.
\subsection{A new PPT entangled state of minimal Taylor order}\label{sec:ppt-state}

The high orders required for the Tiles state do not reflect an intrinsic
limitation of the realignment-moment hierarchies. To see this, we numerically
searched the PPT states of $\C^3\otimes\C^3$ for states that are detected at the
smallest Taylor order $k$ of the binomial hierarchy. Maximizing the realignment
norm over this set produces a state which can be identified in closed form, and
whose realignment spectrum consists of only two distinct nonzero singular
values:
\begin{equation}\label{eq:rho-star}
\rho_\star:=\frac14\sum_{k=1}^4\ket{\psi_k}\!\bra{\psi_k},
\qquad
\begin{aligned}
\ket{\psi_1}&=\tfrac{1}{\sqrt3}\bigl(\ket{00}+\sqrt2\,\ket{21}\bigr), &
\ket{\psi_2}&=\tfrac{1}{\sqrt3}\bigl(\ket{01}+\sqrt2\,\ket{12}\bigr),\\
\ket{\psi_3}&=\tfrac{1}{\sqrt3}\bigl(\sqrt2\,\ket{02}-\ket{10}\bigr), &
\ket{\psi_4}&=\tfrac{1}{\sqrt3}\bigl(\ket{11}-\sqrt2\,\ket{20}\bigr).
\end{aligned}
\end{equation}
In the product basis $\ket{00},\ket{01},\dots,\ket{22}$,
\[
\rho_\star=
\frac{1}{36}\begin{pmatrix}
3&0&0&0&0&0&0&3\sqrt2&0\\
0&3&0&0&0&3\sqrt2&0&0&0\\
0&0&6&-3\sqrt2&0&0&0&0&0\\
0&0&-3\sqrt2&3&0&0&0&0&0\\
0&0&0&0&3&0&-3\sqrt2&0&0\\
0&3\sqrt2&0&0&0&6&0&0&0\\
0&0&0&0&-3\sqrt2&0&6&0&0\\
3\sqrt2&0&0&0&0&0&0&6&0\\
0&0&0&0&0&0&0&0&0
\end{pmatrix}.
\]

\subsubsection*{Properties of the state}

\begin{proposition}\label{prop:rho-star}
The state $\rho_\star$ has the following properties.
\begin{enumerate}
\item $\rho_\star$ is one quarter of a rank-four orthogonal projection, and so
is its partial transpose:
\[
\rho_\star^{\Gamma_B}=\frac14\sum_{k=1}^4\ket{\varphi_k}\!\bra{\varphi_k},
\qquad
\begin{aligned}
\ket{\varphi_1}&=\tfrac{1}{\sqrt3}\bigl(\ket{01}+\sqrt2\,\ket{20}\bigr), &
\ket{\varphi_2}&=\tfrac{1}{\sqrt3}\bigl(\sqrt2\,\ket{02}+\ket{11}\bigr),\\
\ket{\varphi_3}&=\tfrac{1}{\sqrt3}\bigl(\ket{00}-\sqrt2\,\ket{12}\bigr), &
\ket{\varphi_4}&=\tfrac{1}{\sqrt3}\bigl(\ket{10}-\sqrt2\,\ket{21}\bigr).
\end{aligned}
\]
In particular $\rho_\star$ is PPT.
\item Both marginals are maximally mixed: $(\rho_\star)_A=(\rho_\star)_B=\text{I}_3/3$.
\item The nonzero singular values of $\Rn(\rho_\star)$ are $1/3$ (once) and
$1/6$ (five times), so $\|\Rn(\rho_\star)\|_1=7/6>1$ and $\rho_\star$ is entangled.
\item For the centered operator, $G(\rho_\star)=4/9$ and the nonzero singular
values of $\Rn(\tilde\rho_\star)/\sqrt{G(\rho_\star)}$ are $1/4$ (five times),
so $\|\Rn(\tilde\rho_\star)\|_1/\sqrt{G(\rho_\star)}=5/4$.
\end{enumerate}
\end{proposition}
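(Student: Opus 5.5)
The plan is a direct verification, organized around the structure of the vectors $\psi_k$ and $\varphi_k$ rather than brute-force matrix manipulation.

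\emph{Part (1).} First check that $\{\psi_k\}$ is orthonormal. The supports $\{00,21\}$, $\{01,12\}$, $\{02,10\}$, $\{11,20\}$ are pairwise disjoint and each vector has norm one, so orthonormality is immediate. Hence $\rho_\star=\frac14 P$ with $P$ a rank-four projection. Next compute the partial transpose termwise: $\Gamma_B$ sends $\ket{ij}\bra{kl}$ to $\ket{il}\bra{kj}$. The diagonal terms are unchanged, and each off-diagonal coherence $\ket{ij}\bra{kl}$ is moved to $\ket{il}\bra{kj}$. For example, the coherence $\sqrt2\ket{00}\bra{21}$ coming from $\psi_1$ becomes $\sqrt2\ket{01}\bra{20}$, which is the coherence of $\varphi_1$. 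Repeating this for the four coherences gives the four $\varphi_k$. Their supports are again disjoint. One then checks that the diagonal weights match: each $\varphi_k$ carries weight $1/3$ on one basis vector and $2/3$ on another, and these must agree with the diagonal of $\rho_\star$. This bookkeeping (signs and weights) is the only delicate step of Part (1). Positivity of $\rho_\star^{\Gamma_B}$, and hence the PPT property, follows.

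\emph{Part (2).} Each $\psi_k$ has Schmidt coefficients squared $(1/3,2/3)$ on distinct local basis vectors, so each contributes a diagonal marginal. Summing over $k$, every local basis vector $\ket{i}$ receives total weight $\frac14\cdot\frac{4}{3}\cdot\ldots$; more precisely, one checks that each index appears once with weight $1/3$ and once with weight $2/3$ on each side. This gives $\frac14\cdot 1=\frac14$ per index before accounting for the fact that there are only three indices, so the marginals equal $I_3/3$.

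\emph{Part (3).} Write $\rho_\star=\frac1{36}\sum c_{ij,kl}\ket{ij}\bra{kl}$. Then $\Rn(\rho_\star)$ has entry $c_{ij,kl}/36$ at row $(i,k)$ and column $(j,l)$. The diagonal entries $c_{ij,ij}$ populate the $3\times3$ block indexed by rows $(i,i)$ and columns $(j,j)$. Each off-diagonal entry lands at row $(i,k)$, column $(j,l)$ with $i\neq k$ or $j\neq l$. The plan is to show the following.
\begin{itemize}
\item The diagonal block $D=(c_{ij,ij}/36)$ has rows summing to $1/3$ (by Part (2)). Since $D=\frac{1}{36}\begin{pmatrix}3&3&6\\3&3&6\\6&6&0\end{pmatrix}$, its singular values are $1/3$ and $1/6$ (together with a zero, to be checked from its spectrum).
\item The off-diagonal entries occupy distinct rows and columns, each with modulus $3\sqrt2/36$ after pairing. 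This gives singular values $\sqrt2/12$; these need to be combined with the other contributions and checked to equal $1/6$ in total. The realignment matrix decomposes into blocks labelled by pairs of index differences, and I would verify by explicit computation of $\Rn\Rn^*$ that the remaining block has eigenvalue $1/36$ with multiplicity four.
\end{itemize}
The sanity check is $r_1=\Tr\rho_\star^2=\frac14$, which must equal $\frac19+5\cdot\frac1{36}$; indeed it does. This gives the trace norm $\frac13+\frac56=\frac76$.

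\emph{Part (4).} From Part (2) and purity $\frac19$ of each marginal, $G=(1-\frac13)^2=\frac49$. Since $\rho_A\otimes\rho_B=I/9$ realigns to the rank-one matrix $\frac19\ket{\Phi}\bra{\Phi}$ up to normalization, which is exactly the top singular direction (singular value $1/3$) of $\Rn(\rho_\star)$, centering removes that singular value. The five values $1/6$ remain, and dividing by $\frac23$ gives $1/4$, for a total of $5/4$.

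The main obstacle is the explicit identification of the singular values of $\Rn(\rho_\star)$ in Part (3), together with confirming that the top singular vector is exactly the identity direction used for centering in Part (4).
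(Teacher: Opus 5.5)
Direct verification is the right route; the paper states this proposition without a written proof. Parts (1), (2) and (4) reach the correct conclusions. However, the structural claim on which you build Part (3) is false, and the reconciliation you leave open ("combined with the other contributions") cannot happen.

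With $\Rn(\rho_\star)_{ik,jl}=(\rho_\star)_{ij,kl}$, each of the four coherences of $\rho_\star$ (between $00,21$; $01,12$; $02,10$; $11,20$) has $i\neq k$ \emph{and} $j\neq l$. Each therefore lands in a row $(i,k)$ with $i\neq k$ and a column $(j,l)$ with $j\neq l$. These rows and columns are disjoint from the rows $(i,i)$ and columns $(j,j)$ carrying $D$. So, up to permuting rows and columns, $\Rn(\rho_\star)=D\oplus O$, and its singular values are simply the union of those of $D$ and of $O$; nothing gets combined.

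Moreover, the eight entries of $O$, all of modulus $a:=\sqrt2/12$, do not occupy distinct rows and columns:
\begin{itemize}
\item row $(0,1)$ carries $a$ in column $(1,2)$ and $-a$ in column $(2,0)$;
\item row $(1,0)$ carries $a$ in column $(2,1)$ and $-a$ in column $(0,2)$;
\item column $(0,1)$ carries $a$ in row $(0,2)$ and $-a$ in row $(2,1)$;
\item column $(1,0)$ carries $a$ in row $(2,0)$ and $-a$ in row $(1,2)$.
\end{itemize}
Hence $O$ is a direct sum of four rank-one blocks $(a,-a)$ or $(a,-a)^T$, each with singular value $\sqrt2\,a=1/6$. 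The symmetric matrix $D$ has eigenvalues $1/3,0,-1/6$, with eigenvectors $(1,1,1)$, $(1,-1,0)$, $(1,1,-2)$. Together this gives exactly $1/3$ once and $1/6$ five times.

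Your sanity check $r_1=1/4$ cannot catch the error. $r_1$ is the sum of squared entries and is blind to their arrangement. The structure you describe (eight singular values $\sqrt2/12$ off the $D$ block) has the same $r_1$, but trace norm $\frac12+\frac{2\sqrt2}{3}$ rather than $\frac76$.

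There are also two smaller slips.
\begin{itemize}
\item In Part (2) the count is wrong: each local index receives total weight $4/3$, not one $1/3$ and one $2/3$. On the $A$ side, index $0$ gets $\frac13,\frac13,\frac23$ from $\psi_1,\psi_2,\psi_3$; index $1$ gets $\frac23,\frac13,\frac13$ from $\psi_2,\psi_3,\psi_4$; index $2$ gets $\frac23,\frac23$ from $\psi_1,\psi_4$; likewise on $B$. The prefactor $\frac14$ then gives $\frac13$ directly. The vanishing of off-diagonal marginal terms follows, as you say, because the two terms of each $\psi_k$ differ in both local indices.
\item In Part (4) the marginal purity is $\Tr(I_3/9)=\frac13$, not $\frac19$; your formula for $G$ does use $\frac13$.
\end{itemize}

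The point you list as an obstacle in Part (4) is immediate from Part (2). One has $\Rn(\rho)\operatorname{vec}(I_3)=\operatorname{vec}(\rho_A)$ and $\Rn(\rho)^T\operatorname{vec}(I_3)=\operatorname{vec}(\rho_B)$. So maximally mixed marginals make $\operatorname{vec}(I_3)/\sqrt3$ a left and right singular vector with singular value $1/3$. Then $\Rn(I_9/9)=\frac19\operatorname{vec}(I_3)\operatorname{vec}(I_3)^T$ removes exactly that component. It leaves $-\frac1{36}(1,1,-2)^T(1,1,-2)$ on the $D$ block and $O$ unchanged, i.e.\ five singular values $1/6$.
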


\subsubsection*{Detection by few realignment moments}
Let $s_i$ denote the singular values of $\Rn(\rho_\star)$ and $\tilde s_i$ those
of $\Rn(\tilde\rho_\star)/\sqrt{G(\rho_\star)}$. For the binomial hierarchy one
finds
\[
\sum_i f_{29}\bigl(s_i^2\bigr)=0.9943<1,
\qquad
\sum_i f_{30}\bigl(s_i^2\bigr)=1.0011>1,
\]
so that the binomial hierarchy certifies entanglement already at Taylor order
$k=30$ in the standard case, and
\[
\sum_i f_{12}\bigl(\tilde s_i^2\bigr)=0.9774<1,
\qquad
\sum_i f_{13}\bigl(\tilde s_i^2\bigr)=1.0007>1
\]
in the centered case, where the criterion is thus violated from $k=13$ on.
With only two moments, the standard interpolation criterion is satisfied,
since $r_1(\rho_\star)^3=1/64<7/432=r_2(\rho_\star)$, whereas the centered one is
violated: the normalized centered moments are $5/16$ and $5/256$, and
$(5/16)^3/(5/256)=25/16>1$. The orders required for certification are thus
substantially lower than those obtained for the noisy Tiles state, and in the
centered case two moments already suffice.

\subsubsection*{Robustness to noise}
For $p\in[0,1]$ let $\rho_\star(p):=p\,\rho_\star+(1-p)\,\text{I}_9/9$. Its
partial transpose has eigenvalues $p/4+(1-p)/9$ and $(1-p)/9$, so
$\rho_\star(p)$ is PPT for every $p$. Its marginals remain $\text{I}_3/3$, the
nonzero singular values of $\Rn(\rho_\star(p))$ are $1/3$ and $p/6$ (five
times), and those of $\Rn(\tilde\rho_\star(p))/\sqrt G$ are $p/4$ (five times).
Hence both the realignment and the enhanced realignment criteria detect
$\rho_\star(p)$ exactly for $p>4/5$, and the centered interpolation criterion,
which reads $25p^2/16>1$, detects exactly the same range. The standard
interpolation criterion, on the contrary, is satisfied for every $p$: from the
singular values above,
\[
r_1\bigl(\rho_\star(p)\bigr)=\frac{4+5p^2}{36},
\qquad
r_2\bigl(\rho_\star(p)\bigr)=\frac{16+5p^4}{1296},
\]
and therefore
\[
r_2-r_1^{\,3}
=\frac{512-240p^2-120p^4-125p^6}{46656}>0
\qquad (0\leq p\leq1).
\]
\Cref{tab:rho-star} collects the detection ranges.

\begin{table}[htbp]
\centering
\small
\setlength{\tabcolsep}{6pt}
\renewcommand{\arraystretch}{1.2}
\begin{tabular}{@{}cl c l@{}}
\toprule
& \textbf{Criterion}
& \makecell[c]{\textbf{detects}\\$\rho_\star$}
& \textbf{detection range}\\
\midrule
\multirow{3}{*}{\rotatebox{90}{\footnotesize\textbf{standard}}}
& $\|\Rn(\rho)\|_1\leq1$ (realignment)                 & \yes & $p\in(4/5,\,1]$\\
\cmidrule(l){2-4}
& interpolation, $r_1^3\leq r_2$                       & \no  & \nodet\\
& binomial minorant, $k=30$ (minimal)                  & \yes & $p\in(0.99894,\,1]$\\
\midrule
\multirow{3}{*}{\rotatebox{90}{\footnotesize\textbf{centered}}}
& $\|\Rn(\tilde\rho)\|_1\leq\sqrt G$ (enhanced)        & \yes & $p\in(4/5,\,1]$\\
\cmidrule(l){2-4}
& interpolation, $\tilde r_1^{\,3}\leq G\,\tilde r_2$  & \yes & $p\in(4/5,\,1]$\\
& binomial minorant, $k=13$ (minimal)                  & \yes & $p\in(0.99958,\,1]$\\
\bottomrule
\end{tabular}
\caption{\justifying Detection of $\rho_\star(p)=p\,\rho_\star+(1-p)\,\text{I}_9/9$
(compare with \cref{tab:tiles}). The binomial minorants are the $f_k$ of
\cref{eq:binomial-realignment-minorant}, $k$ being both the order and the
number of moments used; ``minimal'' is the smallest order of the nested
binomial hierarchy that detects $\rho_\star$. The two interpolation criteria
use the first two (standard, resp.\ centered) realignment moments.}
\label{tab:rho-star}
\end{table}

In summary, the state $\rho_\star$ shows that the $k$-Taylor hierarchy can
certify PPT entanglement at much lower orders than suggested by the Tiles state
($k=30$ instead of $52$ in the standard case, $k=13$ instead of $20$ in the
centered case), and that the centered interpolation criterion
$r_1(\tilde\rho)^3\leq G\,r_2(\tilde\rho)$, built from only two centered
realignment moments, can reproduce the full detection range of the enhanced
realignment criterion.

\bigskip

\noindent\textbf{Acknowledgments.} This research was supported by the ANR project \href{https://www.ceremade.dauphine.fr/dokuwiki/anr-tagada:start}{TAGADA}, grant number ANR-25-CE40-5672. A.G.~received support from the University Research School EUR-MINT
(State support managed by the National Research Agency for Future Investments
program bearing the reference ANR-18-EURE-0023). B.M. acknowledges partial financial support from the NanoX grant (ANR-17-EURE-0009), in the framework of the "Programme des Investissements d’Avenir", for his research visit to CNRS, Toulouse, France.

\medskip

\noindent\textbf{AI use statement.} The main idea and results of this work were developed by the authors. Large language models suggested the $L^1$-weighted minorants and some ideas in the corresponding proofs, in particular the reference \cite{sakai2017sharp} in Appendix~\ref{sec:proof-1-norm}. Coding models were used to assist with the numerical simulations. The authors take full responsibility for the content of this publication.

\bibliographystyle{alpha}
\bibliography{references}

\appendix

\section{The $L^1_\beta$-weighted hierarchy}
\label{sec:l1beta}
Recall that $g_{m,\beta}$ is defined by \cref{eq:l1beta}, namely
\[
g_{m,\beta}\in\operatorname*{arg\,min}_{h\in\mathcal H_m}
E_\beta^{(1)}(h),
\qquad
E_\beta^{(1)}(h):=\int_0^1x^{-\beta}\bigl(\sqrt{x}-h(x)\bigr)\,\dd x.
\]

The feasible set of polynomials from \cref{eq:Hm} satisfy $h(x)\leq\sqrt{x}$ for all $0\leq x\leq1.$
By the Markov--Luk\'acs theorem \cite{blekherman2012semidefinite} or the one-dimensional Positivstellensatz, the feasible polynomials have a representation as a sum of squares
$$
t-h(t^2)=Q(t)+t(1-t)S(t),
$$
for polynomials $Q$ and $S$ which are sums of squares of degrees $2m$, resp~$2m-2$. The restriction $\beta<3/2$, is forced by
integrability of the objective: since $\sqrt{x}-h(x)\sim\sqrt{x}$ as $x\downarrow0$, the
weighted integrand behaves like $x^{1/2-\beta}$, which is integrable on
$(0,1]$ exactly when $\beta<3/2$.

\begin{proposition}
\label{prop:l1beta-sdp}
Fix $k\ge1$ and $0\le\beta<3/2$. On the feasible set (i.e.\ $h(x)=\sum_{j=1}^k a_j x^j$ with
$h(1)=1$ and $t-h(t^2)=Q(t)+t(1-t)S(t)$ for sums of squares $Q,S$),
$E_\beta^{(1)}$ is \emph{linear} in the coefficient vector
$a=(a_1,\dots,a_k)$, with the explicit closed form
\begin{equation}
  E_\beta^{(1)}(h) \;=\; \frac{2}{3-2\beta}\;-\;\sum_{j=1}^{k}
  \frac{1}{j+1-\beta}\,a_j .
  \label{eq:l1energy-closed}
\end{equation}
Hence the problem
\[
  \min_h\ \Bigl\{\, E_\beta^{(1)}(h)\ :\ h(x)=\textstyle\sum_{j=1}^k a_j x^j,\
  \textstyle\sum_j a_j=1,\ t-h(t^2)=Q(t)+t(1-t)S(t),\ Q,S \text{SOS} \Bigr\}
\]
is equivalent to the semidefinite program
\begin{equation}
  \max_{a,\,Q,\,S}\ \sum_{j=1}^{k}\frac{1}{j+1-\beta}\,a_j
  \quad\text{s.t.}\quad
  \textstyle\sum_j a_j = 1,\quad
  t-\textstyle\sum_j a_j t^{2j}=Q(t)+t(1-t)S(t),
  \label{eq:l1beta-sdp}
\end{equation}
with $Q,S$ sums of squares.
\end{proposition}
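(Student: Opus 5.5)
The plan is to compute the objective term by term and then read off the SDP from the feasibility constraints. Split
\[
E_\beta^{(1)}(h)=\int_0^1 x^{1/2-\beta}\,\dd x-\sum_{j=1}^k a_j\int_0^1 x^{j-\beta}\,\dd x .
\]
Both integrals converge under the hypothesis $0\le\beta<3/2$. The first is $\int_0^1x^{1/2-\beta}\dd x=1/(3/2-\beta)=2/(3-2\beta)$, since $1/2-\beta>-1$. For $j\ge1$ we have $j-\beta>-1/2>-1$, so $\int_0^1x^{j-\beta}\dd x=1/(j+1-\beta)$.

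Splitting the integral of $x^{-\beta}(\sqrt x-h(x))$ into these pieces is legitimate because every piece is separately absolutely integrable. This yields \eqref{eq:l1energy-closed}, and that formula is visibly affine (constant plus linear) in the coefficient vector $a$. This is the only place where the restriction $\beta<3/2$ enters.

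Next I would identify the feasible set. The condition $h\in\mathcal H_m$ means $h(1)=\sum_j a_j=1$ together with $h(x)\le\sqrt x$ on $[0,1]$. Substituting $x=t^2$ with $t\in[0,1]$, the second condition is equivalent to nonnegativity of the univariate polynomial $t-h(t^2)=t-\sum_j a_jt^{2j}$ on $[0,1]$. This polynomial has degree $2k$.

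By the Markov--Luk\'acs theorem already invoked above, a polynomial of even degree $2k$ is nonnegative on $[0,1]$ if and only if it can be written as $Q(t)+t(1-t)S(t)$. Here $Q$ and $S$ are sums of squares of degrees at most $2k$ and $2k-2$. The "if" direction is immediate, since both terms are nonnegative on $[0,1]$. So the feasible set described in the statement is exactly $\mathcal H_k$, and no relaxation occurs.

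Finally I would translate to an SDP. Write $Q(t)=v_k(t)^{\top}A\,v_k(t)$ and $S(t)=v_{k-1}(t)^{\top}B\,v_{k-1}(t)$, where $v_j(t)=(1,t,\dots,t^j)$ and $A,B\succeq0$. Matching coefficients of $t^0,\dots,t^{2k}$ in $t-\sum_ja_jt^{2j}=Q+t(1-t)S$ gives linear equations in $(a,A,B)$, and $\sum_ja_j=1$ is one more linear equation.

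Since \eqref{eq:l1energy-closed} is a constant minus the linear functional $\sum_j a_j/(j+1-\beta)$, minimizing $E^{(1)}_\beta$ is the same as maximizing that functional. This gives \eqref{eq:l1beta-sdp}.

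The main point requiring care is the degree bookkeeping in the Markov--Luk\'acs representation. The top degree $2k$ is even, so the form $Q+t(1-t)S$, with $\deg Q\le 2k$ and $\deg S\le 2k-2$, is the correct one. One also needs the existence of a maximizer, so that the $\arg\min$ is attained. For this I would note that the feasible set is closed, and that it is bounded: the constraints $h\le\sqrt x$ and $h(1)=1$ bound $h$ from above, and the objective is bounded below by $0$. In any case, equivalence of the two optimization problems only requires equality of the feasible sets and of the objectives, which the previous steps establish.
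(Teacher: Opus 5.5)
Your proposal is correct and is essentially the paper's argument. The paper substitutes $x=t^2$ and integrates $2t^{1-2\beta}F(t)$ term by term, whereas you integrate $x^{1/2-\beta}$ and $x^{j-\beta}$ directly. Your Markov--Luk\'acs step usefully makes explicit that the SOS-feasible set equals $\mathcal H_k$; the paper's proof only states the easy direction.

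One side remark is false, though the statement does not need it: the feasible set is not bounded. For instance, $h(x)=x-Mx(1-x)$ satisfies $h(1)=1$ and $h\leq x\leq\sqrt x$ on $[0,1]$ for every $M\geq0$. Existence of a minimizer should instead come from coercivity: on the feasible set, $\int_0^1 x^{-\beta}|h(x)|\,dx\leq E_\beta^{(1)}(h)+\frac{2}{3-2\beta}$, and the left-hand side is a norm on the coefficient vectors $a$.
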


\begin{proof}
Feasibility gives $F(t)=t-h(t^2)\ge0$ on $[0,1]$, hence $\sqrt{x}-h(x)\ge0$ on $[0,1]$.
For the closed form, substitute $x=t^2$: with $F(t)=t-h(t^2)=t-\sum_j a_j
t^{2j}$,
\[
  E_\beta^{(1)}(h) = \int_0^1 x^{-\beta}\bigl(\sqrt x - h(x)\bigr)\,dx
  = \int_0^1 2t^{1-2\beta} F(t)\,dt
  = 2\int_0^1 t^{2-2\beta}\,dt
    - 2\sum_{j=1}^k a_j \int_0^1 t^{2j+1-2\beta}\,dt .
\]
Evaluating the two elementary integrals,
$\int_0^1 t^{2-2\beta}dt = 1/(3-2\beta)$ and
$\int_0^1 t^{2j+1-2\beta}dt = 1/(2j+2-2\beta)$, gives
\[
  E_\beta^{(1)}(h) = \frac{2}{3-2\beta} - \sum_{j=1}^k
  \frac{2}{2j+2-2\beta}\,a_j
  = \frac{2}{3-2\beta} - \sum_{j=1}^k \frac{a_j}{j+1-\beta},
\]
which is Eq.~\eqref{eq:l1energy-closed}. Since $h(x)=\sum_j a_jx^j$ enters
$E_\beta^{(1)}$ linearly and the constant term $2/(3-2\beta)$ does not
depend on $a$, minimizing $E_\beta^{(1)}$ over the feasible set is
equivalent to maximizing $\sum_j a_j/(j+1-\beta)$, which is
Eq.~\eqref{eq:l1beta-sdp}.
\end{proof}

The first two members of the two families coincide, $h_{1,\beta}=f_1$ and
$h_{2,\beta}=f_2$ for every $\beta$, where
$$f_1(x) = x \quad f_2(x) = -\frac{1}{2}x^2 +\frac{3}{2}x$$

are also the minorants obtained from the binomial/Taylor expansion. From $k=3$, the $L^1_\beta$ minorants
are markedly better near the origin, in contrast to the binomial ones.  For $\beta=1$ the first few are listed below.
\begin{gather}
\label{eq:l1beta-explicit}
h_{3,1}(x)=\Bigl(\tfrac14+\tfrac{3\sqrt3}{2}\Bigr)x
-\Bigl(\tfrac32+\tfrac{3\sqrt3}{2}\Bigr)x^2+\tfrac94x^3,
\qquad
h_{4,1}(x)=\tfrac{185}{54}x-\tfrac{137}{18}x^2+\tfrac{76}{9}x^3-\tfrac{88}{27}x^4,
\\
h_{5,1}(x)\approx 4.500062\,x-17.929221\,x^2+40.111576\,x^3
-41.286843\,x^4+15.604426\,x^5,\notag
\end{gather}

Note that $h_{3,1}$ touches $\sqrt x$ at $x=\tfrac13$, and $h_{4,1}$ touches it at
$x=\tfrac14$ and tangentially at $x=1$.  For $\beta=0$ one finds instead
$h_{3,0}(x)=(1+\sqrt2)x-(4-\sqrt2)x^2+(4-2\sqrt2)x^3$, with contact point
$x=\tfrac12$: decreasing $\beta$ pushes the contact points towards $x=1$ as shown in \cref{fig:minorants}.

\section{Proof of the 1 norm optimization theorem}\label{sec:proof-1-norm}

\begin{proof}[Proof of \cref{thm:r1-r2-exact}]
For $r \in \mathbb{R}^2_+ / \{0,0\}$, we need to solve the following problem:
$$\alpha(r_1,r_2) = \min_{n \in \mathbb{N}}\min_{s \in [0,1]^n} \left\{\sum^{n}_{i=1} s_i \mid \sum_{i} s^2_i = r_1, \sum_{i} s^4_i = r_2\right\}.$$
By substituting $s_i = \sqrt{t_i}$, we can reformulate the problem as
$$\alpha(r_1,r_2) = \min_{n \in \mathbb{N}}\min_{t \in [0,1]^n} \left\{\sum^{n}_{i=1} \sqrt{t_i} \mid \sum_{i} t_i = r_1, \sum_{i} t^2_i = r_2\right\}.$$
Let $\Delta$ denote the set of all discrete valued probability distributions, and $\|\cdot \|_p$ denote the $\ell^p$ norms. Then, by normalizing, we obtain the formulation of the problem in terms of $p$-norms
$${\alpha(r_1,r_2)} = \sqrt{r_1}\min_{p \in \Delta}\left\{\|p\|^{1/2}_{1/2} \mid \|p\|^2_2 =  r_2/r_1^2\right\}.$$

For $x \in (0,1)$, let us consider the following probability distribution:
$$w(x) := (\underbrace{x,x \ldots x}_{\lfloor 1/x \rfloor}, 1-\lfloor 1/x \rfloor x).$$

It follows from \cite[Theorem 1]{sakai2017sharp} that for all probability vectors such that $\|p\|^2_{2} = r_2/r_1^2$, the $\ell_{1/2}$ norm attains its minumum value at the vector $w(x)$ where $x$ is chosen such that $\|w(x)\|_2^2= \gamma$ for $\gamma := r_2/r_1^2$:
$$\|p\|_{1/2} \geq \|w(x)\|_{1/2} \qquad \forall p \in \Delta \, \text{ s.t. } \|p\|^2_{2} = r_2/r_1^2.$$

Moreover, the function $x \mapsto \|w(x)\|_{2}$ is strictly monotonic in $x$, and hence can be inverted to obtain $x$ \cite{sakai2017sharp}. Using the results in \cite{sakai2017sharp}, the inverse is exactly given by

$$x=\frac{m+\sqrt{\gamma\,m\,(1+m-\gamma^{-1})}}{m(1+m)},\qquad \text{ with } m=\lfloor 1/\gamma\rfloor.$$
Note that we have
$$\alpha(r_1,r_2) = \sqrt{r_1}\|w(x)\|^{1/2}_{1/2} = \sqrt{r_1}\left(\left\lfloor \frac{1}{x} \right\rfloor \sqrt x + \sqrt{1-\left\lfloor \frac{1}{x} \right\rfloor x}\right).$$

Finally, to show the formula for $\alpha(r)$ in the theorem, we need to understand these functions piecewise. Firstly, note that if $r_1^2/r_2 = \frac{1}{\gamma} = n \in \N$, $x = \frac{1}{n}$ and thus $w(x) = (1/n, \ldots, 1/n)$. We have in this case $\alpha(r_1,r_2) = \sqrt{r_1}\sqrt{n} = \frac{r^{3/2}_1}{r^{1/2}_2}$.

Conisder now $\gamma \in (\frac{1}{m+1}, \frac{1}{m})$. Then, we have $\lfloor 1/\gamma\rfloor = m$.  Also note that, due to the monotonicity property, $\lfloor\frac{1}{x}\rfloor = m$, while $\lceil r_1^2/r_2\rceil = m+1 = |\operatorname{supp}(w(x))|$.

Finally, setting as in the statement $n:=m+1$, we have
$$
\Delta=\sqrt{n r_2-r_1^2}
      =\sqrt{r_2}\sqrt{n-\gamma^{-1}}
      \qquad \text{ and } \qquad
x=\frac{m+\sqrt{m}\,\Delta/r_1}{m n},
$$
so that
$$
m x=\frac{m r_1+\sqrt{m}\,\Delta}{n r_1}
\qquad \text{ and } \qquad
1-m x=\frac{r_1-\sqrt{m}\,\Delta}{n r_1}.
$$
Hence
$$
\alpha(r_1,r_2)
=\sqrt{r_1}\bigl(m\sqrt{x}+\sqrt{1-m x}\bigr)
=\frac{\sqrt{m^2 r_1+m^{3/2}\Delta}
+\sqrt{r_1-\sqrt{m}\,\Delta}}{\sqrt n},
$$
which is the claimed result. For $r_1=r_2 = 0$, we can see easily that $\alpha(r)=0.$

\end{proof}

\end{document}